\numberwithin{equation}{section}
\let\orgdescriptionlabel\descriptionlabel
\renewcommand*{\descriptionlabel}[1]{%
    \let\orglabel\label
    \let\label\@gobble
    \phantomsection
    \edef\@currentlabel{#1}%
    \let\label\orglabel
    \orgdescriptionlabel{#1}%
}
\theoremstyle{plain}
\newtheorem{theorem}{Theorem}[section]
\newtheorem{lemma}[theorem]{Lemma}
\newtheorem{proposition}[theorem]{Proposition}
\newtheorem{corollary}[theorem]{Corollary}
\newtheorem{claim}[theorem]{Claim}
\theoremstyle{definition}
\newtheorem{definition}[theorem]{Definition}
\newtheorem{example}[theorem]{Example}
\theoremstyle{remark}
\newtheorem{remark}[theorem]{Remark}
\crefname{algocf}{alg.}{algs.}
\Crefname{algocf}{Algorithm}{Algorithms}
\newcommand{\eps}{\varepsilon}
\title{Regulation of Algorithmic Collusion, Refined:\\ Testing Pessimistic Calibrated Regret}
\author{Jason D. Hartline\thanks{hartline@northwestern.edu} \quad Chang Wang\thanks{wc@u.northwestern.edu} \quad Chenhao Zhang\thanks{chenhao.zhang.rea@u.northwestern.edu}}
\affil{Northwestern University}
\date{}
\newcommand{\priceset}{\mathcal{P}}
\newcommand{\Transc}{\mathcal{T}}
\newcommand{\Hist}{\mathcal{H}}
\newcommand{\Algo}{\mathcal{A}}
\newcommand{\Mech}{\mathcal{M}}
\newcommand{\TrueEr}{R^T}
\newcommand{\EstEr}{\widetilde{R}^T}
\newcommand{\WstEr}{\overline{R}^T}
\newcommand{\AggEr}{\breve{R}^T}
\newcommand{\EExp}[2]{
\operatorname*{\mathbb{E}}_{#1}\left[#2 \right]
}
\newcommand{\PPr}[2]{\Pr_{#1}\left[#2\right]}
\newcommand{\pmax}{\overline{p}}
\newcommand{\pimin}{\underline{\pi}}
\newcommand{\ASGN}{\leftarrow}
\newcommand{\estplcost}{\tilde{c}}
\newcommand{\plcost}{c_{\ast}}
\newcommand{\ubar}[1]{\underaccent{\bar}{#1}}
\newcommand{\costrange}{[\ubar{c},\bar{c}]}
\newcommand{\distseq}{\bm{\pi}^T}
\newcommand{\allocseq}[1]{\bm{#1}^T}
\newcommand{\priceseq}{\bm{p}^T}
\newcommand{\argmin}{\operatorname*{argmin}}
\begin{document}

\maketitle
\begin{abstract}
We study the regulation of algorithmic (non-)collusion amongst sellers in dynamic imperfect price competition by auditing their data as introduced by \citet{hartline2024regulation}.

We develop an auditing method that tests whether a seller's pessimistic calibrated regret is low. 
The pessimistic calibrated regret is the highest calibrated regret of outcomes compatible with the observed data. This method relaxes the previous requirement that a pricing algorithm must use fully-supported price distributions to be auditable.
This method is at least as permissive as any auditing method that has a high probability of failing algorithmic outcomes with non-vanishing calibrated regret. 
Additionally, we strengthen the justification for using vanishing calibrated regret, versus vanishing best-in-hindsight regret, as the non-collusion definition, by showing that even without any side information, the pricing algorithms that only satisfy weaker vanishing best-in-hindsight regret allow an opponent to manipulate them into posting supra-competitive prices. This manipulation cannot be excluded with a non-collusion definition of vanishing best-in-hindsight regret.

We motivate and interpret the approach of auditing algorithms from their data as suggesting a per se rule. However, we demonstrate that it is possible for algorithms to pass the audit 
by pretending to have higher costs than they actually do.
For such scenarios, the rule of reason can be applied to bound the range of costs to those that are reasonable for the domain. 
\end{abstract}

\section{Introduction}
The prevailing practice of making pricing decisions with algorithms by sellers in competitive markets has drawn scrutiny from lawmakers and regulators for concerns about price collusion. For example, the US Department of Justice recently filed a lawsuit against RealPage \citep{dojrealpage2024}, a company providing algorithmic pricing software for landlords in the apartment rental market, for allegedly facilitating price collusion. As Attorney General Merrick Garland stated, ``We allege that RealPage's pricing algorithm enables landlords to share confidential, competitively sensitive information and align their rents. Using software as the sharing mechanism does not immunize this scheme from Sherman Act liability...'' the legal ground of the regulatory action is based on the argument that the algorithm provided by RealPage enables a covert communication channel for market participants to coordinate and maintain higher than competitive prices (a.k.a. supra-competitive prices). There is one important reason behind the strong emphasis of communication in this argument: In many jurisdictions such as the US, overt communication is a prerequisite to establishing legal liability of collusion. Collusion with the absence of communication is actually not illegal according to the current legal theory \citep{harrington2018developing, harrington2022effect}.

On the other hand, researchers recently \citep{calvano2020artificial,asker2023impact,banchio2023adaptive} have also discovered a more disturbing fact that popular reinforcement learning algorithms can learn to collude without explicit communication just by engaging in repeated market interactions. These forms of implicit collusion pose new challenges for the regulation of algorithmic collusion.

There are two doctrines of antitrust analysis in the US legal framework: per se rule and rule of reason \citep{hovenkamp2018rule}. The per se rule deems certain business practices, like price-fixing or market division, illegal without requiring further investigation into their actual competitive effects. For example, 
in response to the algorithmic collusion concern around RealPage, San Francisco has recently enacted the per se regulation that precludes any use of algorithms in rental pricing \citep{sfban2024}.  The rule of reason, on the other hand, involves a more thorough evaluation of business practices, assessing whether they unreasonably restrain trade by examining their purpose, effects, and the overall context of the market. See \Cref{sec:regulation_classification} for a more comprehensive categorization of currently proposed regulations of algorithmic collusion.

This paper considers the regulation of algorithmic collusion by auditing pricing data as proposed by \citet{hartline2024regulation}.  
In their proposal, they argue that it is feasible to require all sellers to deploy pricing algorithms to pass the audit. This means that it would not harm the market to deem those who do not pass as illegal. The regulation to require algorithms to pass such an audit falls within the per se doctrine.

They then provide details of the per se rule (see the next paragraph) and propose a non-collusion audit implemented by a statistical test on the data collected during the deployment of a seller's algorithm.

The non-collusion audit they propose is based on ideas and observations from the theory of online learning. At its core, they propose using \emph{calibrated regret} as a quantitative measure for how far from non-collusion a seller running pricing algorithms is. Informally, given a sequence of market conditions and pricing decisions made by a seller, calibrated regret measures how much she can be better off by further utilizing the information revealed from her current pricing decisions. Low calibrated regret indicates that the seller is already close to best responding to the market environment she faces, which implies that she is not colluding since best responding corresponds to competitive behavior. To empirically audit non-collusion of a seller from the collected data, they develop a statistically efficient test of low calibrated regret of the seller. Being unable to pass the test is a violation of their suggested per se rule. 

\subsection{Our Contributions}

We study the framework by \citet{hartline2024regulation} in light of the standard guideline of binary classification: minimizing the number of false positives and false negatives. This classification is also suggested by \citet[VI. A]{harrington2018developing} when designing a per se rule specifying a prohibited set of algorithms.

\begin{quote}
\begin{description}
    \item[false positives\label{enum:more_good}] A competitive algorithm that is unable to pass the audit is a false positive. We want to allow a range of competitive algorithms as broad as possible to pass the audit.
    \item[false negatives\label{enum:less_bad}] An algorithm that enables supra-competitive prices and passes the audit is a false negative. We want as many algorithms that enable supra-competitive prices to fail the audit as possible.
\end{description}
\end{quote}
In the light of the above guideline, our main contribution is three-fold.
First, we reduce false positives by providing an improved auditing method that drops the prerequisite that an algorithm must use fully-supported distribution to pass the audit, thus allowing a broader range of competitive algorithms to pass (\Cref{subsubsec:lessfp}). Second, \citet{hartline2024regulation} show the need of vanishing calibrated regret as the (non-)collusion definition by providing an example under which testing best-in-hindsight regret passes supra-competitive algorithms when one seller can have side information related to the demand. The construction relies strongly on the usefulness of the side information. We show the stronger insufficiency of testing vanishing best-in-hindsight regret to exclude supra-competitive prices even when neither seller has side information.
Thus, using vanishing calibrated regret as the non-collusion notion is important to reduce false negatives even without side information (\Cref{subsubsec:lessfn}).

Third, we argue that rule of reason can also be useful by demonstrating that there exist supra-competitive algorithms that could pass the audit with a high inferred cost even if configured with the true cost (\Cref{subsubsec:unknown_cost}).

\subsubsection{Fewer False Positives}\label{subsubsec:lessfp}
First, we improve in the direction of reducing \ref{enum:more_good}. We note that there is a significant set of good algorithms that, without modification, is not able to pass the audit of \citet{hartline2024regulation}.\footnote{Note: \citet{hartline2024regulation} give a procedure for modifying these algorithms so that they can pass the audit.}

Specifically, \citet{hartline2024regulation} assume that the seller's algorithm outputs a distribution of prices in each round. The actual price posted in each round is sampled from the output price distribution. The auditing method computes an estimated regret from a transcript of the pricing algorithm consisting of, in each round: 1) the actual price posted, 2) the observed demand for the posted price, and 3) the distribution of prices from which the actual posted price is drawn. There is one key requirement: For any pricing algorithm to be auditable, the price distribution in each round must have full support. In other words, in each round, every price level must be posted with at least some probability.
We view this full-support requirement as restrictive because:
\begin{enumerate}
    \item The seller might not want to post some prices. First, the seller could possess some side information (that the regulator does not know) that makes her prefer to avoid certain prices. Second, the seller could deliberately avoid some prices for non-technical reasons (e.g., posting 2.99 instead of 3.00, or avoiding the number 13, etc.). 
    \item In practice, the exact price distributions of the seller are often unavailable. Asking the seller to submit full price distributions can also be problematic due to privacy issues. To apply the auditing method proposed by \citet{hartline2024regulation}, a plausible alternative is to aggregate the prices in a given time window, and use the empirical distribution as the price distribution in that window. If the size of the window is appropriately chosen such that the change in the price distributions is small in that window (for example, when a learning algorithm with a small learning rate is used for pricing), then the empirical distribution can be a good approximation to the true price distribution. Of course, this empirical distribution need not be fully-supported.
    
\end{enumerate}
\begin{remark}
    \Cref{lem:aggregate_possible} and \Cref{cor:aggregate_sample_complexity} present a formal statement of why auditing by aggregating prices to approximate distributions works with small learning rates. An interesting open question is to design a test for small learning rates from data so that the need of price distributions can be reliably removed for sellers with such learning rates.
\end{remark}

In this work, we propose a refined auditing method that enables the auditing of algorithms that do not use fully-supported price distributions. The refined auditing method continues to use an unbiased estimator for the counterfactual allocations, but it also maintains a \emph{pessimistic estimation} for the prices that are not in the support of the price distributions. The new method relaxes the previous requirement that a pricing algorithm must use fully-supported price distributions to be auditable and enables the seller to pass the test by demonstrating that her \emph{pessimistic calibrated regret} is low. The pessimistic regret is the highest regret of counterfactual outcomes compatible with the observed data.

In addition, we also clarify the interpretation of the non-collusion definition. Informally, the calibrated regret estimated both in \citet{hartline2024regulation} and our work is only the regret of a particular run of the seller's algorithm. The estimated regret is not of the algorithm itself. Therefore, the non-collusion definition is consequentialist and not deontological.\footnote{Consequentialism is the perspective of judging an act only by its outcome, while deontology is the perspective of judging an act by its inherent property.} We argue that this philosophy is appropriate not only from a legal perspective but also from a technical perspective. See \Cref{rmk:full_randomness}.

\subsubsection{Fewer False Negatives}\label{subsubsec:lessfn}
Second, we consider \ref{enum:less_bad}. The calibrated regret that \citet{hartline2024regulation} propose as the measure of distance to non-collusion is a strong notion of regret. A common weaker notion is best-in-hindsight (a.k.a.\,external) regret. Calibrated regret compares the performance of the chosen actions to a counterfactual scenario where the learner may switch among the actions using an arbitrary mapping. Best-in-hindsight regret, on the other hand, compares the performance of the chosen actions to the performance of the best \emph{fixed} action in hindsight. They argue that the stronger notion of vanishing calibrated regret is indispensable to exclude certain supra-competitive prices by giving a simple example where one seller has side information about buyers' valuations. The seller can utilize her side information to collude with the other seller and have non-positive best-in-hindsight regret. However, she would still have positive non-vanishing calibrated regret in this case.
The previous construction relies strongly on the usefulness of side information. Intuitively, regulation using calibrated regret is necessary in this case because the seller must best respond to such useful information. In this paper, we give a stronger argument for the need of calibrated regret for reducing false negatives. We show that even \emph{without any side information}, a large family of pricing algorithms that minimize the weaker notion of best-in-hindsight regret can be \emph{manipulated into playing supra-competitive prices}, and this is undetectable by a regulatory definition of non-collusion that uses best-in-hindsight regret. In other words, best-in-hindsight-regret algorithms can be susceptible to manipulation, thus, to prevent collusion and reduce false negatives, a regulator may want to discourage the use of algorithms that only satisfy this weaker property regardless of what side information can exist.

Before we describe the argument, we must be clear on what kind of manipulation we are concerned about. In the context of collusion auditing, we say a learning algorithm is \emph{manipulated into playing supra-competitive prices}, if:
a) the manipulator uses a strategy that leads the learning algorithm into an outcome where both of them have prices and payoffs exceeding correlated equilibria\footnote{Recall that a \emph{correlated equilibrium} is a joint distribution over players' actions, such that for every player $i$ and every action $a_i$, playing $a_i$ is a best response for $i$ conditioned on seeing $a_i$, given that everyone else will play according to the distribution.}, \emph{and} b) the manipulation cannot be excluded with the regulatory definition of non-collusion (e.g., no best-in-hindsight regret).

Note that we are only concerned about manipulative behavior that the regulatory non-collusion definition cannot exclude. The presence of such unexcluded manipulative behavior implies that certain algorithms can enable supra-competitive prices and evade the regulation, resulting in false negatives. On the other hand, manipulative behavior that have already been excluded by the definition do not result in false negatives.

Our technique is inspired by works such as \citet{braverman2018selling} and \citet{deng2019strategizing}
showing the susceptibility to manipulation of best-in-hindsight regret minimization algorithms. The manipulator in the example of \cite{deng2019strategizing} has non-vanishing best-in-highlight regret, but in our construction, the manipulator has vanishing best-in-highlight regret. Our construction is a strengthening with respect to their construction in this sense. 

More specifically, we construct an instance of the imperfect price competition with two sellers. In our example, the process that generates buyers' valuations is stationary and neither seller has any side information about the valuations of the buyer (or any other side information). One seller using a mean-based learning strategy for minimizing best-in-hindsight regret can still be manipulated into maintaining prices above equilibrium level for a significant number of rounds. Our construction has two important features. First, \emph{both} sellers play a constant fraction of rounds with supra-competitive prices in expectation, and \emph{both} extract revenue higher than correlated equilibrium, meaning the process is non-negligible and benefits both parties. Second and more importantly, \emph{both sellers have vanishing best-in-hindsight regret}, meaning such supra-competitive outcome would not be prevented by a per se regulation requiring vanishing best-in-hindsight regret.

\subsubsection{Unknown Costs: Per Se Rule v.s. Rule of Reason}\label{subsubsec:unknown_cost}
Third and finally, we consider the effects of not knowing the seller's cost in the auditing process. In the framework by \citet{hartline2024regulation}, the auditor knows the range of the seller's cost but not the exact cost. By their definition, as long as there exists some cost $\plcost$ within the range for which the seller's regret is low, the seller is deemed competitive. This leads to the following question: Are there natural algorithms that when configured with the \emph{true} cost $c$, find outcomes that are considered competitive by auditing methods for a higher cost $c'$, while actually being supra-competitive? We provide an affirmative answer to the question by examining sellers using the Q-learning algorithm \citep{banchio2022artificial} that converge to supra-competitive outcomes in simulation environments. The auditing on the seller correctly shows high estimated calibrated regret when a small and precise range of seller's cost is given. However, when the cost range goes far beyond the seller's actual cost, the estimated regret becomes low.

A legal implication of this result is that it may not be sufficient to solely apply the per se rule of auditing to regulate algorithmic collusion, particularly when the regulator only has limited information about the seller's cost upfront. In such settings, the regulator should further investigates the reasonable cost of the seller in the market context, which is an application of rule of reason.
\subsection{A Categorization of Proposed Regulations of Algorithmic Collusion}
\label{sec:regulation_classification}

As we have mentioned previously, we interpret proposed regulations of algorithmic collusion and motivate ours under the legal framework of antitrust analysis: per se rule v.s.\;rule of reason. In this section, we provide a more comprehensive analysis of those proposed methods and categorize them into the two doctrines. The conclusion of this subsection is summarized in \Cref{tab:comp_perse_reason}. More comparisons of technical aspects and other related work are included in \Cref{subsec:related_work}.

\begin{table}[ht]
    \small
    \centering
    \begin{tabular}{|c|c|}
    \hline
         \textbf{Examples of Regulation} & \textbf{Classification} \\
         \hline 
         outright ban on algorithms (e.g., \citet{sfban2024}) & per se \\
        \hline
         setting up formal non-collusive specifications and requiring proof of verification  & per se \\
         \hline
          inspecting source code to judge the degree of (non-)competitiveness within context &  rule of reason \\
        \hline
         requiring passing dynamic unit tests with specific protocols & per se \\
         \hline
         use context-specific input to dynamically test and analyze pricing behavior & rule of reason \\
         \hline
        requiring monitor wrapper \;\citep{chassangRegulatingCollusion2023} & per se \\
         \hline
        requiring passing data audit (\citep{hartline2024regulation} and this paper)
         & per se\\
         \hline
         analyzing cost under context and audit (\citep{hartline2024regulation} and this paper) & rule of reason \\
         \hline
    \end{tabular}
    \caption{Comparison of different proposed regulations of algorithmic collusion}
    \label{tab:comp_perse_reason}
\end{table}

The San Francisco ban on automated rent-setting \citep{sfban2024} is a per se rule that considers all algorithmic pricing on the rental market illegal. Clearly, such regulation stifles innovation by excluding rent-setting algorithms that promote competition and improve market efficiency.

\citet{harrington2018developing} discuss the approaches of \emph{static checking} an algorithm's source code\footnote{Here ``source code'' refers to any formal representation of an algorithm, which also includes, for instance, the architecture and weights of neural network models.} without running it and \emph{dynamic testing} the algorithm\footnote{Here ``algorithm'' refers to the executable form of the algorithm that a seller is going to deploy.} with synthetical inputs to learn its properties related to price collusion and conclude whether it should be prohibited. However, they suggest that to what extent the prohibition comes from a per se rule or rule of reason depends on details: Prohibitions from clear collusion-identifying properties checkable with these approaches can be classified as per se rules. Otherwise, classification into rule of reason is more appropriate.

We list a few proposals using static checking or dynamic testing and classify them. 

Static checking approaches range from having human experts manually review the source code to automatically generating machine-checkable proofs or counterexamples. 
One possible per se regulation of algorithmic collusion with static checking is to use the technique of static  verification with formal logic: 
The regulator sets up formal \emph{specifications} that exclude collusion-identifying properties and requires the sellers to submit the proofs that their algorithms are verified against the specifications. Algorithms without a proof of verification are prohibited. 
On the other hand, if the regulator inspects the source code of pricing algorithms to judge its (non)-competitiveness within a specific context, such an approach is in line with the rule of reason.

Dynamic testing an algorithm involves running the algorithms with certain inputs. One possible per se rule the regulator might use is to prepare a set of test cases and announce a uniform protocol for conducting tests of all algorithms of sellers. The protocol will specify how the test is conducted (e.g., by having the seller send a representative taking the algorithm to a brick-and-mortar site of the regulator agency or uploading the algorithm to a cloud service) and pass-fail criteria.  The regulator prohibits algorithms that fail the test following the protocol. Alternatively, the regulator can analyze the pricing behavior of each seller's algorithm by feeding it with context-specific inputs. Such regulation would be aligned with the rule of reason.

\citet{chassangRegulatingCollusion2023} propose another regulation based on the relation of regret and collusion observed in \citet{chassang2022robust}. The regulation approach they propose requires the seller to attach a monitor algorithm, known as a ``pricing wrapper,'' to the pricing algorithm to ensure that the composition satisfies certain no-regret properties. This approach can also be interpreted as a per se rule because it prohibits any algorithm that cannot be obtained by the composition of an existing pricing algorithm and the pricing wrapper.

Finally, our proposal of auditing from data provides a clear prohibition determination without requiring access to the algorithms themselves while making minimal assumptions on the input space of the algorithms. While it is a per se rule, it can be augmented with the rule of reason analysis to narrow the range of what costs are reasonable.

\subsection{Related Work}\label{subsec:related_work}
\paragraph{Algorithmic Collusion}
 Most papers studying algorithmic collusion from technical perspectives consider the Q-learning algorithm \citep{watkins1989learning}, a common reinforcement learning algorithm. \citet{calvano2020artificial, klein2021autonomous, asker2022artificial,asker2023impact, banchio2022artificial,banchio2023adaptive} study Q-learning under various settings with simulations and theoretical analysis. They consistently report that Q-learning can find and maintain, without explicit communication, supra-competitive prices (or infra-competitive bids) when in competition with each other. A few other papers have also explored algorithmic collusion beyond Q-learning, such as UCB \citep{hansen2021frontiers} and large language models \citep{fish2024algorithmic}. Our simulation in \Cref{sec:unknown_costs} follows the setup of \citet{banchio2022artificial}. These empirical and theoretical findings and concerns are among the main motivations of our work.

\paragraph{Legal Landscape of Anti-collusion Analysis}
US statutes regulating price collusion were enacted more than a hundred years ago, long before the era of digital markets and algorithmic pricing. They include the Sherman Act (1890), the Federal Trade Commission Act (1914), and the Clayton Act (1914). The recent court cases such as \cite{InreTextMessagingAntitrustLitigation2015, 1993brooke} interpreting these statutes for price collusion have affirmed the jurisprudence of requiring express agreement as the prerequisite of establishing liability. 

The Sherman Act literally prohibits acts that are ``in restraint of trade and commerce'' without clarifying how it should be applied \citep{sawyer2019us}. In early cases such as \citet{1899addyston} and  \citet{1918chicago}, the language of the statute is interpreted as applicable to any restraint of trade, which constitutes the per se mode of analysis.
The rule of reason doctrine in antitrust first appeared in the US Supreme Court ruling of \citet{1911standard}. Led by the then Chief Justice Edward White, the court decided that the Sherman Act should be ``construed in the light of reason,'' hence only applies to \emph{unreasonable} restraints of trade. Over the years, the court has narrowed the domain of per se rules in traditional antitrust cases while incorporating more analysis informed by economic principles to the application of rule of reason. \citet{sawyer2019us} and \citet{gavil2011moving} discuss the evolution of the two doctrines. \citet{hovenkamp2018rule} discuss the scope where rule of reason analysis should be applied in the non-algorithmic antitrust settings. In light of the new challenges posed by algorithmic collusion, \citet{harrington2018developing} propose adding per se prohibition for certain algorithms to competition laws. We motivate and interpret our work within the legal framework proposed by \citet{harrington2018developing}.

\paragraph{Regulation of Algorithmic Collusion}

In addition to the auditing approach proposed in \citet{hartline2024regulation}, \citet{harrington2018developing} and \citet{chassangRegulatingCollusion2023} discuss alternative proposals of regulating algorithmic collusion. See \Cref{sec:regulation_classification} for a legal perspective for on this works.

The regulation of requiring non-regret monitor wrapper proposed by \citet{chassangRegulatingCollusion2023} is based on the observations in \citet{chassang2022robust}, which consider the problem of screening non-algorithmic collusion in procurement auctions. Similar to the auditing approach proposed in \citet{hartline2024regulation}, \citet{chassang2022robust} estimate the demand functions from data and use the demand functions to compute regret-like quantities. However, there are several differences. The framework we consider makes minimal assumptions on the demand functions that a seller faces, namely, the demand is between $[0,1]$ and monotonically non-increasing in prices. We estimate the demand functions that a single seller faces utilizing the randomization of seller's algorithm without knowledge of other sellers' strategy. In comparison, \citet{chassang2022robust} consider the estimation problem when the form of the demand functions is known from the auction format and the bids of all bidders are available in the data. To deal with buyer distributions with imperfect competition, their approach would need assumptions on the demand while ours does not. Instead of assuming fixed production cost across rounds, \citet{chassang2022robust} also consider the case when the cost of a seller can be different for each round. Therefore, when computing the regret, the deviation of a seller's strategy has the form of changing the prices across each round proportionally. But in our framework, the deviation can be arbitrary.

\paragraph{Exploitation of a No-regret Learner}\citet{braverman2018selling}  consider the problem of repeated selling of an item to an agent using no-regret learning algorithms. They propose the notion of mean-based algorithms and show that mean-based algorithms guaranteeing no best-in-hindsight regret can be manipulated by a seller to extract full surplus. \citet{deng2019strategizing} consider the problem of manipulating no-best-in-hindsight-regret learner in general 2-player bimatrix games to get beyond the Stackelberg payoff\footnote{Stackelberg payoff is the payoff in a \emph{Stackelberg equilibrium}. Recall that in a Stackelberg equilibrium, a player (called leader) \emph{commits} to a strategy such that when the other player (called follower) best responds to the strategy, the payoff of the leader is maximized.}. Our example in \Cref{sec:external_manipulable} is inspired by their work but the construction is tailored for a dynamic price competition game.

Concurrently and independently, \citet{arunachaleswaran2024algorithmic} provide an example (referred to as ``their result'' thereafter) where two sellers could be supra-competitive compared to correlated equilibria when \emph{one} seller is using a vanishing calibrated regret algorithm. Their result is different from our result in \Cref{sec:external_manipulable} in the following two ways.

First, we take a different perspective on collusion and the results convey different messages. In \cite{arunachaleswaran2024algorithmic}, their concern is about the \emph{cause} of algorithmic collusion, and they claim that their result shows algorithmic collusion can arise even if there is no ``threat'' between the sellers. On the other hand, our paper's question is about the \emph{exclusion} of algorithmic collusion. We show that using no-best-in-hindsight regret as the non-collusion definition is not appropriate as it does not exclude some collusion even when side information is guaranteed not to exist. Although as is shown in \citet{arunachaleswaran2024algorithmic}, a seller running vanishing-calibrated-regret algorithm can still be manipulated to supra-competitive prices, this scenario will be prevented by our audit as the manipulating seller will not have non-vanishing calibrated regret and would be caught by the regulator.

Second, even in a pure technical sense, our results are different from theirs.
In their construction, only one seller is guaranteed to have no regret (calibrated regret or best-in-hindsight regret), while in our example, both sellers have vanishing best-in-hindsight regret. In fact, it is impossible to guarantee that both sellers have no calibrated regret while they post supra-competitive prices. Our results reaffirm the importance for \emph{both} sellers to have vanishing calibrated regret for competitive prices in a different way from them.

\subsection{Limitations}
Finally, we discuss a few caveats of our framework in this subsection.

First, if the seller uses pricing algorithms that do not optimize, she could also have high regret. Therefore, she would not pass our audit despite not being competitive. However, we argue that this is not a significant issue from the perspective of a per se rule. 
Sellers employ algorithmic pricing for the purpose of maximizing their profit and there are a variety of optimizing algorithms readily available. Therefore, using an algorithm that does not optimize is unlikely to serve any reasonable business purpose for a profit-seeking seller.

Second, as is discussed in \Cref{subsubsec:unknown_cost}, when the cost information of the seller possessed by the regulator is limited, the range of plausible costs can be large. In this situation, there exist algorithms with correctly configured true costs that exhibit supra-competitive behavior while still being able to pass the audit due to the existence of a plausible cost with low regret.  Therefore, the imprecision of cost information might significantly impact the auditing result and presents challenges to regulating collusion from pricing data. %
A mitigation to this problem is applying rule of reason: A further investigation of the sellers and market contexts is needed to narrow down the cost range.\footnote{Note: Third-party algorithms could be required to log the costs that they are configured with.}

Third, the current theoretical analysis gives a large sample complexity bound on our auditing method. Therefore, it is more applicable for regulating algorithmic collusion in online market environments with high-frequency transactions and a large amount of data. An open question is to improve the sample complexity of our auditing method.

\section{Preliminaries}\label{sec:prelim}
We consider a setting where $n$ sellers repeatedly compete for selling a good in $T$ rounds. Seller $i$ has cost $c_i \in \costrange$. In each round $t$, seller $i$ posts a price $p_i^t \in \priceset$ where $\priceset$ is a $k$-element set of possible price levels.

Let $\pmax = \max\{p:p \in \priceset\}$ be the maximum possible price level. Given all the sellers' prices, the demand (a.k.a.\;allocation) for seller $i$ is $x_i^t:\priceset^n \to [0, 1]$. We assume that fixing the prices $\bm p_{-i}$ posted by the sellers other than $i$, the allocation $x_i^t(p_i, \bm p_{-i})$ is non-increasing in $p_i$. Seller $i$'s utility at round $t$ posting $p$ is $u_i^t(p)=(p-c_i)x_i^t(p, \bm p_{-i}^t)$. At the end of each round, the seller gets her allocation as the feedback.\footnote{She might also get other information, but we as auditors cannot directly observe other information.} This is known as \emph{bandit feedback} in the literature of online learning. Moreover, the demand can be arbitrary and even adversarial under our framework.

The problem seller $i$ faces is an online-learning problem. Seller $i$'s strategy in round $t$ can be represented as a price distribution $\pi_i^t \in \Delta(\priceset)$, where $\Delta(\priceset)$ is the set of distributions over $\priceset$. She posts prices $p_i^t$ according to the distribution $\pi_i^t$ and obtains the utility resulting from posting $p_i^t$. 

The seller's behavior in a sequence of rounds of competitions can be summarized as a \emph{transcript}. As is the only feedback the regulator can assume the seller gets at the end of each round, the transcript contains the allocation $x_i^t(p_i^t)$ corresponding to the price the seller posted, but not the full demand function $x_i^t(\cdot)$.
\begin{definition}
    Call $\Transc_i^t = \{x_i^s(p^s), p_i^s, \pi_i^s\}_{s=1}^t$ where $p_i^s \sim \pi_i^s$ a \emph{transcript} of length $t$ for seller $i$. The set of all the length-$t$ transcripts for seller $i$ is denoted as $\Hist^t$.
\end{definition}
As an auditor, given the transcript of the seller, we want to test whether the seller is exhibiting (non-)collusive behavior. 

We will consider the non-collusion definition from \citet{hartline2024regulation}, which is a unilateral property. Therefore, from now on we focus on a single seller's behavior, and we will drop the subscript $i$ whenever possible. We denote the sequences $\allocseq{x}:=\{x^t\}_{t=1}^T$, $\priceseq := \{p^t\}_{t=1}^T$, and $\distseq:=\{\pi^t\}_{t=1}^T$.

\citet{hartline2024regulation} propose that the seller is non-collusive if the transcript satisfies the vanishing calibrated regret property. We define calibrated regret and vanishing calibrated regret as follows.
\begin{definition}\label{def:swap_regret}
    Given the ground-truth $\allocseq{x}$ and seller's cost $c$, the \emph{calibrated regret} of the \emph{transcript} for a seller with cost $c$ is
    \[ 
        \TrueEr(\allocseq{x}, c) = \max_{\sigma: \priceset \to \priceset}\frac{1}{T}\sum_{t=1}^T\EExp{p \sim \pi^t}{u(\sigma(p^t),x^t)-u(p,x^t)}.
    \]
    The seller's calibrated regret is called \emph{vanishing} if $\lim_{T \to \infty} \TrueEr(\allocseq{x}, c)=0$.
\end{definition}
Unless noted otherwise, we use ``regret'' to refer to ``calibrated regret.''
Since the auditor does not know the true cost $c$, as long as there exists a \emph{plausible cost} $\plcost \in \costrange$ such that $\lim_{T \to \infty} \TrueEr(\allocseq{x}, \plcost) = 0$, the seller is considered \emph{plausibly non-collusive}. The auditing method that \citet{hartline2024regulation} provides is based on estimating the calibrated regret. However, for the auditing method to work properly and provide a meaningful guarantee, it imposes an \emph{auditability requirement} that for all $1 \leq t \leq T$, the price distribution $\pi^t$ must be fully-supported. Sellers using algorithms that have vanishing calibrated regret but do not satisfy this requirement are unable to pass the audit without modification.

We conclude this section with two remarks on the settings.

The first remark says that the $k$-element price set can be chosen by the seller without affecting our results. And the second remark clarifies the randomness under the definition of calibrated regret in our framework.
\begin{remark}\label{rmk:endo}
    The $k$-element price set can be \emph{endogenous} to our model, meaning it can be chosen by the seller instead of being prescribed in advance. For example, the regulator first announces a continuum of possible prices $[0, h]$ and parameters of the audit, and the seller decides to discretize the set into $k$ price levels according to her (learning) algorithm. Then the regulator can conduct an audit assuming all the prices in $[0, h]$ can be posted, and there will be an additional ``discretization loss'' in the regret estimation. See \Cref{rmk:step4prime} for details of handling this in our auditing method.
\end{remark}

\begin{remark}\label{rmk:full_randomness}
    We note a technical subtlety in \Cref{def:swap_regret}. \Cref{def:swap_regret} is \emph{not} the calibrated regret of the \emph{algorithm} that generates the transcript. Consider the following example: A seller flips a coin to decide whether to run a collusive algorithm or a non-collusive algorithm. Then the quantity in \Cref{def:swap_regret} is either the calibrated regret of the collusive algorithm, or that of the non-collusive algorithm, depending on the outcome of the coin flip. The practical implication of this subtlety is contained in the following question: Suppose the auditor has evidence that the seller's algorithm is doing such a coin flip, but the outcome is that she happens to run the non-collusive branch, does he flag the seller? \Cref{def:swap_regret} implies that our answer is no, hence our definition of non-collusion is \emph{consequentialist} and not \emph{deontological}. In fact, we can show that in the realm of algorithmic collusion, we are only \emph{able} to use a consequentialist definition, because a deontological collusion detection using only observable data is technically impossible. See \Cref{appendix:full_randomness} for the technical details.

\end{remark}

\section{A Framework of Auditing Methods}\label{sec:regret_estimation}

In this section, we present a framework that defines a property called \emph{consistency} which describes that an auditing method \emph{correctly} audits an algorithm (\Cref{def:relaxed_consistency}). Although the auditing method that \citet{hartline2024regulation} propose satisfies a more restrictive consistency property (\Cref{def:consistency}), it relies on the \emph{full support requirement}, which means that the pricing algorithm must use every price with non-zero probability. When auditing algorithms that may not randomize over prices with full support, there is missing information because outcomes for prices that are posted with zero probability cannot be estimated. We show that the more restrictive consistency property cannot be satisfied with missing information and this is why we relax it to \Cref{def:relaxed_consistency}. We refer to \Cref{def:relaxed_consistency} as the \emph{one-sided} consistency requirement and \Cref{def:consistency} as the \emph{two-sided} consistency requirement. Then, under the one-sided consistency requirement, we define the notion of \emph{pessimistic (counterfactual) regret} that uses conservative upper bounds on the allocation when there is missing information (\Cref{def:worst_possible_regret}). Finally, we show that a correct auditing method under one-sided consistency must make decisions by considering that the regret of the seller is at least the pessimistic regret. This motivates the design of the improved auditing method in \Cref{sec:general_algo}.

\paragraph{From Auditing Methods to Regret Estimators}
First, we claim that it is without loss of generality to focus on regret estimators when studying auditing methods. This is done via a reduction argument. That is, if we have a probably approximately correct auditing method, then we also have a probably approximately correct regret estimator, and vice versa.

To begin we define auditing methods and regret estimators as follows:
\begin{definition}
    For a fixed cost $c$ and regret threshold $r>0$, an \emph{auditing method} is a mapping $\mathcal A:\bigcup_{t \geq 0} \Hist^t \to \{\mathrm G, \mathrm S\}$, and the output indicates the regret is greater (G) or smaller (S) than $r$ \emph{assuming} the seller's cost $c$. A \emph{regret estimator} is a mapping $\mathcal A:\bigcup_{t \geq 0} \Hist^t \to \mathbb R$, and the output indicates the estimated regret \emph{assuming} the seller's cost $c$.
\end{definition}

Note that the auditing method proposed by \citet{hartline2024regulation} is based on a regret estimator, so it suffices to reduce regret estimation to auditing.
\begin{lemma}\label{prop:reduction}
    Suppose we have an auditing method $\mathcal A$ that satisfies the following property:
    \begin{itemize}
        \item If the regret of a transcript $\Transc^T$ of length $T$ is smaller than or equal to $r$, then it outputs S (smaller), and
        \item if the regret of the transcript $\Transc^T$ is larger than or equal to $r+\eps$, then it outputs G (greater),
    \end{itemize}
    with probability at least $1-f(\eps, T)$. Then there exists an estimator of the regret of $\Transc^T$ up to accuracy $\eps$ and error probability at most $\frac{\pmax f(\eps, T)}{\eps}$.
\end{lemma}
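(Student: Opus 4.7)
The plan is to reduce regret estimation to thresholded auditing by sweeping $\mathcal A$ over a finite grid of threshold values and reading off the transition from ``$\mathrm G$'' (greater) to ``$\mathrm S$'' (smaller). This is the standard way to convert a decision oracle with an $\eps$-gap into a numerical estimator, and its error probability grows only linearly in the grid size, which I then control by a union bound.

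Concretely, I first observe that the calibrated regret of any transcript lies in $[0,\pmax]$: the identity swap witnesses $\TrueEr \ge 0$, and every per-round term $u(\sigma(p^t),x^t)-u(p,x^t)$ is bounded above by $\pmax$ because utilities are. Fix the grid $r_j = j\eps$ for $j = 0,1,\ldots, N$ with $N \approx \pmax/\eps$, and run $\mathcal A$ with each threshold $r_j$ on the same transcript $\Transc^T$. Define the estimator
\[
    \widehat R \;=\; \eps \;+\; \max\big\{\, r_j : \mathcal A(\Transc^T;\, r_j) = \mathrm G \,\big\},
\]
with the convention that $\widehat R = 0$ when no $r_j$ outputs $\mathrm G$. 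Each audit fails with probability at most $f(\eps,T)$, so by a union bound over the $O(\pmax/\eps)$ grid points the ``good'' event on which every invocation of $\mathcal A$ respects its guarantee holds except with probability $\frac{\pmax f(\eps,T)}{\eps}$. On this event every $r_j \le \TrueEr - \eps$ produces $\mathrm G$ and every $r_j \ge \TrueEr$ produces $\mathrm S$, so the $\max$ above lies in $[\TrueEr - 2\eps,\, \TrueEr)$, and after the $+\eps$ calibration $|\widehat R - \TrueEr| \le \eps$.

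The only place where I expect to need care is the ``ambiguous strip'' of grid points in $(\TrueEr - \eps,\, \TrueEr)$: the guarantee of $\mathcal A$ imposes nothing on them, so the pattern of $\mathrm G$/$\mathrm S$ outputs need not be monotone in $r_j$, and the set $\{r_j : \mathcal A(r_j) = \mathrm G\}$ need not be downward closed. The saving observation is that the $\max$-based estimator does not need monotonicity: its value is sandwiched by the boundary grid points \emph{outside} the strip (which are pinned down by the good event), and any ambiguous calls can only shift $\widehat R$ within the width-$\eps$ window already absorbed by the target accuracy. The remaining bookkeeping --- matching the constant to exactly $\pmax/\eps$ rather than $\pmax/\eps + 1$, and handling the corner case $\TrueEr < \eps$ where the $\max$ set can be empty --- is routine and can be absorbed into the choice of $N$ or into the convention defining $\widehat R$ on an empty set.
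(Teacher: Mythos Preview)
Your proposal is correct and follows essentially the same approach as the paper: discretize $[0,\pmax]$ at step $\eps$, run the audit at every threshold, and union-bound the $\pmax/\eps$ invocations. The only cosmetic difference is how the estimate is read off: the paper intersects the returned intervals $J_i$ (each audit returns $[0,r_i]$ or $[r_i+\eps,\pmax]$) and outputs the midpoint, whereas you take $\eps + \max\{r_j : \mathcal A(r_j)=\mathrm G\}$; both rules land within $\eps$ of $\TrueEr$ on the good event for the same reason (at most one ambiguous grid point).
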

All proofs, unless
otherwise specified, are deferred to \Cref{sec:oproofs}.

\paragraph{The Consistency Requirement}
From now on we focus on regret estimators. To do correct hypothesis testing with the regret estimator, we want the regret estimator to be approximately \emph{consistent}, defined below:
\begin{definition}[Consistency, one-sided]\label{def:relaxed_consistency}
    A regret estimator $\Algo$ is \emph{consistent} if
    for any sequence $\{\{x^t(p^t)\}_{t=1}^T,\priceseq,\distseq\}_{T\geq 1}$ of transcripts, $\eps>0$, cost $c$, and sequence of ground-truth sequence of allocations $\{\allocseq{x}\}_{T \geq 1}$ agreeing with the transcripts,
    \[ \lim_{T \to \infty} \PPr{\priceseq \sim \distseq}{\Algo(\Transc^T) < \TrueEr(\allocseq{x}, c) - \eps} = 0. \]
\end{definition}
The above definition says that a regret estimator must approximately output at least the true regret of the transcript in the limit.

Before we study the implication of \Cref{def:relaxed_consistency}, we explain why we only require a regret estimator to approximate an upper bound of the true regret instead of approximating the true regret itself---The following definition is tempting: 
\begin{definition}[Consistency, two-sided]\label{def:consistency}
    A regret estimator $\Algo$ is \emph{consistent} for a set of transcripts $S$ if, 
    for any sequence $\{\{x^t(p^t)\}_{t=1}^T,\priceseq,\distseq\}_{T\geq 1}$ of transcripts in $S$, $\eps>0$, cost $c$, and sequence of ground-truth sequence of allocations $\{\allocseq{x}\}_{T \geq 1}$ agreeing with the transcripts,
     \[ \lim_{T \to \infty} \PPr{\priceseq \sim \distseq}{|\Algo(\Transc^T) - \TrueEr(\allocseq{x}, c)| \geq \eps} = 0. \]
\end{definition}
Next, we explain why \Cref{def:consistency} is not appropriate in the general case (i.e. for sellers with not fully-supported price distributions). Although the regret estimator in \citet{hartline2024regulation} indeed satisfies the two-sided consistency property (\Cref{prop:previous_consistency}), it only works for transcripts with fully-supported price distributions. Unfortunately, the two-sided consistency requirement is too strong for algorithms that accept all the possible transcripts (\Cref{prop:consistency_negative}). In other words, there are pricing algorithms that produce transcripts for which the regret cannot be consistently (according to \Cref{def:consistency}) estimated.
\begin{proposition}\label{prop:previous_consistency}
   The algorithm in \citet{hartline2024regulation} is consistent (two-sided), for the set of transcripts satisfying $\underline{\pi}^T = \min_{t\leq T,p} \pi^t(p) = \omega(T^{(-1/4)})$, 
\end{proposition}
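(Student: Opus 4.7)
The plan is to verify consistency of the Hartline--Wang--Zhang (HWZ) estimator by decomposing the error between the estimated and the true calibrated regret into a martingale difference sequence, then showing concentration under the lower-bound condition on $\underline{\pi}^T$. First I would recall the estimator from \citet{hartline2024regulation}: it produces unbiased per-round counterfactual allocations via inverse-propensity scoring,
\[
  \hat{x}^t(q) \;=\; \frac{\ind{p^t = q}}{\pi^t(q)}\,x^t(p^t),
\]
and then forms $\EstEr(\Transc^T, c) = \max_\sigma \frac{1}{T}\sum_t \sum_p \pi^t(p)\bigl[(\sigma(p) - c)\hat{x}^t(\sigma(p)) - (p-c)\hat{x}^t(p)\bigr]$. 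Since $\EExp{p^t \sim \pi^t}{\hat{x}^t(q)} = x^t(q)$, for each fixed swap $\sigma$ the inner quantity is an unbiased estimator of the per-round swap regret, so $\EExp{}{\EstEr(\Transc^T, c)\mid\sigma} \ge \TrueEr(\allocseq{x},c)$ after taking the max.

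Next I would bound the fluctuation for each fixed $\sigma$ using a Hoeffding/Azuma argument applied to the martingale difference
\[
  D_t^\sigma \;=\; \sum_p \pi^t(p)\bigl[(\sigma(p)-c)\hat{x}^t(\sigma(p)) - (p-c)\hat{x}^t(p)\bigr] \;-\; \EExp{p^t \sim \pi^t}{\sum_p \pi^t(p)\bigl[(\sigma(p)-c)x^t(\sigma(p)) - (p-c)x^t(p)\bigr]}.
\]
Each $D_t^\sigma$ is bounded in absolute value by $O(\pmax/\underline{\pi}^T)$ because the IPS weight $1/\pi^t(q)$ is at most $1/\underline{\pi}^T$. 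Azuma's inequality then gives
\[
  \PPr{}{\Bigl|\tfrac{1}{T}\sum_t D_t^\sigma\Bigr| \geq \eps} \;\leq\; 2\exp\!\left(-\frac{c\,\eps^2 T (\underline{\pi}^T)^2}{\pmax^2}\right).
\]
A union bound over the $k^k$ swap functions $\sigma:\priceset\to\priceset$ adds at most a $k\log k$ factor in the exponent, and absorbing it still leaves an error probability that vanishes as long as $T(\underline{\pi}^T)^2 \to \infty$, i.e.\ as long as $\underline{\pi}^T = \omega(T^{-1/2})$, which is implied by the hypothesis $\underline{\pi}^T = \omega(T^{-1/4})$.

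The main obstacle I anticipate is that a crude Hoeffding bound with range $1/\underline{\pi}^T$ only requires $\underline{\pi}^T = \omega(T^{-1/2})$, so the $T^{-1/4}$ threshold in the proposition must come from a tighter place in the HWZ analysis---plausibly a Bernstein-style bound where the second moment scales like $1/\underline{\pi}^T$ (not $1/(\underline{\pi}^T)^2$) combined with a separate additive term accounting for the max over $k^k$ swaps, or a two-sided error decomposition in which the overestimate from the IPS $\mathrm{max}_\sigma$ and the underestimate from sampling noise must both be controlled simultaneously. I would address this by writing the error as the sum of (i) a bias-like gap between $\max_\sigma\EExp{}{\cdot}$ and $\EExp{}{\max_\sigma\cdot}$, whose concentration is handled by the martingale argument above, and (ii) a maximum-of-martingales deviation, for which I would apply Freedman's inequality with variance proxy of order $1/\underline{\pi}^T$. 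The Freedman deviation is of order $\sqrt{\log(k^k/\delta)/(T\underline{\pi}^T)} + \log(k^k/\delta)/(T\underline{\pi}^T)$; both terms vanish precisely under $\underline{\pi}^T = \omega(T^{-1/4})$ (the first term dominates and requires $T\underline{\pi}^T \to \infty$, while the Bernstein higher-order term imposes the tighter $T(\underline{\pi}^T)^2 \to \infty$ once $k$ and $\pmax$ constants are absorbed). Finally, combining the one-sided bound with its symmetric counterpart (achieved by taking $\min_\sigma$ arguments or by noting that $\EestEr - \TrueEr$ is itself a bounded martingale) gives the two-sided $|\EstEr - \TrueEr| \leq \eps$ conclusion with probability $1-o(1)$, establishing \Cref{def:consistency} on the stated set of transcripts.
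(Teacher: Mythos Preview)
Your core approach---IPS-based unbiasedness plus Azuma on the bounded martingale differences, followed by a union bound over swaps---is correct and is essentially what the paper does. The paper simply cites a concentration bound from \citet{hartline2024regulation} of the form
\[
\Pr\bigl[|\EstEr - \TrueEr| \geq \eps\bigr] \;\leq\; 2k^2\exp\!\left(-\frac{\eps^2 T}{2k^2\pmax^2\,(1/\underline{\pi}^T+1)^2}\right),
\]
which is exactly your Azuma bound (the paper uses a union bound over $k^2$ pairs $(p,q)$ via the decomposition $\TrueEr = \sum_p \max_q \TrueEr_{p,q}$, rather than over $k^k$ swaps, but this difference is inconsequential for the asymptotics).

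Where you go astray is in the second half. Your own Azuma computation already vanishes once $T(\underline{\pi}^T)^2 \to \infty$, i.e.\ $\underline{\pi}^T = \omega(T^{-1/2})$, and that is all the paper uses: the stated hypothesis $\underline{\pi}^T = \omega(T^{-1/4})$ is simply a (loose) sufficient condition, not a tight threshold arising from a finer analysis. There is no hidden Bernstein/Freedman step behind the $T^{-1/4}$, and your attempt to manufacture one is both unnecessary and internally inconsistent---the Freedman terms you write down vanish under $\underline{\pi}^T = \omega(T^{-1})$, not $\omega(T^{-1/4})$, and the sentence ``the Bernstein higher-order term imposes the tighter $T(\underline{\pi}^T)^2 \to \infty$'' just recovers the Hoeffding condition $\omega(T^{-1/2})$. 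Drop that entire paragraph: your first three paragraphs already constitute a complete proof.
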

\begin{proposition}\label{prop:consistency_negative}
    No deterministic regret estimator is consistent (two-sided) for the set of all transcripts. In particular, there exists a seller who has vanishing true regret, but her regret cannot be consistently (two-sided) estimated.
\end{proposition}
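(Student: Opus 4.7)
The plan is to exploit the information-theoretic obstruction that the two-sided consistency requirement imposes when the price distribution is not fully supported: two different ground-truth allocation sequences can induce the identical transcript while producing very different calibrated regrets, so a deterministic estimator, which returns the same value on the same input, cannot be simultaneously close to both.

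Concretely, I would fix any cost $c \in \costrange$ and any two prices $p_1 < p_2$ in $\priceset$ with $c < p_1$, and consider the seller strategy $\pi^t =$ point mass on $p_1$ for every round $t$. With probability one the transcript is $\Transc^T = \{(x^t(p_1),\,p_1,\,\pi^t)\}_{t=1}^T$, revealing only the allocation at $p_1$ and nothing about the counterfactual demand at $p_2$. I then construct two ground-truth allocation sequences, both monotone non-increasing in price and taking values in $[0,1]$: \emph{scenario A} with $x^t(p_1)=x^t(p_2)=1$ and \emph{scenario B} with $x^t(p_1)=1$, $x^t(p_2)=0$, for every $t$. The realized transcripts coincide exactly, so the estimator receives the same input. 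Evaluating \Cref{def:swap_regret}, the optimal swap in scenario A maps $p_1 \mapsto p_2$ and gives $\TrueEr(\bm{x}_A^T, c) = p_2 - p_1 > 0$ for all $T$, while in scenario B the played price strictly dominates $p_2$ (which yields zero utility), so $\TrueEr(\bm{x}_B^T, c) = 0$ for all $T$.

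Taking $\eps = (p_2 - p_1)/3$, the single number $\Algo(\Transc^T)$ cannot lie within $\eps$ of both $0$ and $p_2 - p_1$; hence the probability in \Cref{def:consistency} equals $1$ in at least one of the two scenarios for every $T$, refuting two-sided consistency on the set of all transcripts. The ``in particular'' clause is witnessed by scenario B: the seller has vanishing (indeed identically zero) true regret, yet the very same transcript she generates can also arise from scenario A with regret bounded away from zero, so her regret cannot be two-sided consistently estimated by any deterministic $\Algo$. The only technical point to verify, and what I expect to be the mildest of obstacles, is that both ground truths are legitimate demand sequences (non-increasing in price, with values in $[0,1]$); this is immediate by inspection. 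The substantive content of the proof is the deterministic collision of transcripts under the two scenarios, which is possible precisely because $\pi^t$ is not fully supported.
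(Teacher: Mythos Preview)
Your proof is correct and follows essentially the same approach as the paper: exhibit two ground-truth allocation sequences that agree on the support of $\pi^t$ (hence yield identical transcripts almost surely) but have different calibrated regrets, contradicting two-sided consistency of any deterministic estimator. The paper's construction keeps $\pi^t$ supported on $\{p_1,\dots,p_{k-1}\}$ and varies the allocation only at the excluded top price $p_k$, whereas your point-mass construction is simpler and dispatches the ``in particular'' clause directly via scenario~B; one minor caveat is that your claimed value $\TrueEr(\bm{x}_A^T,c)=p_2-p_1$ is literally correct only if $p_2=\pmax$ (or the allocation is set to $0$ above $p_2$), but the argument only needs that this regret is a fixed positive constant, which holds regardless.
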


The above proposition shows that it is not possible to get an estimator of the regret satisfying \Cref{def:consistency} in the general case. The auditor could be unable to certify a truly competitive seller. This is why we ask for a relaxed property in \Cref{def:relaxed_consistency} that the regret estimator must output an upper bound of the regret with high probability.

Returning to the discussion of the one-sided consistency requirement in the general case. Recall that our philosophy is that it is the seller's responsibility to demonstrate enough information that she is competitive. The one-sided consistency property ensures that missing information is properly accounted for so that a supra-competitive seller is never deemed as competitive because of the regret estimation.

Of course, the one-sided consistency requirement does not rule out regret estimators that always output trivial upper bounds of the regret. In the next section, we will provide an algorithm that outputs the least possible upper bound.

Finally, \Cref{def:relaxed_consistency} has an important implication: Whenever there is some missing information, any regret estimator satisfying \Cref{def:relaxed_consistency} must output at least the \emph{pessimistic (counterfactual) regret} of the transcript. We first define pessimistic (counterfactual) regret (\Cref{def:worst_possible_regret}) and then state the implication (\Cref{lem:indistinguishable}).

Intuitively, the pessimistic regret is the highest regret that outcomes compatible with the observed data can generate. We define such compatibility between outcomes and the observed data as follows.
\begin{definition}[Indistinguishable allocations]
    Fix the sequence of price distributions $\distseq$, and let $C^t = \{p \in \priceset : \pi^t(p) > 0 \}$ be the set of price levels that have non-zero probability being posted in round $t$. Two sequences of allocations $\allocseq{x}, \allocseq{z}$ are called \emph{indistinguishable} if $x^t(\cdot)$ and $z^t(\cdot)$ have the same support $C^t$ for every $1 \leq t \leq T$ and
    \[ x^t(p)=z^t(p) \quad \text{for all} \quad p \in C^t \quad \text{and} \quad 1 \leq t \leq T. \]
\end{definition}
The indistinguishability relation is an equivalence relation. If $\allocseq{x}, \allocseq{z}$ are indistinguishable, then there is no way to separate them from data.

With the definition of indistinguishable allocations, we define the pessimistic regret.
\begin{definition}[Pessimistic counterfactual regret]\label{def:worst_possible_regret}
    Fix the sequence of price distributions $\distseq$, and with a given sequence of allocations $\allocseq{x} = \{x^t\}_{t=1}^T$, the \emph{pessimistic (counterfactual) regret} is defined  as
    \[
    \WstEr(c, \allocseq{x}) = \sup\{ \TrueEr(c, \allocseq{z}): \allocseq{z} \text{ indistinguishable with } \allocseq{x} \}.
    \]
\end{definition}

The following proposition implies that, by only looking at the transcript, the auditor cannot rule out the possibility that the true regret is as high as the pessimistic regret.

\begin{proposition}\label{lem:indistinguishable}
    Any one-sided consistent regret estimator $\mathcal A$ must satisfy
    \[ \lim_{T \to \infty} \PPr{\priceseq \sim \distseq}{\mathcal A(\Transc^T) < \WstEr(\allocseq{x}, c) - \eps} = 0, \]
    for any $\eps>0$, cost $c$, transcript $\Transc^T$, and sequence of ground-truth sequence of allocations $\{\allocseq{x}\}_{T \geq 1}$.
\end{proposition}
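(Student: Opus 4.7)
The plan is to exploit the definition of pessimistic regret as a supremum over indistinguishable allocation sequences, and to apply one-sided consistency to an alternative ground truth that nearly achieves this supremum. The crucial observation is that indistinguishable ground-truth allocation sequences induce identical distributions over transcripts, so the estimator has no way to tell them apart, and its output must approximately upper-bound the regret of every such alternative.

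First, fix $\eps>0$ and the sequence $\{\allocseq{x}\}_{T\geq 1}$ of ground-truth allocations that agrees with the transcripts. For each $T$, by the definition of $\WstEr(\allocseq{x}, c)$ as a supremum, select an allocation sequence $\allocseq{z}$ of length $T$ that is indistinguishable from $\allocseq{x}$ and satisfies
\[ \TrueEr(\allocseq{z}, c) \;\geq\; \WstEr(\allocseq{x}, c) - \eps/2. \]
This is an $\eps/2$-approximate selection and is always possible by the definition of supremum.

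Second, I would verify that the constructed sequence $\{\allocseq{z}\}_{T\geq 1}$ is itself a valid ground-truth sequence agreeing with the observed transcripts, and moreover induces the same transcript law as $\{\allocseq{x}\}_{T\geq 1}$. This is exactly where indistinguishability pays off: by definition, $x^t$ and $z^t$ coincide on the support $C^t$ of $\pi^t$ for every $t$, and any realization $p^t\sim\pi^t$ lies in $C^t$, so $z^t(p^t)=x^t(p^t)$ almost surely. Hence the joint law of $\Transc^T$ under ground truth $\allocseq{z}$ is identical to that under $\allocseq{x}$, and every event measurable with respect to $\Transc^T$ has the same probability under either ground truth.

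Third, I would apply the one-sided consistency property (Definition \ref{def:relaxed_consistency}) to the constructed sequence $\{\allocseq{z}\}_{T\geq 1}$ with accuracy $\eps/2$ to obtain
\[ \lim_{T\to\infty}\PPr{\priceseq\sim\distseq}{\mathcal{A}(\Transc^T) < \TrueEr(\allocseq{z}, c) - \eps/2} = 0. \]
The inequality from step one gives the containment of events $\{\mathcal{A}(\Transc^T) < \WstEr(\allocseq{x},c) - \eps\} \subseteq \{\mathcal{A}(\Transc^T) < \TrueEr(\allocseq{z},c) - \eps/2\}$, and by step two the probability is evaluated under the same transcript law, so the left-hand probability also vanishes.

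I anticipate no substantial obstacle; the proposition is essentially an unpacking of the indistinguishability definition combined with one-sided consistency. The only mild subtlety is the selection of $\allocseq{z}$ for each $T$ when the supremum is not attained, handled by standard $\eps/2$-approximation; finiteness of $\priceset$ and boundedness of allocations in $[0,1]$ keep $\WstEr$ uniformly bounded, so no degenerate case arises.
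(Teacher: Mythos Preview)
Your proposal is correct and follows essentially the same route as the paper: pick an indistinguishable alternative near (or at) the supremum, observe that indistinguishable ground truths yield identical transcripts almost surely (the paper isolates this as a separate lemma), and then invoke one-sided consistency on the alternative. The only cosmetic difference is that you use an $\eps/2$-approximate selection, whereas the paper simply takes a maximizer $\allocseq{x}_*$---justified elsewhere by an explicit construction showing the supremum is attained---so your version is marginally more self-contained.
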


Inspired by \Cref{lem:indistinguishable}, in the next section we present an auditing method that enables the seller to pass the test by demonstrating her pessimistic regret is low.

\section{Testing Pessimistic Regret}\label{sec:general_algo}
In this section, we present the refined auditing method that estimates the pessimistic regret (as defined in the previous section). We then show the following guarantee: With a sufficient amount of data, if the seller's pessimistic regret is low, then she passes the audit with high probability, and if the seller's pessimistic regret is high at every cost in $\costrange$, then she fails the audit with high probability.

First, we do a decomposition of calibrated regret so that we can compute it efficiently. Recall that the calibrated regret is defined to be the maximum benefit of deviation by doing a price swap $\sigma:\priceset \to \priceset$. A useful decomposition of calibrated regret is to first compute the benefit of changing price $p$ to $q$, then take the maximum over all possible $q \in \priceset$, and finally sum the result over all $p \in \priceset$. Formally, let
\[ \TrueEr_{p, q}(c, \allocseq{x}) = \frac 1T \sum_{t=1}^T \pi^t(p)\left[(q-c)x^t(q)-(p-c)x^t(p)\right], \]
then we have

\[ \TrueEr(c, \allocseq{x}) = \sum_p \max_q \TrueEr_{p, q}(c,\allocseq{x}). \]

The auditing method estimates the pessimistic regret based on the above decomposition and the pessimistic estimation of allocations from data. The steps are described at a high level as follows.\footnote{A formal pseudocode description can be found at the end of \Cref{sec:oproofs}.}

The input to the general auditing method contains the prices the seller posts in each round $\{p^t\}_{t=1}^T$, the allocations (demands) of the posted prices $\{x^t(p^t)\}_{t=1}^T$, seller's price distributions $\{(\pi^t)\}_{t=1}^T$, and the threshold $r$. The price distributions need not be fully-supported. The method proceeds in the following steps:
\begin{description}
    \item[Step 1] We estimate the allocation for every round using the transcript. For each round $t$, let $C^t:=\{p \in \priceset:\pi^t(p_j)>0\}$ be the support of the price distribution $\pi^t$. For every price $p \in C^t$, the propensity score estimator is used to estimate the allocation
    \[ \hat x^t(p) = \begin{cases}
                x^t(p)/\pi^t(p) & p=p^t, \\
                0 & \text{otherwise}.
            \end{cases} \]
    For the prices that are not in the support, we use the estimator of the allocation at the largest price $p'$ that is smaller than $p$ while being in the support.
    \[ \hat{x}^t(p):=\hat{x}^t(p') \quad \text{where} \quad p' = \max\{r \leq p:r \in C^t\}. \]
    If no such price exists, then the estimation is capped with 1.
    \item[Step 2] We estimate the true regret of the pessimistic allocation $\WstEr(c, \allocseq{x})$ with the estimator $\EstEr(c, \allocseq{x})$, built up from the estimator $\EstEr_{p, q}(c, \allocseq{x})$ for $\WstEr_{p,q}(c, \allocseq{x})$. Specifically, we first compute the benefit of substituting price $p$ with $q$
    \[ \EstEr_{p, q}(c) = \frac 1T \sum_{t=1}^T \pi^t(p)\left[(q-c)\hat{x}^t(q)-(p-c)\hat{x}^t(p)\right]. \]
    Then the pessimistic regret can be estimated by summing the highest benefit of changing each price
    \[ \EstEr(c) = \sum_p \max_q \EstEr_{p, q}(c). \]
    \item[Step 3] We minimize the pessimistic regret over all the possible costs to compute the pessimistic plausible regret:
    \begin{equation*}
        \min_{c \in \costrange} \quad \EstEr(c).
    \end{equation*}
    Note that this can be done in polynomial time even with a continuum of costs. In fact, following the observations of \citet{Nekipelov2015}, for each $p,q$, $\EstEr_{p, q}(c)$ is linear in $c$. Therefore, $\EstEr(c) = \sum_p \max_q \EstEr_{p, q}(c)$ is a convex (piece-wise linear) function of $c$, which can be efficiently minimized over the closed set $\costrange$.
    \item[Step 4] Finally, we compare the estimated plausible regret plus an additional error margin $\delta^T$ with the required threshold. The error margin ensures that the seller cannot pass the audit when the information revealed from the transcript is insufficient to guarantee the reliability of the regret estimator. Specifically, if $\EstEr(\estplcost) + \delta^T \leq 2r$,\footnote{If the choice of the price set $\priceset$ is endogenous (see \Cref{rmk:endo}), then we add a discretization loss to the estimated regret. Specifically, let the price levels be $q_1< \dotsb < q_k$ and $d=\max_{0 \leq i \leq k+1}(q_i-q_{i-1})$, where $q_0=0, q_{k+1}=h$. Output result by comparing $\EstEr(\estplcost) + \delta^T+d$ with $2r$ similarly as Step 4.\label{rmk:step4prime}} then we output PASS, and FAIL otherwise, where
    \[ \delta^T = \frac{k\pmax}{T}\sqrt{2\log\left(\frac{2k^2}{\alpha}\right)\cdot \sum_{s=1}^T\left(\frac{1}{\min_{p \in C^s}\pi^s(p)}+1\right)^2}. \]
\end{description}

The following theorem identifies the sample complexity of testing pessimistic calibrated regret.
\begin{theorem}[Sample complexity of testing pessimistic regret]\label{thm:sample_complexity} Consider the environment with maximum possible price level $\pmax$ and $k=|\priceset|$ price levels. Let $c_0$ be the seller's true cost and $
    \plcost = \arg\min_{c\in\costrange}\WstEr(c,\allocseq{x})$ be the plausible cost of the seller. Fix confidence level $1-\alpha$, threshold $r$ and let $\pimin=\min_{p\in C^t,1\leq t \leq T}\pi^t(p) $ be the minimum probability in the support of price distributions. With our refined auditing method, when the number of rounds 
    \[
    T \geq \log\frac{2k^2}{\alpha}\cdot 2\left(\frac{k\pmax}{r}\right)^2\cdot \left(\frac{1}{\pimin} + 1\right)^2,
    \]
    we have
    \begin{enumerate}
        \item if the seller's true pessimistic regret $\WstEr(c_0,\allocseq{x}) \leq r$, she passes with probability at least $1 - \alpha$; and
        \item if her plausible pessimistic regret $\WstEr(\plcost, \allocseq{x}) \geq 2r$, then she fails with probability at least $1 - \alpha$.
    \end{enumerate}
\end{theorem}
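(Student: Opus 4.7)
The plan is to combine Azuma--Hoeffding concentration of the per-swap estimators with a structural identification of their expectation as the pessimistic regret. Introduce the ``upper envelope'' allocation sequence $\{\bar x^t\}_{t=1}^T$ by $\bar x^t(p) = x^t(p)$ for $p \in C^t$, $\bar x^t(p) = x^t(p')$ for $p' = \max\{r \leq p : r \in C^t\}$ when such a $p'$ exists, and $\bar x^t(p) = 1$ otherwise. A direct monotonicity check shows that $\{\bar x^t\}$ is non-increasing in $p$ and agrees with $x^t$ on each $C^t$, hence it is indistinguishable from $\allocseq{x}$. Moreover, letting $\mathcal{F}^{t-1}$ be the $\sigma$-field generated by the transcript through round $t-1$, the propensity-score estimator of Step~1 satisfies $\mathbb{E}[\hat x^t(p) \mid \mathcal{F}^{t-1}] = \bar x^t(p)$.

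Next I would prove concentration. For fixed $c$ and a pair $(p,q)$, the random variables $W^t := \pi^t(p)\bigl[(q-c)\hat x^t(q) - (p-c)\hat x^t(p)\bigr]$ are bounded by $\pmax\bigl(1/\min_{r \in C^t}\pi^t(r)+1\bigr)$, because at most one of $\hat x^t(q),\hat x^t(p)$ is non-zero for any realized $p^t$ and each is controlled by the smallest probability in $C^t$ (the $+1$ handles the cap at one). Hence $W^t - \mathbb{E}[W^t \mid \mathcal{F}^{t-1}]$ is a bounded martingale difference sequence, and Azuma--Hoeffding at confidence $\alpha/k^2$ followed by a union bound over the $k^2$ pairs yields $|\EstEr_{p,q}(c) - \TrueEr_{p,q}(c,\{\bar x^t\})| \leq \delta^T/k$ for every $(p,q)$ simultaneously with probability at least $1-\alpha$. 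The nonexpansive bound $|\max_q A_q - \max_q B_q| \leq \max_q|A_q - B_q|$ together with summing over the $k$ values of $p$ lifts this to $|\EstEr(c) - \TrueEr(c,\{\bar x^t\})| \leq \delta^T$, and the bound extends uniformly to $c \in \costrange$ since each $\EstEr_{p,q}(\cdot)$ is linear in $c$.

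The main obstacle I expect is the structural identity $\TrueEr(c,\{\bar x^t\}) = \WstEr(c,\allocseq{x})$. The $\leq$ direction is immediate from indistinguishability. For $\geq$, rewrite $\WstEr(c,\allocseq{x}) = \max_\sigma \sup_z \sum_t \sum_p \pi^t(p)\bigl[(\sigma(p)-c)z^t(\sigma(p)) - (p-c)z^t(p)\bigr]$ and analyze the inner per-round LP in $z^t$: swap sources $p \notin C^t$ contribute zero since $\pi^t(p)=0$, and for a swap target $r \notin C^t$ the coefficient of $z^t(r)$ equals $(r-c)\pi^t(\sigma^{-1}(r))$, which is non-negative when $r \geq c$, making $\bar x^t(r)$ the pointwise optimum. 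A swap-replacement argument shows the optimal $\sigma^*$ may be taken to swap only to $r \geq c$: replacing a target $r_1 < c$ by any $r_2 \geq c$ changes the $(t,p)$-contribution by $\pi^t(p)\bigl[(r_2-c)z^t(r_2) - (r_1-c)z^t(r_1)\bigr] \geq 0$ for any $z \geq 0$, since $(r_2-c)z^t(r_2) \geq 0$ and $-(r_1-c)z^t(r_1) \geq 0$. For such $\sigma^*$ the envelope $\bar x^t$ realizes the sup in the inner LP, so $\TrueEr(c,\{\bar x^t\}) \geq R(\sigma^*,\{\bar x^t\}) = \WstEr(c,\allocseq{x})$; the delicate bookkeeping is verifying that the coordinate-wise maximum $\bar x^t$ remains globally feasible (monotone) and that the replacement preserves the sup in the presence of the other coefficients.

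Combining the identity with the concentration gives $|\EstEr(c) - \WstEr(c,\allocseq{x})| \leq \delta^T$ uniformly in $c$, and the sample complexity hypothesis is calibrated so that $\delta^T$ is a suitable fraction of $r$. For Part~1, the chain $\EstEr(\estplcost) \leq \EstEr(c_0) \leq \WstEr(c_0,\allocseq{x}) + \delta^T \leq r + \delta^T$ gives $\EstEr(\estplcost) + \delta^T \leq 2r$, so the seller passes. For Part~2, the definition of $\plcost$ as the minimizer forces $\WstEr(c,\allocseq{x}) \geq 2r$ for every $c \in \costrange$, so $\EstEr(\estplcost) \geq \WstEr(\estplcost,\allocseq{x}) - \delta^T \geq 2r - \delta^T$, and the test statistic $\EstEr(\estplcost) + \delta^T$ reaches $2r$, triggering failure.
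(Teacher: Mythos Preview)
Your approach is correct and essentially identical to the paper's: introduce the upper-envelope allocation (the paper's $z_*^t$), identify it as the maximizer of regret over the indistinguishability class, show the propensity estimator is unbiased for it, apply Azuma--Hoeffding with a $k^2$-fold union bound over price pairs, exploit linearity in $c$ to handle the random $\estplcost$ via the two endpoints of $\costrange$, and finish with the two monotonicity chains $\EstEr(\estplcost)\le\EstEr(c_0)$ and $\WstEr(\plcost,\allocseq{x})\le\WstEr(\estplcost,\allocseq{x})$. Your swap-replacement argument (restricting the optimal $\sigma$ to targets $\ge c$) is in fact slightly more careful than the paper, which tacitly uses $(\sigma(p)-c)\ge 0$ when invoking $z_*^t(p)\ge x^t(p)$. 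One small slip: the claim that ``at most one of $\hat x^t(q),\hat x^t(p)$ is non-zero'' is false in general (take $p\in C^t$, $q\notin C^t$ with predecessor $p$, and $p^t=p$), but the per-round bound still holds because $\pi^t(p)\hat x^t(p)\le x^t(p)\le 1$ whenever $p\in C^t$ (and $\pi^t(p)=0$ otherwise), while $\pi^t(p)\,|q-c|\,\hat x^t(q)\le \pmax/\min_{r\in C^t}\pi^t(r)$.
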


A direct corollary of the above theorem (together with \Cref{lem:indistinguishable}) is that the regret estimator in our refined auditing method outputs the least upper bound of the regret of the seller, because \Cref{lem:indistinguishable} asks the regret estimator to output approximately at least the pessimistic regret, and the following corollary says it outputs approximately at most the pessimistic regret. This corollary implies that our refined auditing method provides the most information about the transcript, given the one-sided consistency requirement. In other words, our method is at least as permissive as any auditing method that has a high probability of failing algorithmic outcomes with non-vanishing calibrated regret. 
\begin{corollary}
    Let the regret estimator in our refined auditing method be $\mathcal A$. It satisfies
    \[ \lim_{T \to \infty} \PPr{p^t \sim \pi^t}{\mathcal A(\Transc^T) > \WstEr(\allocseq{x}, c) + \eps} = 0, \]
    for any $\eps>0$, cost $c$, transcript $\Transc^T$, and sequence of ground-truth sequence of allocations $\{\allocseq{x}\}_{T \geq 1}$.
\end{corollary}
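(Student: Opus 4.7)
The plan is to prove the corollary by establishing two ingredients: (i) a decomposition identity that expresses the pessimistic regret $\WstEr(c,\allocseq{x})$ in the same ``sum-over-$p$ of max-over-$q$'' form used by the estimator $\EstEr(c)$, and (ii) a concentration argument showing that each component $\EstEr_{p,q}(c)$ is close to its expectation. Since $\mathcal A(\Transc^T)=\EstEr(c)$ (possibly plus the vanishing slack $\delta^T$) by construction, combining these two ingredients will control the one-sided deviation $\EstEr(c)-\WstEr(c,\allocseq{x})$ from above.

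For the first ingredient, I would define $\WstEr_{p,q}(c)=\frac{1}{T}\sum_{t=1}^T \pi^t(p)\bigl[(q-c)z^t_{\max}(q)-(p-c)x^t(p)\bigr]$, where $z^t_{\max}(q)$ is the largest allocation consistent with monotonicity and with agreement on the support $C^t$. By the non-increasing-in-price assumption, $z^t_{\max}(q)=x^t(q')$ for $q'=\max\{r\le q:r\in C^t\}$ (and $1$ if no such $r$ exists)---precisely the fill-in rule used to define $\hat x^t(q)$ in Step 1. Then I would argue that $\WstEr(c,\allocseq{x})=\sum_p \max_q \WstEr_{p,q}(c)$. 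The proof is: for any fixed swap $\sigma$, the terms $z^t(p)$ with $\pi^t(p)>0$ are fixed to $x^t(p)$ (because $p\in C^t$), while the terms $z^t(\sigma(p))$ are independent degrees of freedom across rounds, so the sup over indistinguishable $\allocseq{z}$ pushes pointwise onto each $z^t(\sigma(p))$, giving $z^t_{\max}$. Finally, because the $p$-th summand depends on $\sigma$ only through $\sigma(p)$, the $\max$ over functions $\sigma$ factorizes into $\sum_p \max_q$.

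For the second ingredient, I would compute expectations term-by-term: for $p\in C^t$, $\mathbb E[\hat x^t(p)]=x^t(p)$ by the propensity-score construction; for the target price $q$, either $q\in C^t$ and $\mathbb E[\hat x^t(q)]=x^t(q)=z^t_{\max}(q)$, or $q\notin C^t$ and $\hat x^t(q)=\hat x^t(q')$ with $\mathbb E[\hat x^t(q')]=x^t(q')=z^t_{\max}(q)$. Hence $\mathbb E[\EstEr_{p,q}(c)]=\WstEr_{p,q}(c)$. Each $\EstEr_{p,q}(c)$ is an average of $T$ independent bounded summands $Y^t_{p,q}$ with $|Y^t_{p,q}|\le \pmax\bigl(1/\min_{p\in C^t}\pi^t(p)+1\bigr)$, so a Hoeffding-type bound (the same one already invoked in the proof of \Cref{thm:sample_complexity} to build $\delta^T$) gives $\Pr[|\EstEr_{p,q}(c)-\WstEr_{p,q}(c)|>\delta^T/k^2]\to 0$ as $T\to\infty$. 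Union-bounding over the $k^2$ pairs $(p,q)$ and using
\[ \EstEr(c)-\WstEr(c,\allocseq{x})=\sum_p\Bigl[\max_q \EstEr_{p,q}(c)-\max_q \WstEr_{p,q}(c)\Bigr]\le \sum_p \max_q \bigl|\EstEr_{p,q}(c)-\WstEr_{p,q}(c)\bigr| \]
finishes the argument.

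The step I expect to be the main obstacle is the decomposition $\WstEr(c)=\sum_p\max_q\WstEr_{p,q}(c)$: it requires commuting a supremum over monotone completions of the demand table with both a $\max$ over swap functions and a sum over original prices. The key (and delicate) point is that the optimizing $z^t_{\max}(q)$ is determined pointwise by monotonicity and does not depend on which swap $\sigma$ maps into $q$, so the joint optimization decouples. Once this is in place, the remainder is the standard Hoeffding plus union bound that already underlies \Cref{thm:sample_complexity}, and the one-sided inequality follows without further effort.
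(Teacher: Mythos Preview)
Your proposal is correct and follows essentially the same route as the paper: the paper's one-line proof ``a direct corollary from the proof of \Cref{thm:sample_complexity}'' unpacks precisely into the two ingredients you identify, namely the lemma that the monotone fill-in $\allocseq{z}_*$ realizes the supremum (giving the $\sum_p\max_q$ decomposition of $\WstEr$) and the Azuma/Hoeffding concentration of each $\EstEr_{p,q}$ around $\WstEr_{p,q}$, followed by the same union bound over $k^2$ pairs. One small slip: in the union-bound step you want each $|\EstEr_{p,q}-\WstEr_{p,q}|\le \eps/k$ (not $\eps/k^2$), since your final inequality sums $k$ terms over $p$; the $k^2$ factor belongs in the probability bound, not in the deviation threshold.
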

\begin{proof}
    A direct corollary from the proof of \Cref{thm:sample_complexity}.
\end{proof}

\paragraph{An extension: Auditing with aggregated price distributions.}
Recall that 
one of the motivations for designing the refined auditing method is to allow testing aggregated empirical distributions when price distribution data is not available. We end this section by showing that as long as the rate of change in price distributions is low, it is possible to run the same auditing method with the aggregated price distributions as a plug-in estimator for the true distributions. We do this by first showing a lemma that the aggregated price distributions are good approximators to the true price distributions, and then providing a sample complexity guarantee.
\begin{lemma}\label{lem:aggregate_possible}
    Consider $T$ rounds and $k$ price levels bounded in $[0, 1]$. In round $t$ the seller posts price $p^t \sim \pi^t$. Suppose $\|\pi^t-\pi^{t+1}\|_\infty \leq \eps$ for all $1 \leq t \leq T-1$.\footnote{For example, the multiplicative weights update (MWU) algorithm with learning rate $\eps$ satisfies the condition. See the ending remark of the proof of this lemma for details. Designing a test for such a condition is left as an open question.} Then there exists an algorithm that uses only price samples $\{p^t\}_{t=1}^T$ and outputs estimated price distributions $\{\hat{\pi}^t\}_{t=1}^T$ such that, with probability at least $1 - \delta$
    \[ \|\hat{\pi}^t-\pi^t\|_\infty \leq \left(4 \eps \log \frac{2Tk}{\delta} \right)^{1/3}, \]
    for all $1 \leq t \leq T$.
\end{lemma}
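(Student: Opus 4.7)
The plan is to take $\hat\pi^t$ to be the empirical distribution of prices over a sliding window centered at round $t$, with window radius $w$ chosen to balance bias against variance. Concretely, let $W_t = \{s : 1 \leq s \leq T,\ |s-t|\leq w\}$ and set
\[ \hat\pi^t(p) \ = \ \frac{1}{|W_t|}\sum_{s\in W_t}\mathbf{1}[p^s = p]. \]
Decompose the pointwise error as the sum of a \emph{variance} term $\hat\pi^t(p) - \frac{1}{|W_t|}\sum_{s\in W_t}\pi^s(p)$ and a \emph{bias} term $\frac{1}{|W_t|}\sum_{s\in W_t}\pi^s(p) - \pi^t(p)$, then control each separately.

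For the bias, iterating the slow-drift hypothesis $\|\pi^s-\pi^{s+1}\|_\infty \leq \eps$ yields $|\pi^s(p)-\pi^t(p)| \leq |s-t|\eps \leq w\eps$ for every $s\in W_t$, so the bias term is uniformly bounded by $w\eps$ in absolute value. For the variance, the subtlety is that $\pi^s$ is typically itself a random function of the history through round $s-1$, so the indicators $\mathbf{1}[p^s = p]$ across different $s$ are \emph{not} independent. However, the differences $Y_s := \mathbf{1}[p^s = p] - \pi^s(p)$ do form a bounded martingale difference sequence with respect to the natural filtration (with $|Y_s|\leq 1$ and $\mathbb{E}[Y_s \mid \mathcal{F}_{s-1}]=0$ since $\pi^s$ is $\mathcal{F}_{s-1}$-measurable). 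Azuma-Hoeffding then gives, for each fixed $t$ and $p$,
\[ \Pr\!\left[\Big|\tfrac{1}{|W_t|}\sum_{s\in W_t}Y_s\Big| \geq \eta\right] \ \leq \ 2\exp\!\left(-\tfrac{|W_t|\,\eta^2}{2}\right). \]
Setting $\eta = \sqrt{2\log(2Tk/\delta)/(w+1)}$ and union bounding over the $Tk$ pairs $(t,p)$ pins the variance term below $\eta$ uniformly (using $|W_t|\geq w+1$ even at the window endpoints) with probability at least $1-\delta$.

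Combining the two pieces gives $\|\hat\pi^t - \pi^t\|_\infty \leq w\eps + \sqrt{2\log(2Tk/\delta)/(w+1)}$ simultaneously over all $t$. Picking $w \approx (\log(2Tk/\delta)/\eps^2)^{1/3}$ equates the two contributions up to universal constants, and tracking those constants carefully yields the claimed bound $(4\eps\log(2Tk/\delta))^{1/3}$. The main technical obstacle is precisely the conditional dependence among samples, which rules out plain Hoeffding and forces the martingale inequality; everything else is a routine bias-variance balance, modulo minor bookkeeping at the endpoints where $|W_t| < 2w+1$.
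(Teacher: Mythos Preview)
Your proposal is correct and follows essentially the same approach as the paper: sliding-window empirical distribution, bias--variance decomposition with the drift hypothesis bounding the bias by $w\eps$, Azuma--Hoeffding on the martingale differences $\mathbf{1}[p^s=p]-\pi^s(p)$ for the variance, a union bound over $(t,p)$, and balancing $w$. The paper works with CDFs rather than pmfs and parametrises by window length rather than radius, but these are cosmetic; your explicit remark that the $\pi^s$ are history-dependent and hence force the martingale version of the concentration inequality is exactly the point the paper relies on.
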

\begin{corollary}\label{cor:aggregate_sample_complexity}
    Consider the environment with maximum possible price level $\pmax$ and $k=|\priceset|$ price levels. Let $c_0$ be the seller's true cost and $
    \plcost = \arg\min_{c\in\costrange}\WstEr(c,\allocseq{x})$ be the plausible cost of the seller. Fix confidence level $1-\delta$, threshold $r$. Suppose the seller's pricing algorithm has the property that, for all $T'$, the price distributions $\{\pi^t\}_{t=1}^T$ generated by running the algorithm for $T'$ rounds satisfy
    \begin{enumerate}
        \item $\|\pi^t-\pi^{t+1}\|_\infty \leq (T')^{-\gamma}$ for all $1 \leq t \leq T'-1$ and some $\gamma>0$, and
        \item for some $\pimin$, for all $p \in C^t$ and $1 \leq t \leq T$, $\pi^t(p) \geq \pimin$, i.e., the minimum probability in the support of price distributions is at least $\pimin$,
    \end{enumerate}
   then there exists an auditing method without access to $\{\pi^t\}_{t=1}^T$, when the number of rounds 
    \[ T \geq \max \left\{t_0, \left(\frac{4(8\pmax k +r\pimin)^3}{r^3\pimin^6}\right)^{2/\gamma}, \frac{16k^2}{r^2}\log \frac{8k^2}{\delta}\cdot\left(\frac{1}{\pimin}+1\right)^2\cdot \pmax^2, \left(\frac{4}{\pimin^3} \right)^{2/\gamma} \right\}+1, \]
where $t_0 = \sup \left\{t \in \mathbb R:t^\frac{\gamma}{2} \leq \log \frac{8tk^3}{\delta} \right\}$,
    for which
    \begin{enumerate}
        \item if the seller's true pessimistic regret $\WstEr(c_0,\allocseq{x}) \leq r$, she passes with probability at least $1 - \delta$; and
        \item if her plausible pessimistic regret $\WstEr(\plcost, \allocseq{x}) \geq 2r$, then she fails with probability at least $1 - \delta$.
    \end{enumerate}
\end{corollary}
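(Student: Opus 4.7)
The plan is to obtain the corollary by combining \Cref{lem:aggregate_possible} with \Cref{thm:sample_complexity} via a plug-in estimator. First, I would run the distribution-estimation procedure from \Cref{lem:aggregate_possible} on the observed price sequence $\priceseq$ with slack parameter $\eps = T^{-\gamma}$ and failure budget $\delta/2$, yielding estimates $\{\hat{\pi}^t\}_{t=1}^T$ satisfying $\|\hat{\pi}^t - \pi^t\|_\infty \le \eta$ for every $t$, where $\eta := (4 T^{-\gamma}\log(4Tk/\delta))^{1/3}$. I would then feed $\{\hat{\pi}^t\}_{t=1}^T$ into the Steps~1--4 auditing method of \Cref{sec:general_algo} as if they were the true distributions, allocating the remaining failure probability $\delta/2$ to the guarantee of \Cref{thm:sample_complexity}.

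Next I would track how using $\hat{\pi}^t$ in place of $\pi^t$ perturbs the pessimistic regret estimate. Two pieces need to be controlled. \emph{(i) Support identification.} The method infers the support $C^t = \{p : \pi^t(p) > 0\}$ by thresholding $\hat{\pi}^t$. Using assumption~2 together with the conditions $T \ge t_0$ and $T \ge (4/\pimin^3)^{2/\gamma}$, I would verify that $\eta \le \pimin/2$, so that thresholding $\hat{\pi}^t$ at $\pimin/2$ recovers $C^t$ exactly and guarantees $\min_{p \in C^t}\hat{\pi}^t(p) \ge \pimin/2$. \emph{(ii) Regret estimation error.} I would expand
\[
\widetilde{R}^T(c) - \sum_p \max_q \Bigl(\tfrac{1}{T}\sum_t \hat{\pi}^t(p)\bigl[(q-c)\hat{x}^t(q) - (p-c)\hat{x}^t(p)\bigr]\Bigr)
\]
and, using that the outer weight changes by at most $\eta$ and that the propensity-score denominator $\pi^t(p^t)$ is replaced by $\hat{\pi}^t(p^t) \ge \pimin/2$, bound the additional perturbation by a quantity of order $k\,\pmax\,\eta / \pimin^2$. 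Applying the bound $T \ge \bigl(4(8\pmax k + r\pimin)^3/(r^3 \pimin^6)\bigr)^{2/\gamma}$ makes $\eta$ small enough to push this perturbation below $r/2$ deterministically on the high-probability event from \Cref{lem:aggregate_possible}.

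Finally, I would apply \Cref{thm:sample_complexity} with effective minimum support probability $\pimin/2$ and effective threshold $r/2$ (to absorb the plug-in error) at confidence $1 - \delta/2$; the third branch of the maximum, $T \ge (16k^2/r^2)\log(8k^2/\delta)(1/\pimin+1)^2\pmax^2$, matches the Theorem's sample complexity after these substitutions up to constants. A union bound over the two $\delta/2$ failure events yields that if $\WstEr(c_0, \allocseq{x}) \le r$ the seller passes with probability at least $1 - \delta$, and if $\WstEr(\plcost, \allocseq{x}) \ge 2r$ she fails with probability at least $1 - \delta$.

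The main obstacle is step (ii): the propensity-score estimator $\hat{x}^t(p) = x^t(p^t)/\pi^t(p^t)\cdot \mathbf{1}[p=p^t]$ uses the true $\pi^t(p^t)$ as a denominator, so substituting $\hat{\pi}^t$ interacts multiplicatively with $1/\pimin$ and inflates the error; careful bookkeeping is needed to show the resulting $1/\pimin^2$ factor is precisely what the $\pimin^6$ in the second threshold absorbs. A secondary subtlety is that $t_0$ is defined implicitly, so I would verify that $T > t_0$ implies $T^{\gamma/2} > \log(8Tk^3/\delta)$, which is exactly what is needed to keep $\eta$ below the thresholds invoked in steps (i) and (ii) once $T$ is also polynomially large in $1/\pimin$ and $\pmax k/r$.
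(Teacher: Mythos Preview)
Your plan matches the paper's proof: estimate $\hat{\pi}^t$ via \Cref{lem:aggregate_possible}, plug it into the auditing method, bound the deterministic perturbation from the biased propensity scores (your step (ii)), combine with the concentration bound, and verify the algebra on $T$. Your support-identification step is in fact slightly more careful than the paper's, which simply asserts that supports coincide once $\rho<\pimin$.

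There is one imprecision worth flagging. You cannot apply \Cref{thm:sample_complexity} as a black box to the plugged-in estimator, because the theorem's concentration guarantee (through \Cref{lem:deviation-bound-of-regret-estimator}) relies on $p^t$ being drawn from the very distribution used in the propensity-score denominator; with $\hat{\pi}^t$ in place of $\pi^t$, the estimator is biased and the theorem does not directly apply. The paper therefore does not invoke \Cref{thm:sample_complexity}; instead it bounds $|\AggEr_{p,q}-\EstEr_{p,q}|$ deterministically on the good event (exactly your step (ii), yielding a term $\frac{\pmax\rho}{\pimin}(\frac{1}{\pimin-\rho}+1)$), applies \Cref{lem:deviation-bound-of-regret-estimator} to $|\EstEr_{p,q}-\WstEr_{p,q}|$ with the \emph{true} $\pi^t$, and then re-runs the union-bound and linearity-in-$c$ argument from the proof of \Cref{thm:sample_complexity} on the combined error. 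Your sketch contains all the ingredients for this; just route the concentration through \Cref{lem:deviation-bound-of-regret-estimator} rather than through the theorem statement.
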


In the following two sections, we study two technical details on the concepts and assumptions used in our auditing, which have practical implications in law. In \Cref{sec:external_manipulable} we provide justifications that the stronger calibrated regret must be used instead of weaker best-in-hindsight regret by arguing that best-in-hindsight regret fails to exclude supra-competitive algorithms even when useful side information is guaranteed not to exist. In \Cref{sec:unknown_costs}, we demonstrate that it is possible for algorithms to pass the audit by pretending to have higher costs than they actually do. For such scenarios, the rule of reason can be applied to bound the range of costs to those that are reasonable for the domain.

\section{Best-in-hindsight Regret is Manipulable}\label{sec:external_manipulable}
Much online learning literature develops algorithms to satisfy vanishing best-in-hindsight (a.k.a. external) regret. But \cite{hartline2024regulation} argue that the stronger vanishing calibrated regret is essential for the regulation of (non-)collusion. They do this with an example \emph{with side information} where a seller posts supra-competitive prices while having non-positive best-in-hindsight regret.

Intuitively, regulation using calibrated regret is necessary with side information because we need to test whether the seller best responds to such useful information. A natural question is: If side information is guaranteed not to exist, can we use the weaker notion of (vanishing) best-in-hindsight regret? We give a negative answer in this section. We show a stronger argument by demonstrating that even in environments without side information, algorithms can have vanishing best-in-hindsight regret while being \emph{manipulable} (to be defined shortly) to supra-competitive prices. Therefore, if we use the more permissive vanishing best-in-hindsight regret as the definition of non-collusion, then there could be more anti-competitive algorithms passing the audit  (recall \nameref{enum:less_bad}), even if side information is guaranteed not to exist.

Combining the above argument with the fact that (simultaneous) calibrated regret minimization leads to approximate correlated equilibria \citep{foster1997calibrated},
we conclude that it is important to require vanishing calibrated regret in the definition of non-collusion even when there is no side information.

We first define what we mean by manipulation, and then construct the manipulation example.
\begin{definition}[Manipulation]
    Fix an instance of imperfect price competition with two sellers and a property $\mathcal A$ of pricing algorithms. Suppose seller 2's algorithm satisfies property $\mathcal A$. A \emph{manipulation of $\mathcal A$-algorithm} is an algorithm of seller 1 such that, when played against seller 2's algorithm, yields an outcome with the following properties:
    \begin{enumerate}
        \item (supra-competitive) \emph{Both} sellers have payoffs \emph{strictly higher} than any correlated equilibrium.
        \item (undetectable) \emph{Both} seller 1 and seller 2's algorithms satisfy property $\mathcal A$.
    \end{enumerate}
\end{definition}
Next, we show a manipulation of no-best-in-hindsight-regret algorithms (i.e.\;$\mathcal A =$ no best-in-hindsight regret).
The manipulation implies that there exists a scenario where the non-collusion definition of vanishing best-in-hindsight regret fails to identify a supra-competitive scenario.

Consider a setting of dynamic imperfect price competitions with 2 sellers. Let $V_1, V_2$ be the highest (correlated) equilibrium payoff for seller 1 and seller 2, in which they play a joint distribution of prices $\pi^e$. If the equilibrium strategy is pure, then let $p_1^e, p_2^e$ be the prices they play. We consider a family of commonly used best-in-hindsight regret minimization algorithms defined as follows.
\begin{definition}[$\gamma$-mean-based learning, \cite{braverman2018selling}]
    Fix horizon $T$ and $\gamma=o(1)$. Let $\sigma_{p, t} = \sum_{s=1}^tu^s(p)$ be the cumulative utilities for posting price $p$ in the first $t$ rounds. A seller is \emph{$\gamma$-mean-based} if the seller posts price $p$ w.p.\;at most $\gamma$ as long as there exists another price $q$ such that $\sigma_{q, t}>\sigma_{p, t}+\gamma T$.
\end{definition}
Many vanishing best-in-hindsight-regret algorithms (e.g., exponential-weight algorithm for exploration and exploitation, follow the perturbed leader) are known to be $\gamma$-mean-based learning algorithms \citep{braverman2018selling}. In the following theorem, we show that a seller running a $\gamma$-mean-based learning algorithm is suspectible to manipulation into supra-competitive outcomes. The manipulator can also achieve no best-in-hindsight when doing such a manipulation.
\begin{theorem}\label{thm:manipulate}
    There exists an instance in which the environment is stationary across rounds, both sellers have no side information%
    , and seller 1 can achieve an outcome with the following properties against seller 2 who is using any $\gamma$-mean-based learning algorithm:
    \begin{enumerate}
        \item (supra-competitive) for $\Omega(T)$ rounds, both play $p_1>p_1^e, p_2>p_2^e$ in each round with high probability,\footnote{In our construction, the maximum-payoff correlated equilibrium is a pure equilibrium.}
        \item (no loss of payoff) receive expected payoff $V_1'T-o(T), V_2'T-o(T)$ where constants $V_1'>V_1, V_2'>V_2$, and
        \item (no best-in-hindsight regret) both seller 1 and seller 2 have vanishing best-in-hindsight regret.
    \end{enumerate}
\end{theorem}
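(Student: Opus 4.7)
The plan is to adapt the trigger-strategy idea to the mean-based manipulation framework of \citet{braverman2018selling,deng2019strategizing} and strengthen it to a \emph{two-sided} no-regret conclusion. I would construct a stationary two-seller instance of imperfect price competition with: (i) a unique pure Nash equilibrium $(p_1^e, p_2^e)$ whose payoff pair $(V_1, V_2)$ equals the maximum correlated-equilibrium payoffs of both sellers; (ii) a supra-competitive target pair $(p_1, p_2)$ with $p_1 > p_1^e$, $p_2 > p_2^e$, and stage payoffs $\pi_i(p_1, p_2) > V_i$; and (iii) a ``threat'' price $p_L$ in seller 1's price set such that, when seller 1 posts $p_L$, seller 2's stage-payoff ordering over her own prices is reversed: $\pi_2(p_L, p_2) > \pi_2(p_L, p)$ for every $p \ne p_2$. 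A small modification of a linear differentiated-goods Bertrand duopoly, augmented by the threat action $p_L$ for seller 1, supplies such an instance; crucially, because seller 2 has no alternative threat action, a single-shot correlated equilibrium cannot sustain the pair $(p_1, p_2)$.

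Seller 1's manipulation tracks the cumulative utilities $\sigma_{p,t} = \sum_{s \le t} u_2^s(p)$ that the $\gamma$-mean-based learner maintains. By default she posts $p_1$; as soon as the gap $\sigma_{p_2,t} - \sigma_{p,t}$ for some $p \ne p_2$ threatens to fall below $2\gamma T$, she switches to $p_L$ for the next round. Each punishment round widens the gap by a positive constant (by property (iii)), whereas each cooperation round shrinks it by only a bounded amount, so an amortized ledger argument shows the long-run punishment fraction is a constant $\delta < 1$ depending only on the stage-payoff gaps (not on $\gamma$), and the gap for every $p \ne p_2$ stays above $\gamma T$ throughout. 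By the $\gamma$-mean-based definition, seller 2 therefore posts $p_2$ with probability $1 - O(k\gamma) = 1 - o(1)$ in every round, so $(1 - \delta - o(1))T = \Omega(T)$ rounds realize the pair $(p_1, p_2)$ in expectation; seller $i$'s per-round expected payoff is $V_i' = (1-\delta)\pi_i(p_1, p_2) + \delta\pi_i(p_L, p_2)$, which is kept strictly above $V_i$ by calibrating the stage-game payoff margins so that the supra-competitive surplus dominates the punishment cost. This gives properties 1 and 2.

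For property 3, seller 2's side is immediate from the triggering rule: $\sigma_{p_2,T} - \sigma_{p,T} \ge -2\gamma T$ for every fixed $p$, so her best-in-hindsight regret is at most $2\gamma = o(1)$. Seller 1's side is delicate, because against seller 2's empirical distribution (which is concentrated on $p_2$) the fixed price $p_1$ yields $\pi_1(p_1, p_2) > V_1'$ per round while seller 1 actually collects only $V_1'$ per round; a naive manipulation would then leave seller 1 with $\Theta(\delta T)$ best-in-hindsight regret. I would close this gap by strengthening the construction so that $p_L$ is also (approximately) seller 1's stage best response against the seller-2 prices that arise during the brief transients after a switch to $p_L$: engineer an auxiliary price $q$ for seller 2 so that $(p_L, q)$ forms a mutual-near-best-response pair, and design the triggering rule so that the mean-based dynamics route seller 2 through $q$ during punishment. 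Then seller 1 is (approximately) best responding in both regimes and her best-in-hindsight regret is $o(T)$.

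The main obstacle is precisely the above two-sided near-best-response requirement, which is what separates the strengthened result from the one-sided manipulation of \citet{deng2019strategizing}: the threat price $p_L$ and the auxiliary price $q$ must jointly (a) reverse seller 2's undercut preference back toward $p_2$, (b) remain a mutual-best-response pair so that seller 1 never pays non-negligible regret when she invokes the threat, and (c) not create any new correlated equilibrium whose payoffs match or exceed $(V_1', V_2')$, which would destroy the supra-competitive interpretation. Once this three-way constraint is met by the stage game, the residual work is bookkeeping four simultaneous inequalities — mean-based gap, supra-competitive payoff, seller-1 regret, seller-2 regret — all with slack $o(1)$, and a standard Chernoff/Azuma concentration over the $o(T)$ deviation rounds upgrades the expectation statements to high-probability ones.
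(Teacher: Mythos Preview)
Your trigger-strategy approach has a structural gap at exactly the point you flag as ``delicate.'' The whole purpose of your trigger is to keep the $\gamma$-mean-based seller~2 on the single price $p_2$ for all but $o(T)$ rounds (indeed, your ledger argument maintains $\sigma_{p_2,t}-\sigma_{p,t}\ge\gamma T$ throughout, so seller~2 never leaves $p_2$). But then seller~1 faces an essentially constant opponent action $p_2$, and her best-in-hindsight benchmark is simply $\max_p \pi_1(p,p_2)$. Since seller~1's realized average payoff is the convex combination $(1-\delta)\pi_1(p_1,p_2)+\delta\pi_1(p_L,p_2)$, vanishing regret forces \emph{both} $p_1$ and $p_L$ to be (near-)best responses to $p_2$. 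Combined with your requirement that $p_2$ be seller~2's best response to the empirical mix $(1-\delta)p_1+\delta p_L$, the pair $\bigl((1-\delta)p_1+\delta p_L,\;p_2\bigr)$ is then a Nash equilibrium of the stage game, hence a correlated equilibrium, and its payoffs $(V_1',V_2')$ cannot strictly exceed the maximum CE payoffs $(V_1,V_2)$. So properties~2 and~3 are mutually incompatible under your scheme.

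Your proposed patch---routing seller~2 through an auxiliary price $q$ during punishment so that $(p_L,q)$ is a mutual near-best-response---does not rescue this: brief punishments cannot move a mean-based learner off $p_2$ (that is precisely the stickiness you exploit), so seller~2 is still on $p_2$ when seller~1 plays $p_L$, and the argument above applies. Making punishments long enough to actually shift seller~2's cumulative leader to $q$ would abandon the trigger structure entirely. The paper takes a completely different route: a \emph{non-adaptive} two-phase schedule in which seller~1 plays a low price for $T$ rounds and then a high price for $1.1T$ rounds, with a hand-crafted $4\times 4$ payoff matrix. Seller~2's mean-based dynamics follow along, so \emph{both} sellers change prices across phases. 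The crucial effect is that each seller's best fixed action in hindsight is a compromise across the two phases that is not optimal in either phase individually; the phase lengths and entries are tuned so that this compromise exactly matches the realized trajectory payoff for both sellers. This is how the paper decouples ``no best-in-hindsight regret'' from ``the outcome is a stage-game CE,'' and it is the idea your proposal is missing.
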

In words, \Cref{thm:manipulate} says that seller 2 can be manipulated into a significant number (constant fraction) of rounds with supra-competitive prices, while both sellers have no best-in-hindsight regret and get higher-than-equilibrium expected payoffs.

\section{On the Effect of Unknown Costs}\label{sec:unknown_costs}
In this section, we demonstrate that imprecision in cost information might significantly impact the efficacy of auditing from pricing data and present challenges to regulating collusion.

Recall that in the auditing problem, we assume that the cost of the seller is unknown to the regulator and the seller passes the audit as long as the \emph{plausible} regret is low (recall that the plausible regret is the minimum regret that a seller with cost in $\costrange$ can have).
Therefore, it is conceivable that algorithms could find a supra-competitive outcome for the true costs but pass the audit for different costs.
We ask whether such a phenomenon can actually occur.  More precisely, our question is

\begin{quote}
    Are there natural algorithms that when configured with the \emph{true} cost $c$, find outcomes that are considered competitive by auditing methods for a higher cost $c'$, while actually being supra-competitive for cost $c$?
\end{quote}

We find an affirmative answer to this question with two implications.\;First,\;the\;imprecise knowledge of the cost of the seller could dramatically affect the result of the audit. Even when configured with true costs, supra-competitive algorithms might still pass the test because the outcome is consistent with low regret for some other plausible costs.
Second, such behavior is hard to distinguish from genuine competition by only looking at the pricing data, which is a new challenge for auditing algorithmic collusion.

A mitigation to this problem is applying rule of reason: A further investigation of the sellers and market contexts is needed to narrow down the range of reasonable costs.

\begin{remark}
    To clarify the formulation of the question,  note that the following two scenarios are different:
\begin{enumerate}
    \item A seller deliberately inputs a fake cost to the pricing algorithm, causing the algorithm to post prices higher than what is optimal for her true cost.\label{enum:human_collusion}
    \item A seller truthfully reports her cost to the pricing algorithm, but the algorithm finds a supra-competitive outcome that looks competitive with a higher cost.\label{enum:algo_collusion}
\end{enumerate}
Since algorithmic collusion refers to collusion facilitated by the algorithm, we consider only scenario \ref{enum:algo_collusion} as algorithmic collusion.
\end{remark}

\paragraph{Details of the simulation and results.}
We demonstrate by simulation the possibility that algorithms can find a supra-competitive outcome that is seen as competitive for a high inferred plausible cost.\footnote{Code for replicating the simulation can be found at \url{https://github.com/wangchang327/collusion-cslaw25}.} The instance of dynamic imperfect price competition is the same as that in \citet{hartline2024regulation} and the configuration of the algorithm resembles that in \citet{banchio2022artificial}. We consider two sellers with costs $c_1=0.1$ and $c_2=0.2$, respectively. The grid of allowable price levels are from $0.05$ to $0.95$ with step size $0.05$. In each round, the buyer's valuations of the two sellers' goods are i.i.d.\;uniformly distributed over $[0, 1] \times [0, 1]$. The sellers post prices and the reward of the seller is the expected payment from the buyer, net her own cost. At the end of each round, each seller records her posted price, the demand, and her price distribution.
The competition lasts for $T=10^6$ rounds and the experiment with the same setup is repeated 100 times. Sellers compete with each other using the same simple stateless Q-learning algorithm as that in \citet{banchio2022artificial}, which only keeps track of the estimated continuation payoff for each price level $p \in \priceset$ in the Q-table. The hyper-parameters are chosen as follows. The Q-learning uses an $\varepsilon$-greedy strategy with optimistic initialization and exploration probability $\varepsilon=0.01$.  The Q-table is updated according to the standard rule
\[ Q^{t+1}(p)=(1 - \alpha)Q^{t+1}(p) + \alpha(u^t(p) + \gamma \max_{q \in \priceset} Q^t(q)) \quad \forall p \in \priceset, \]
where the learning rate $\alpha=0.05$ and discount factor $\gamma=0.99$. Note that this stateless Q-learning in competition is a mis-specified model.\;Thus the Q-table payoffs are statistically inaccurate and this inaccuracy is the channel by which the algorithms achieve supra-competitive prices.

In \Cref{fig:collusive_heatmap} we confirm that Q-learning exhibits supra-competitive behavior. Q-learning converges to strategies that both sellers post prices $(0.6, 0.65)$ in most cases (with payoffs 0.169 and 0.122).\;The $(0.6, 0.65)$ outcome is supra-competitive as its prices exceed the highest payoff correlated equilibrium of the game.\;The highest payoff equilibrium of the game is that the sellers post $0.5$ and $0.55$ (with payoffs 0.159 and 0.114), respectively. Note from the values of the payoffs that in this example, the payoff difference between the sellers' supra-competitive outcome and the equilibrium is small, thus the regret of the sellers is expected to be at a relatively small scale $10^{-2} \sim 10^{-3}$.

\begin{figure}[htb]
    \centering
    \includegraphics[width=0.6\textwidth]{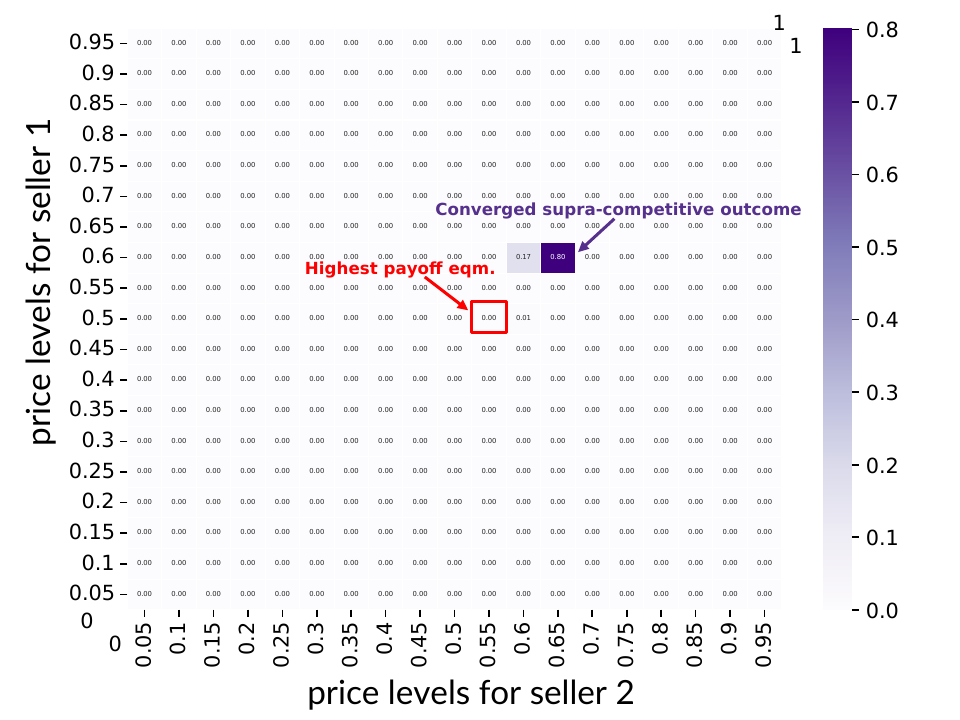}
     \caption{The frequencies of each pair of strategies in the last 10 rounds of the competition are shown in the heatmap. The highest payoff equilibrium is marked in red.}
     \label{fig:collusive_heatmap}
\end{figure}

After the transcripts are generated, we audit the transcripts.\;As a baseline, we use the prices posted by the opponent to compute the true expected regret of the seller.\;We then audit the transcripts using our auditing method.\;The auditing result is shown in \Cref{fig:audit_result_without_decay}.

\begin{figure}[htb]
    \centering
    \begin{subfigure}{0.49\textwidth}
        \includegraphics[width=\textwidth]{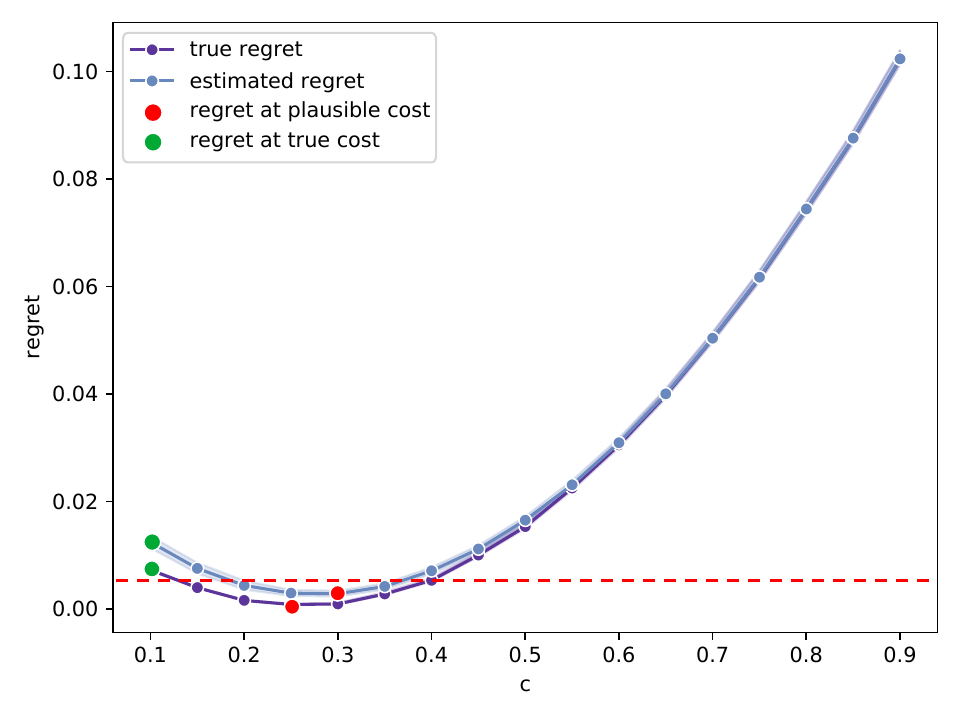}
        \caption{Range of costs $[0.1, 0.9]$}
    \end{subfigure}
    \hfill
    \begin{subfigure}{0.49\textwidth}
        \includegraphics[width=\textwidth]{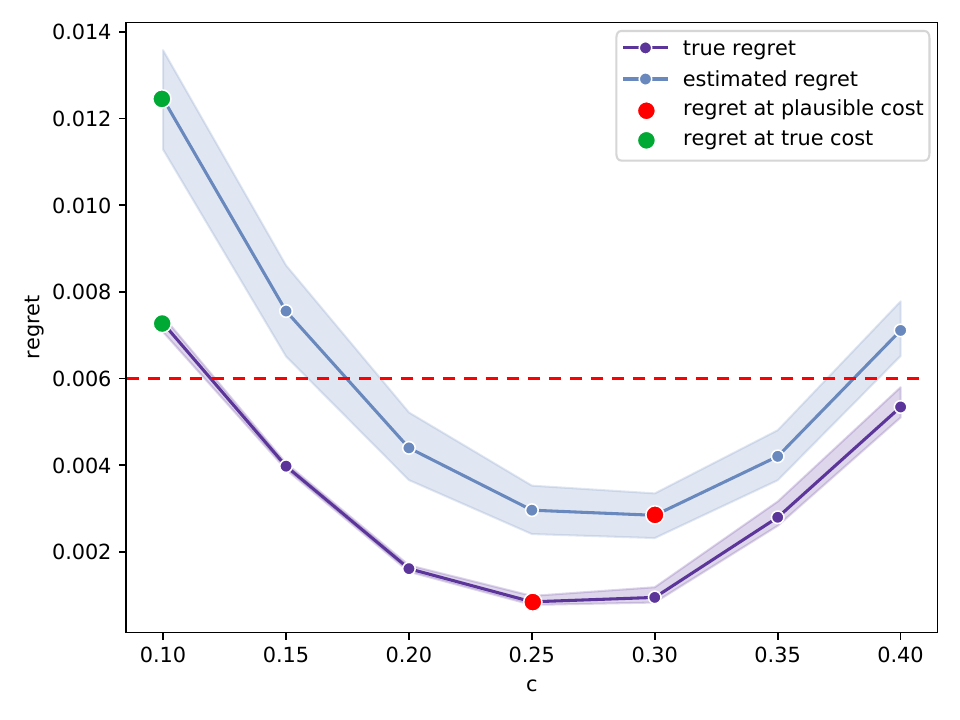}
        \caption{Range of costs $[0.1, 0.4]$}
    \end{subfigure}
     \caption{The true regret and estimated regret plotted against different assumed costs of seller 1, with a zoomed-view on the right. The red dashed line shows the example threshold $6\times10^{-2}$ for estimated regret. The seller maintains exploration. The true regret is the regret estimated with the knowledge of the ground truth demand, while the estimated regret is the regret estimated using our auditing method.}
     \label{fig:audit_result_without_decay}
\end{figure}
\footnotetext{That of seller 2 is similar.}

\begin{figure}[htb]
    \centering
    \includegraphics[width=0.6\linewidth]{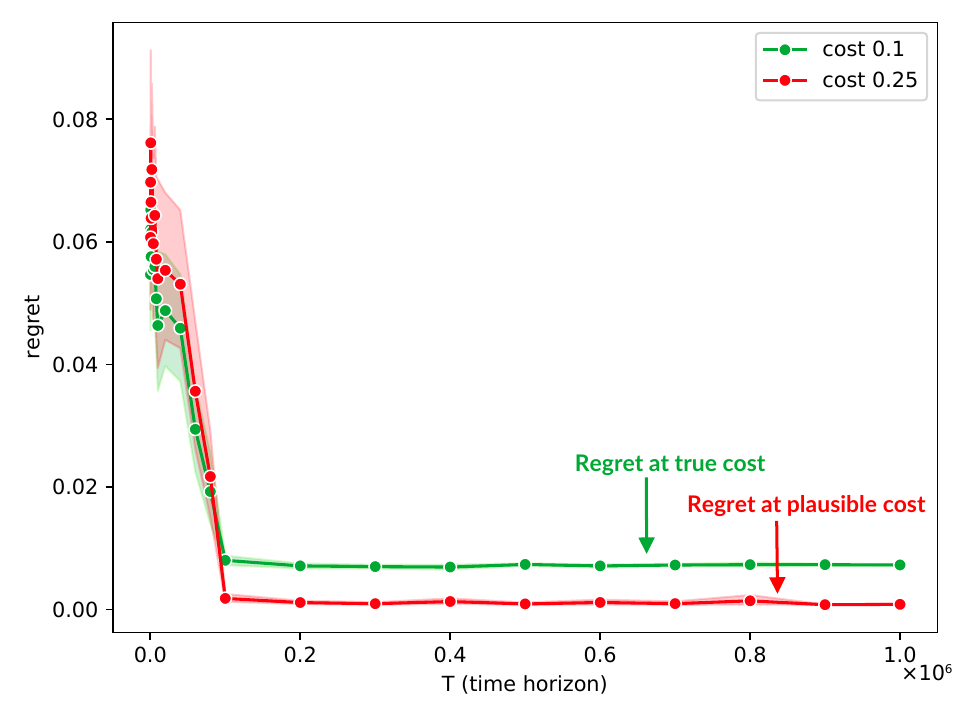}
    \caption{The true regret at the true cost $c=0.1$ and at the plausible cost $c'=0.25$ plotted with different time horizons.}
    \label{fig:regret_t}
\end{figure}

In \Cref{fig:audit_result_without_decay} we observe that the plausible estimated regret of the seller is around $3 \times 10^{-3}$ at $c'=0.3$.\footnote{Recall that in this example, the payoff difference between the sellers' supra-competitive outcome and the equilibrium is small, thus the regret of the sellers is at a relatively small scale $10^{-2} \sim 10^{-3}$.} The true cost of the seller is in fact $c=0.1$, for which the estimated regret is around $12 \times 10^{-3}$. If we set the auditing threshold to be, say, $6 \times 10^{-3}$, then the algorithm should be flagged as supra-competitive as the regret at $c=0.1$ is more than $6 \times 10^{-3}$. But since the plausible regret is approximately $3 \times 10^{-3}<6 \times 10^{-3}$, the seller will not be flagged. A similar pattern also applies to the true regret curve (i.e.,\;even when the regret estimation is exact) at $c'=0.25$ .
In \Cref{fig:regret_t}, we also plot the true regret at the true cost $c=0.1$ and at the plausible cost $c'=0.25$ with different time horizons. We observe that
both the regret at the true cost and the plausible cost converge at $T=10^6$.
The regret at the true cost exceeds the regret at the plausible cost by a significant margin. Excluding the loss due to exploration, the regret at the true cost is not vanishing and the regret at the plausible cost is vanishing.

From the observations above, we conclude that although we know the Q-learning algorithm converges to supra-competitive prices, if we assume that the viable costs of the seller include $c'=0.3$, then the Q-learning algorithm turns out to have a low plausible calibrated regret. In other words, when configured with a low cost (such as $c=0.1$), the Q-learning algorithm finds outcomes that look competitive for a higher cost $c'$ (such as $c'=0.3$). The results of this experiment confirm that it is possible for algorithms to find a supra-competitive outcome that is seen as competitive at a high inferred plausible cost.

\section{Conclusion}
In this work, we explore several questions around auditing (non-)collusion for pricing algorithms from data based on the framework of \citet{hartline2024regulation}. We motivate and interpret our study under the legal doctrines of antitrust analysis. %
We develop a refined auditing method that relaxes the previous requirement that a pricing algorithm must use fully-supported price distributions to be auditable by testing the pessimistic regret, thus allowing more efficiency-enhancing algorithms to be auditable.
We give an example demonstrating that requiring vanishing-calibrated regret as the non-collusion definition is essential to eliminate more collusion-promoting algorithms even without side information. Our experiment results show that under the current auditing framework, a regulator with very limited knowledge about a seller's cost may be unable to detect supra-competitive behavior of the seller, which suggest a rule of reason can be useful in antitrust analysis.
Open questions include: (a) designing a test for small learning rates to remove the need to record price distributions and (b) improving the sample complexity bound.

\bibliographystyle{plainnat}
\bibliography{references}

\begin{thebibliography}{29}
\providecommand{\natexlab}[1]{#1}
\providecommand{\url}[1]{\texttt{#1}}
\expandafter\ifx\csname urlstyle\endcsname\relax
  \providecommand{\doi}[1]{doi: #1}\else
  \providecommand{\doi}{doi: \begingroup \urlstyle{rm}\Url}\fi

\bibitem[Arunachaleswaran et~al.(2024)Arunachaleswaran, Collina, Kannan, Roth, and Ziani]{arunachaleswaran2024algorithmic}
Eshwar~Ram Arunachaleswaran, Natalie Collina, Sampath Kannan, Aaron Roth, and Juba Ziani.
\newblock Algorithmic collusion without threats.
\newblock \emph{arXiv preprint arXiv:2409.03956}, 2024.

\bibitem[Asker et~al.(2022)Asker, Fershtman, and Pakes]{asker2022artificial}
John Asker, Chaim Fershtman, and Ariel Pakes.
\newblock Artificial intelligence, algorithm design, and pricing.
\newblock In \emph{AEA Papers and Proceedings}, volume 112, pages 452--456. American Economic Association 2014 Broadway, Suite 305, Nashville, TN 37203, 2022.

\bibitem[Asker et~al.(2023)Asker, Fershtman, and Pakes]{asker2023impact}
John Asker, Chaim Fershtman, and Ariel Pakes.
\newblock The impact of artificial intelligence design on pricing.
\newblock \emph{Journal of Economics \& Management Strategy}, 2023.

\bibitem[Banchio and Mantegazza(2023)]{banchio2023adaptive}
Martino Banchio and Giacomo Mantegazza.
\newblock Adaptive algorithms and collusion via coupling.
\newblock In \emph{Proceedings of the 24th ACM Conference on Economics and Computation}, EC '23, page 208, New York, NY, USA, 2023. Association for Computing Machinery.
\newblock ISBN 9798400701047.
\newblock \doi{10.1145/3580507.3597726}.
\newblock URL \url{https://doi.org/10.1145/3580507.3597726}.

\bibitem[Banchio and Skrzypacz(2022)]{banchio2022artificial}
Martino Banchio and Andrzej Skrzypacz.
\newblock Artificial intelligence and auction design.
\newblock In \emph{Proceedings of the 23rd ACM Conference on Economics and Computation}, pages 30--31, 2022.

\bibitem[Braverman et~al.(2018)Braverman, Mao, Schneider, and Weinberg]{braverman2018selling}
Mark Braverman, Jieming Mao, Jon Schneider, and Matt Weinberg.
\newblock Selling to a no-regret buyer.
\newblock In \emph{Proceedings of the 2018 ACM Conference on Economics and Computation}, pages 523--538, 2018.

\bibitem[Calvano et~al.(2020)Calvano, Calzolari, Denicolo, and Pastorello]{calvano2020artificial}
Emilio Calvano, Giacomo Calzolari, Vincenzo Denicolo, and Sergio Pastorello.
\newblock Artificial intelligence, algorithmic pricing, and collusion.
\newblock \emph{American Economic Review}, 110\penalty0 (10):\penalty0 3267--97, 2020.

\bibitem[Chassang and Ortner(2023)]{chassangRegulatingCollusion2023}
Sylvain Chassang and Juan Ortner.
\newblock Regulating {{Collusion}}.
\newblock \emph{Annual Review of Economics}, 15\penalty0 (1):\penalty0 177--204, 2023.
\newblock \doi{10.1146/annurev-economics-051520-021936}.
\newblock URL \url{https://doi.org/10.1146/annurev-economics-051520-021936}.

\bibitem[Chassang et~al.(2022)Chassang, Kawai, Nakabayashi, and Ortner]{chassang2022robust}
Sylvain Chassang, Kei Kawai, Jun Nakabayashi, and Juan Ortner.
\newblock Robust screens for noncompetitive bidding in procurement auctions.
\newblock \emph{Econometrica}, 90\penalty0 (1):\penalty0 315--346, 2022.

\bibitem[{City and County of San Francisco}(2024)]{sfban2024}
{City and County of San Francisco}.
\newblock Administrative code - ban on automated rent-setting.
\newblock \url{https://sfgov.legistar.com/LegislationDetail.aspx?ID=6789588&GUID=89BA28F7-B3B8-44D0-806B-FFDC5FC29015}, 2024.
\newblock Accessed: 2024-09-27.

\bibitem[Deng et~al.(2019)Deng, Schneider, and Sivan]{deng2019strategizing}
Yuan Deng, Jon Schneider, and Balasubramanian Sivan.
\newblock Strategizing against no-regret learners.
\newblock \emph{Advances in neural information processing systems}, 32, 2019.

\bibitem[{Department of Justice}(2024)]{dojrealpage2024}
{Department of Justice}.
\newblock Justice department sues realpage for algorithmic pricing scheme that harms millions of american renters.
\newblock \url{https://www.justice.gov/opa/pr/justice-department-sues-realpage-algorithmic-pricing-scheme-harms-millions-american-renters}, 2024.
\newblock Accessed: 2024-08-31.

\bibitem[\emph{Addyston Pipe \& Steel Co.}~v. \emph{United States}(1899)]{1899addyston}
\emph{Addyston Pipe \& Steel Co.}~v. \emph{United States}, 1899.

\bibitem[\emph{Brooke Group Ltd.}~v. \emph{Brown \& Williamson Tobacco Corp.}(1993)]{1993brooke}
\emph{Brooke Group Ltd.}~v. \emph{Brown \& Williamson Tobacco Corp.}, 1993.

\bibitem[\emph{Chicago Board of Trade}~v. \emph{United States}(1918)]{1918chicago}
\emph{Chicago Board of Trade}~v. \emph{United States}, 1918.

\bibitem[\emph{Standard Oil Co. of New Jersey}~v. \emph{United States}(1911)]{1911standard}
\emph{Standard Oil Co. of New Jersey}~v. \emph{United States}, 1911.

\bibitem[Fish et~al.(2024)Fish, Gonczarowski, and Shorrer]{fish2024algorithmic}
Sara Fish, Yannai~A Gonczarowski, and Ran~I Shorrer.
\newblock Algorithmic collusion by large language models.
\newblock \emph{arXiv preprint arXiv:2404.00806}, 2024.

\bibitem[Foster and Vohra(1997)]{foster1997calibrated}
Dean~P Foster and Rakesh~V Vohra.
\newblock Calibrated learning and correlated equilibrium.
\newblock \emph{Games and Economic Behavior}, 21\penalty0 (1-2):\penalty0 40, 1997.

\bibitem[Gavil(2011)]{gavil2011moving}
Andrew~I Gavil.
\newblock Moving beyond caricature and characterization: The modern rule of reason in practice.
\newblock \emph{S. Cal. L. Rev.}, 85:\penalty0 733, 2011.

\bibitem[Hansen et~al.(2021)Hansen, Misra, and Pai]{hansen2021frontiers}
Karsten~T Hansen, Kanishka Misra, and Mallesh~M Pai.
\newblock Frontiers: Algorithmic collusion: Supra-competitive prices via independent algorithms.
\newblock \emph{Marketing Science}, 40\penalty0 (1):\penalty0 1--12, 2021.

\bibitem[Harrington(2018)]{harrington2018developing}
Joseph~E Harrington.
\newblock Developing competition law for collusion by autonomous artificial agents.
\newblock \emph{Journal of Competition Law \& Economics}, 14\penalty0 (3):\penalty0 331--363, 2018.

\bibitem[Harrington(2022)]{harrington2022effect}
Joseph~E Harrington.
\newblock The effect of outsourcing pricing algorithms on market competition.
\newblock \emph{Management Science}, 68\penalty0 (9):\penalty0 6889--6906, 2022.

\bibitem[Hartline et~al.(2024)Hartline, Long, and Zhang]{hartline2024regulation}
Jason~D Hartline, Sheng Long, and Chenhao Zhang.
\newblock Regulation of algorithmic collusion.
\newblock In \emph{Proceedings of the Symposium on Computer Science and Law}, pages 98--108, 2024.

\bibitem[Hovenkamp(2018)]{hovenkamp2018rule}
Herbert Hovenkamp.
\newblock The rule of reason.
\newblock \emph{Fla. L. Rev.}, 70:\penalty0 81, 2018.

\bibitem[{\emph{In re Text Messaging Antitrust Litigation}}(2015)]{InreTextMessagingAntitrustLitigation2015}
{\emph{In re Text Messaging Antitrust Litigation}}.
\newblock \emph{F. 3d}, 782\;\penalty0 (No. 14-2301):\penalty0 867, 2015.

\bibitem[Klein(2021)]{klein2021autonomous}
Timo Klein.
\newblock Autonomous algorithmic collusion: Q-learning under sequential pricing.
\newblock \emph{The RAND Journal of Economics}, 52\penalty0 (3):\penalty0 538--558, 2021.

\bibitem[Nekipelov et~al.(2015)Nekipelov, Syrgkanis, and Tardos]{Nekipelov2015}
Denis Nekipelov, Vasilis Syrgkanis, and \'Eva Tardos.
\newblock Econometrics for learning agents.
\newblock In \emph{Proceedings of the Sixteenth ACM Conference on Economics and Computation}, EC '15, page 1–18, New York, NY, USA, 2015. Association for Computing Machinery.
\newblock ISBN 9781450334105.
\newblock \doi{10.1145/2764468.2764522}.
\newblock URL \url{https://doi.org/10.1145/2764468.2764522}.

\bibitem[Sawyer(2019)]{sawyer2019us}
Laura~Phillips Sawyer.
\newblock \emph{US antitrust law and policy in historical perspective}.
\newblock Harvard Business School, 2019.

\bibitem[Watkins(1989)]{watkins1989learning}
Christopher John Cornish~Hellaby Watkins.
\newblock \emph{Learning from delayed rewards}.
\newblock PhD thesis, King's College, Cambridge United Kingdom, 1989.

\end{thebibliography}

\newpage
\section*{Appendices}
\appendix
\section{Regret over Full Randomness of the Algorithm Cannot be Estimated Consistently}\label{appendix:full_randomness}
As we have discussed in \Cref{def:swap_regret} and \Cref{rmk:full_randomness}, the regret we are estimating is not the true regret of the algorithm itself. It might be tempting to consider the latter which is a stronger notion and is the usual regret definition in the analysis of an algorithm. In this appendix, we show that it is not possible to estimate consistently even with fully-supported price distributions. Details in this appendix require the knowledge of definitions from \Cref{sec:regret_estimation}.

Observe that for online learning algorithms, the distributions $\distseq$ are potentially random since the sampling distribution at $t+1$, $\pi^{t+1}$, depends on the realization of the price $p^t$ sampled at $t$. This motivates the question of estimating the regret with respect to the full randomness of the algorithm. We first define an online posted-pricing algorithm as follows:

\begin{definition}[Online posted-pricing algorithm]
    Call $\Transc^t = \{x^s(p^s), p^s, \pi^s\}_{s=1}^t$ a \emph{transcript} of length $t$. The set of all the length-$t$ transcripts is denoted as $\Hist^t$. Let $\Hist^0 = \varnothing$. An \emph{online posted-pricing algorithm} is a mapping $\mathcal M:\bigcup_{t \geq 0} \ \Hist^t \to \Delta(\priceset)$. Given the ground-truth $\allocseq{x}$, a $T$-round realization of algorithm $\mathcal M$ is a transcript of length $T$ where
    \[ \pi^{t+1} = \mathcal M(\{x^s(p^s), p^s, \pi^s\}_{s=1}^t), \quad p^{t+1} \sim \pi^{t+1} \quad (0 \leq t \leq T). \]
\end{definition}

Note that given the ground-truth $\allocseq{x}$ and $T$, a posted-pricing algorithm can be identified as a joint distribution (that depends on the ground-truth allocation) over $\{p^t, \pi^t\}_{t=1}^T$. The distribution of $p^t$ conditioned on $\pi^t$ is $\pi^t$.

Using this identification, we can write the expected regret of algorithm $\mathcal M$ given the ground-truth allocation $\allocseq{x}$ for a seller with cost $c$ as
\[ 
  \TrueEr(\Mech, \allocseq{x}, c) = \max_{\sigma: \priceset \to \priceset} \EExp{\{p^t, \pi^t\}_{t=1}^T \sim \Mech(\allocseq{x})}{\frac{1}{T}\sum_{t=1}^Tu(\sigma(p^t),x^t)-u(p^t,x^t)}.
\]

Unfortunately, it is impossible to estimate this regret consistently, even if all the price distributions are fully-supported.

\begin{proposition}\label{prop:full_randomness_impossible}
    There does not exist a regret estimator such that
    \[ \lim_{T \to \infty} \PPr{\{p^t, \pi^t\}_{t=1}^T \sim \Mech}{|\Algo(\Transc^T) - \TrueEr(\Mech, \allocseq{x}, c)| \geq \eps} = 0, \]
    even if we restrict attention to algorithms with fully-supported price distributions.
\end{proposition}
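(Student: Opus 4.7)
The plan is an indistinguishability (coupling) argument. I construct two online posted-pricing algorithms $\Mech_1, \Mech_2$ with fully-supported price distributions and a common ground-truth allocation $\allocseq{x}$ such that (i) their algorithm-level regrets $\TrueEr(\Mech_1, \allocseq{x}, c)$ and $\TrueEr(\Mech_2, \allocseq{x}, c)$ differ by a constant in the $T \to \infty$ limit, while (ii) there is a constant-probability event under which the transcript produced by $\Mech_2$ is distributed identically to the transcript produced by $\Mech_1$. Since any regret estimator is a function of the transcript alone, no single estimator can be $\eps$-close to both algorithm-level regrets on their respective distributions once $\eps$ is smaller than half the gap.

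Because the algorithm definition in \Cref{appendix:full_randomness} is a deterministic map from histories to distributions, all randomness must come from the sampled prices themselves. I would use the first-round draw $p^1$ as the only source of branching. Fix $\priceset = \{p_1, p_2\}$ with $p_1 < p_2$, stationary allocations $x^t(p_1) = 1$ and $x^t(p_2) = 0$, and cost $c = 0$, so per-round utilities are $1$ for $p_1$ and $0$ for $p_2$. Both algorithms use the same $\pi^1 = (1/2, 1/2)$. For $t \geq 2$, $\Mech_1$ always plays $\pi^t = (1-\delta, \delta)$ regardless of $p^1$, whereas $\Mech_2$ plays $\pi^t = (1-\delta, \delta)$ on the event $\{p^1 = p_1\}$ and $\pi^t = (\delta, 1-\delta)$ on the event $\{p^1 = p_2\}$. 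All distributions are fully-supported for any $\delta \in (0, 1/2)$.

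The main computation evaluates the two algorithm-level regrets under the definition where the swap acts on the realized price $p^t$ and the expectation ranges over the full randomness of the algorithm. For $\Mech_1$ the optimal swap sends both prices to $p_1$, giving expected per-round gain that tends to $\delta$ (only the rare-$p_2$ rounds contribute linearly). For $\Mech_2$ the same swap yields a linear contribution from the branch $\{p^1 = p_2\}$, which has probability $1/2$: there the remaining rounds draw $p^t = p_2$ with probability $1-\delta$, each worth $1$ under the swap, so the expected per-round gain tends to $1/2$. One must also verify by case analysis that no other swap beats this for $\Mech_2$ (the identity, the constant-to-$p_2$ swap, and the exchange swap all turn out to be smaller or nonpositive in the limit). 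Hence $\TrueEr(\Mech_1, \allocseq{x}, 0) \to \delta$ and $\TrueEr(\Mech_2, \allocseq{x}, 0) \to 1/2$.

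Finally, conditioned on $\{p^1 = p_1\}$, which has probability exactly $1/2$ under both algorithms, $\Mech_2$ runs the same code as $\Mech_1$ from round $2$ onward, and the whole transcript, including every recorded $\pi^t$, is identically distributed under the two algorithms. Suppose for contradiction that a consistent estimator $\Algo$ exists. Consistency on $\Mech_1$ forces $\PPr{\Transc^T \sim \Mech_1}{|\Algo(\Transc^T) - \delta| < \eps} \to 1$, which also holds conditioned on $\{p^1 = p_1\}$; transferring along the coupling gives $\PPr{\Transc^T \sim \Mech_2}{|\Algo(\Transc^T) - \delta| < \eps} \geq 1/2 - o(1)$. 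Separately, consistency on $\Mech_2$ requires $\PPr{\Transc^T \sim \Mech_2}{|\Algo(\Transc^T) - 1/2| < \eps} \to 1$. For $\eps < (1/2 - \delta)/2$ the two events $\{|\Algo - \delta| < \eps\}$ and $\{|\Algo - 1/2| < \eps\}$ are disjoint, so their probabilities cannot both be at least $1/2 - o(1)$ and $1 - o(1)$ respectively, yielding the contradiction. The hardest step to execute carefully is the swap-maximization case analysis for $\Mech_2$, ensuring that the "both to $p_1$" swap really attains the $1/2$ lower bound and that no exchange-type swap happens to match it.
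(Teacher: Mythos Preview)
Your argument is correct and rests on the same core idea as the paper's proof---an algorithm that branches on hidden randomness makes its algorithm-level regret unidentifiable from a single transcript---but you carry it out along a somewhat different route that is, in a couple of respects, cleaner.

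The paper's proof takes two arbitrary algorithms $\Mech,\Mech'$ with regrets $r_1>r_2$, forms a composite $\overline{\Mech}$ by a ``secret coin flip,'' and argues that under $\overline{\Mech}$ the estimator must concentrate near $r_1$ or $r_2$ each with probability $1/2$, while $\TrueEr(\overline{\Mech})$ lies strictly between them. Two informalities are swept under the rug: the secret coin flip is not literally an instance of the paper's own definition of an online posted-pricing algorithm (a deterministic map $\Hist^t\to\Delta(\priceset)$), and the strict inequality $r_2<\TrueEr(\overline{\Mech})<r_1$ is asserted without justification. Your construction sidesteps both issues: you implement the coin flip inside the model by using the first price draw $p^1$, so $\Mech_2$ is a bona fide algorithm with fully-supported distributions; and your contradiction does not require locating $\TrueEr(\overline{\Mech})$ between two values, since you instead transfer the $\Mech_1$-concentration statement to the $\{p^1=p_1\}$ branch of $\Mech_2$ and collide it directly with $\Mech_2$-consistency. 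The cost is that you must exhibit concrete $\allocseq{x}$ and check the swap maximization, whereas the paper's argument is construction-free.

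Two cosmetic points worth tightening when you write it out: fix numeric prices (e.g.\ $p_1=1$, $p_2=2$) so that ``utility $1$ at $p_1$'' is literally true, and when you invoke consistency, write the concentration around the finite-$T$ regret $\TrueEr(\Mech_i,\allocseq{x},c)$ and then pass to the limits $\delta$ and $1/2$ via a triangle-inequality step, rather than plugging the limiting values in directly.
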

The intuition of the above proposition is that by looking at the transcript, the regret estimator has limited information about how the price distributions are generated, so it cannot consistently make inferences about the regret over the randomness of the price distributions.
\begin{proof}
Suppose not. Pick two algorithms $\mathcal M, \mathcal M'$, cost $c$ and $\allocseq{x}$ such that $\TrueEr(\Mech, \allocseq{x}, c)=r_1, \TrueEr(\Mech', \allocseq{x}, c)=r_2$, and $r_1-r_2 = \delta>0$. Create a new algorithm $\overline{\mathcal M}$ as follows: Before the first round, the algorithm secretly flips a fair coin and secretly decides to run $\mathcal M$ or $\mathcal M'$ for infinite time according to the result of the flip. Note that the regret estimator sees an infinite sequence of transcript, generated by either $\mathcal M$ or $\mathcal M'$, w.p. $\frac 12$ each. Therefore,
\[ \lim_{T \to \infty} \PPr{\{p^t, \pi^t\}_{t=1}^T \sim \overline{\Mech}}{|\Algo(\Transc^T) - r_1| \leq \eps} = \lim_{T \to \infty} \PPr{\{p^t, \pi^t\}_{t=1}^T \sim \overline{\Mech}}{|\Algo(\Transc^T) - r_2| \leq \eps} = \frac 12. \]
But note that $r_2 < \TrueEr(\overline{\Mech}, \allocseq{x}, c)<r_1$. Set $\eps_0<\frac 14\min\{\TrueEr(\overline{\Mech})-r_1, r_2-\TrueEr(\overline{\Mech})\}$ we then have
\[ \lim_{T \to \infty} \PPr{\{p^t, \pi^t\}_{t=1}^T \sim \overline{\Mech}}{|\Algo(\Transc^T) - \TrueEr(\overline{\Mech}, \allocseq{x}, c)| \geq \eps_0}=0, \]
a contradiction.
\end{proof}

\section{Omitted Proofs}\label{sec:oproofs}

\subsection{Proof of \texorpdfstring{\Cref{prop:reduction}}{Proposition 3.2}}
\begin{proof}
    The estimation algorithm: Note that the regret of any transcript is bounded by $\pmax$. We discretize the interval with step size $\eps$ and do $\ell=\pmax/\eps$ audits with thresholds $r_1=\eps, r_2=2\eps, \dotsc, r_\ell=\pmax$ \emph{simultaneously} using $\mathcal A$. Each threshold auditing returns either $[0, r_i]$ or $[r_i+\eps, \pmax]$, indicating which interval the true regret is in. Call this interval $J_i$. Let $J = \bigcap_{i=1}^kJ_i$. If $J$ contains only a single point or $|J| \leq \eps$, then we output the single point or the midpoint of $J$. Otherwise, we draw a guess uniformly random from $[0, \pmax]$.
    
    Analysis: Consider the ``good event'' that all the $\pmax/\eps$ audits satisfy the property in the statement of \Cref{prop:reduction}. Suppose the true regret is in $[j \eps, (j+1) \eps]$. Then conditioned on the good event, for all audits with threshold $r \geq (j+1) \eps$ they output S, and for all audits with $r \leq (j-1) \eps$ they output G, and for the audit with threshold $r=j \eps$, its output can be arbitrary. Observe that in this case the intersection of the outputs of audits except that with threshold $j \eps$ is exactly $[j \eps, (j+1) \eps]$. The final intersection is either $\{j \eps\}, \{(j+1) \eps\}$, or $[j \eps, (j+1) \eps]$ depending on the output of the audit with threshold $j \eps$ (which we do not have guarantee with). But all three possible outputs are within $\eps$ error of the true regret. We conclude the proof by noting that union bound gives that the good event happens with probability at least $1 - \frac{\pmax f(\eps, T)}{\eps}$.
\end{proof}

\subsection{Proof of \texorpdfstring{\Cref{prop:previous_consistency}}{Proposition 3.5}}
\begin{proof}
From the proof of Lemma A.1 in \cite{hartline2024regulation}, we have for any $\eps >0 $
   \[
   P_{\geq \eps}^T = \PPr{\priceseq \sim \distseq}{|\Algo(\Transc^T) - \TrueEr(\allocseq{x}, c)| \geq \eps} \leq 2k^2\exp\left(-\frac{\eps^2}{2k^2\sum_{t=1}^T d^2}\right)
   \]
   where $k=|\priceset|$ and 
   \[
   d = \frac{1}{T}\left(\frac{1}{\underline{\pi}^T}+1\right)\pmax.
   \]
   By assumption on the transcript, we have
   \[
   P_{\geq \eps}^T \leq 2k^2\exp\left(-\frac{\eps^2T}{2k^2\pmax^2 (\underline{\pi}^T+1)^2}\right) = o(T).
   \]
   Therefore, we have $\lim_{T\to \infty}P_{\geq \eps}^T = 0$.
\end{proof}

\subsection{Proof of \texorpdfstring{\Cref{prop:consistency_negative}}{Proposition 3.6}}
\begin{proof}
    Fix any regret estimator $\mathcal A$. Assume for contradiction that two-sided consistency holds. 
    Sort the prices in $\priceset$ as $p_1<p_2 < \dotsb < p_k$.
    
    Consider the following example. For all $1 \leq t \leq T$ we have $\pi^t(p_k)=0$ and $\pi^t(p_{k-1}) \neq 0$. Pick an arbitrary positive constant $a \leq 1$ and let $\allocseq{x}$ be such that for all $1 \leq t \leq T$, $x^t(p_i)= a$ and $x^t(p_k)=0$ for $1 \leq i<k$. Let $\allocseq{z}$ be another sequence of allocations such that
    \[ z^t(p)=x^t(p) \quad \text{for} \quad p=p_1, \dotsc, p_{k-1} \quad \text{and} \quad z^t(p_k)=x^t(p_{k-1})=a \]
    for all $1 \leq t \leq T$.

    By the assumption that two-sided consistency always holds, we have
    \[ \mathcal A(\{x^t(p^t), p^t, \pi^t\}_{t=1}^T) \overset{P}{\to}\TrueEr(c, \allocseq{x}), \quad \mathcal A(\{z^t(p^t), p^t, \pi^t\}_{t=1}^T) \overset{P}{\to}\TrueEr(c, \allocseq{z}). \]

    We claim that by construction the random variables $\mathcal A(\{x^t(p^t), p^t, \pi^t\}_{t=1}^T)$ and $\mathcal A(\{z^t(p^t), p^t, \pi^t\}_{t=1}^T)$ are equal w.p. 1. Therefore $\TrueEr(c, \allocseq{x}) = \TrueEr(c, \allocseq{z})$ holds. In fact,
    \begin{align*}
        \Pr_{p_t \sim \pi_t}[\{x^t(p^t), p^t, \pi^t\}_{t=1}^T \neq \{z^t(p^t), p^t, \pi^t\}_{t=1}^T] &\leq \sum_{t=1}^T \Pr_{p^t \sim \pi^t}[z^t(p^t) \neq x^t(p^t)]
        \\&= \sum_{t=1}^T \Pr[p^t=p_k] = \sum_{t=1}^T \pi^t(p_k)=0,
    \end{align*}
    which means the transcripts are the same w.p. 1, so any deterministic algorithm's outputs are the same w.p. 1.

    Next, we aim to show that $\TrueEr(c, \allocseq{z}) > \TrueEr(c, \allocseq{x})$, so we get a contradiction. Since $\pi^t(p_k)=0$ and $x^t(p)=z^t(p)$ for all $p \neq p_k$ and $1 \leq t \leq T$ we have
    \[ \frac 1T \sum_{t=1}^T \sum_p \pi^t(p)(p-c)x^t(p) = \frac 1T \sum_{t=1}^T \sum_p \pi^t(p)(p-c)z^t(p). \]
    So it suffices to show that
    \begin{equation}\label{eqn:proof_uneq_unbiased}
        \max_\sigma \frac 1T \sum_{t=1}^T \sum_p \pi^t(p) (\sigma(p)-c)z^t(\sigma(p)) > \max_\sigma \frac 1T \sum_{t=1}^T \sum_p \pi^t(p) (\sigma(p)-c)x^t(\sigma(p)).
    \end{equation}
    Note that the optimizer of the LHS is $\tau(p)=p_k$ for all $p \in \priceset$ and the optimizer of the RHS is $\rho(p)=p_{k-1}$. \Cref{eqn:proof_uneq_unbiased} follows since
    \[ \sum_{t=1}^T \sum_p \pi^t(p) (p_k-c)z^t(p_k) - \sum_{t=1}^T \sum_p \pi^t(p) (p_{k-1}-c)x^t(p_{k-1})
     = \sum_{t=1}^T \sum_p a \pi^t(p)(p_k-p_{k-1})>0. \]
     This completes the proof.
\end{proof}
\begin{remark}
    We note that this example further implies that there exists an algorithm with vanishing regret, but from its transcript, there is no two-sided consistent estimator for its regret. In fact, consider the following simple scenario: Let the opponent always play price $p_k$ and the ground truth allocation be $\allocseq{x}$.  Assume the seller being audited is best responding to her opponent. We cannot consistently (one-sided) estimate her regret according to the proof.
\end{remark}

\subsection{Proof of \texorpdfstring{\Cref{lem:indistinguishable}}{Proposition 3.9}}
We first observe the following lemma.
\begin{lemma}\label{lemma:indistinguishable_transcript}
    Fix the sequence of price distributions $\distseq$, for any $\allocseq{x} \sim _{\distseq} \allocseq{z}$ and deterministic regret estimator $\mathcal A$
    \[ \PPr{p^t \sim \pi^t}{\Algo(\{x^t(p^t), p^t, \pi^t\}_{t=1}^T) = \Algo(\{z^t(p^t), p^t, \pi^t\}_{t=1}^T)}=1. \]
\end{lemma}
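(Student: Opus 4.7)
The plan is to argue that the two transcripts $\{x^t(p^t), p^t, \pi^t\}_{t=1}^T$ and $\{z^t(p^t), p^t, \pi^t\}_{t=1}^T$ are in fact equal with probability one under the sampling $p^t \sim \pi^t$, and then invoke determinism of $\mathcal A$ to finish.

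First I would unpack the definition of indistinguishability: by hypothesis $x^t$ and $z^t$ share the common support $C^t = \{p : \pi^t(p) > 0\}$ and agree pointwise on $C^t$. Next I would observe that for each fixed $t$, since $\pi^t$ assigns zero mass outside $C^t$, we have $\Pr_{p^t \sim \pi^t}[p^t \in C^t] = 1$, and consequently $\Pr_{p^t \sim \pi^t}[x^t(p^t) = z^t(p^t)] = 1$. This is the essentially trivial but necessary probabilistic input.

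Then I would combine the per-round equalities into a joint statement by a union bound:
\[ \PPr{p^t \sim \pi^t}{\exists\, t: x^t(p^t) \neq z^t(p^t)} \leq \sum_{t=1}^T \PPr{p^t \sim \pi^t}{p^t \notin C^t} = 0, \]
so with probability one the two transcripts coincide coordinate-by-coordinate (the price and distribution coordinates are identical by construction, and the allocation coordinate agrees on the event just described). Finally, since $\mathcal A$ is a deterministic mapping on $\Hist^T$, equal inputs force equal outputs, yielding the claim.

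There is essentially no obstacle here; the only subtlety worth flagging is the pedantic distinction that the randomness in the statement is over the price sampling only, which is exactly the randomness that places $p^t$ in $C^t$ almost surely. This lemma will then feed into the proof of \Cref{lem:indistinguishable}, where one chooses $\allocseq{z}$ as an approximate maximizer in the definition of $\WstEr$ and applies one-sided consistency to both $\allocseq{x}$ and $\allocseq{z}$ simultaneously, using the lemma to conclude that the estimator's output is the same on both with probability one.
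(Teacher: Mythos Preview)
Your proposal is correct and follows essentially the same approach as the paper: a union bound over rounds shows $\Pr[\exists t:x^t(p^t)\neq z^t(p^t)]\leq\sum_t\Pr[p^t\notin C^t]=0$, and determinism of $\mathcal A$ finishes. Your write-up is more explicit than the paper's, but the argument is identical.
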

\begin{proof}
    Similar to the proof of \Cref{prop:consistency_negative}
    \begin{align*}
        \Pr_{\{p^t, \pi^t\}_{t=1}^T \sim \Mech}[\{x^t(p^t), p^t, \pi^t\}_{t=1}^T &\neq \{z^t(p^t), p^t, \pi^t\}_{t=1}^T]
        \\&\leq \sum_{t=1}^T \Pr_{p^t \sim \pi^t}[z^t(p^t) \neq x^t(p^t)]
        \\&= \sum_{t=1}^T \Pr[p^t \notin C^t]=0. && (\text{because $\allocseq{x} \sim _{\distseq} \allocseq{z}$})
    \end{align*}
    So the transcripts are the same conditioned on $\distseq$ and the result follows.
\end{proof}

\begin{proof}[Proof of \Cref{lem:indistinguishable}]
    Let $\allocseq{x}_*$ be a sequence of allocations that achieves $\WstEr(c, \allocseq{x})$. By the one-sided consistency requirement
    \[ \lim_{T \to \infty} \Pr_{p^t \sim \pi^t}[\mathcal A(\{x_*^t(p^t), p^t, \pi^t\}_{t=1}^T) < \WstEr(c,\allocseq{x}) - \eps]=0. \]
    But \Cref{lemma:indistinguishable_transcript} implies that $\mathcal A(\{x^t(p^t), p^t, \pi^t\}_{t=1}^T) = A(\{x_*^t(p^t), p^t, \pi^t\}_{t=1}^T)$ w.p. 1, and the proposition follows.
\end{proof}

\subsection{Proof of \texorpdfstring{\Cref{thm:sample_complexity}}{Theorem 4.1}}
To prove the theorem, we present a few useful lemmas. We first characterize the location of the pessimistic allocations. We then show that the algorithm can consistently estimate the pessimistic regret.

\begin{lemma}
    Fix any $\distseq$. Let $[\allocseq{x}]$ be an equivalence class under the relation $\sim_{\distseq}$. Consider the following construction: Pick any $\allocseq{x} \in [\allocseq{x}]$ and set
    \[
    z_*^t(p) = \begin{cases}
        x^t(p) & (\text{if $p \in C^t$}), \\
        x^t(p')  & (\text{if } p' = \max\{r \leq p: r \in C^t\} \text{exists}), \\
        1  & (\text{otherwise}).
    \end{cases}
    \]

    Then $\allocseq{z}_*$ is well-defined, $\allocseq{z}_* \in [\allocseq{x}]$, and $\TrueEr(c, \allocseq{z}_*) = \sup_{\allocseq{z} \sim_{\distseq} \allocseq{x}} \TrueEr(c, \allocseq{z})$.
\end{lemma}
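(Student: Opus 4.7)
The plan is to verify the three assertions in sequence; well-definedness and membership are quick, while the supremum identity is the substantive step.

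For well-definedness, I would observe that the piecewise construction of $z_*^t(p)$ only ever references $x^t$ at prices in $C^t$ (either at $p$ itself, at the nearest lower element $p' \in C^t$, or nowhere). Since any two representatives of $[\allocseq{x}]$ agree pointwise on $C^t$ for every $t$ by the definition of indistinguishability, the output of the construction does not depend on which representative was picked. Membership $\allocseq{z}_* \in [\allocseq{x}]$ is then immediate from the first case of the definition: $z_*^t(p) = x^t(p)$ whenever $p \in C^t$, which is exactly what indistinguishability asks for.

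For the supremum identity I would use the calibrated-regret decomposition $\TrueEr(c, \allocseq{z}) = \sum_p \max_q \TrueEr_{p, q}(c, \allocseq{z})$, which separates the maximization over swaps $\sigma$ into independent per-$p$ choices. Since $\pi^t(p) > 0$ forces $p \in C^t$, the ``stay at $p$'' term $(p - c)\, x^t(p)$ inside $\TrueEr_{p, q}$ is the same for every $\allocseq{z} \in [\allocseq{x}]$, reducing the problem for each $p$ to the comparison
\[
\max_q \,\tfrac{1}{T}\sum_t \pi^t(p)\,(q - c)\, z^t(q) \;\leq\; \max_q \,\tfrac{1}{T}\sum_t \pi^t(p)\,(q - c)\, z_*^t(q).
\]
The key pointwise bound is $z^t(q) \leq z_*^t(q)$ for every $q$, $t$, and indistinguishable $\allocseq{z}$; this follows from monotonicity together with $z^t \equiv x^t$ on $C^t$, since for $q \notin C^t$ with $p' = \max\{r \leq q : r \in C^t\}$ existing, non-increasingness forces $z^t(q) \leq z^t(p') = x^t(p')$, and otherwise the trivial bound $z^t(q) \leq 1$ suffices.

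Plugging the pointwise bound into the per-$p$ comparison shows that whenever $q \geq c$, the inequality holds term by term in $t$. To finish, I would argue that the outer $\max_q$ may be restricted to $q \geq c$ without loss: for $q < c$ the swap contribution $(q - c)\, z^t(q)$ is non-positive for any $\allocseq{z}$, while the identity choice $q = p$ always contributes $\TrueEr_{p, p} = 0$, so an optimal $q$ may be taken in $\{p\} \cup \{q : q \geq c\}$, where the upper-envelope inequality applies. The main obstacle I anticipate is cleanly formalizing this last WLOG reduction under the implicit auditing assumption $c \leq \pmax$; without such a bound on $c$, an optimal swap could in principle prefer a lower envelope over the upper envelope that $\allocseq{z}_*$ uses, so the exchange argument relies on the fact that costs in $\costrange$ do not exceed the range of viable prices in the seller's support.
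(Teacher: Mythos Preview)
Your approach is correct and closely parallels the paper's own proof: both reduce to the pointwise domination $z_*^t(q) \geq z^t(q)$ and then transfer this to the maximum over swaps. The paper works directly with the swap map $\sigma$, while you use the equivalent per-$p$ decomposition $\TrueEr = \sum_p \max_q \TrueEr_{p,q}$, but the substance is identical.

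You are in fact more careful than the paper on one point. The paper asserts
\[
\sum_{t,p} \pi^t(p)(\sigma(p)-c)\,z_*^t(\sigma(p)) \;\geq\; \sum_{t,p} \pi^t(p)(\sigma(p)-c)\,x^t(\sigma(p))
\]
``because $z_*^t(p) \geq x^t(p)$,'' without addressing the sign of $\sigma(p)-c$; taken literally this step fails whenever the swap target is below cost. You correctly flag that the argument needs $q \geq c$ and that this requires something like $c \leq \pmax$. Your stated justification for the WLOG restriction is a little muddled, however: the comparison to $\TrueEr_{p,p} = 0$ does not by itself rule out an optimal $q < c$, since the subtracted term $-(p-c)z^t(p)$ can be strictly positive when $p<c$. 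The clean one-line argument is that for any $q < c \leq q'$ one has $(q'-c)z^t(q') \geq 0 \geq (q-c)z^t(q)$ termwise, hence $\TrueEr_{p,q'}(c,\allocseq{z}) \geq \TrueEr_{p,q}(c,\allocseq{z})$ for every $\allocseq{z}$. Under $c \leq \pmax$ such a $q'$ always exists, so the maximizing $q$ may be taken in $\{q:q\geq c\}$ and the rest of your argument goes through cleanly.
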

\begin{proof}
    Since any $\allocseq{x}$ in the equivalence class agrees on the prices that are in $C^t$ for all $1 \leq t \leq T$, and we also set $\allocseq{z}_*$'s allocation there the same, we have that $\allocseq{z}_*$ is well-defined and $\allocseq{z}_* \in [\allocseq{x}]$. To see that $\allocseq{z}_*$ achieves the supremum, note that
    \begin{multline}\label{eqn:proof_worst_case_allocation}
        \frac 1T \sum_{t=1}^T \sum_p \pi^t(p)\left[(\sigma(p)-c)z_*^t(\sigma(p))-(p-c)z_*^t(p)\right] \\ \geq \frac 1T \sum_{t=1}^T \sum_p \pi^t(p)\left[(\sigma(p)-c)x^t(\sigma(p))-(p-c)x^t(p)\right]
    \end{multline}
    for any $\allocseq{x} \in [\allocseq{x}]$ and any mapping $\sigma$. In fact, since $\allocseq{z}_* \sim_{\distseq} \allocseq{x}$ we have
    \[ \sum_{t=1}^T \sum_p \pi^t(p)(p-c)z_*^t(p)] = \sum_{t=1}^T \sum_p \pi^t(p)(p-c)x^t(p), \]
    and
    \[ \sum_{t=1}^T \sum_p \pi^t(p)(\sigma(p)-c)z_*^t(\sigma(p)) \geq \sum_{t=1}^T \sum_p \pi^t(p)(\sigma(p)-c)x^t(\sigma(p)) \]
    because $z_*^t(p) \geq x^t(p)$ for every $p \in \priceset$ and $1 \leq t \leq T$, by construction of $\allocseq{z}$.

    The lemma now follows from \Cref{eqn:proof_worst_case_allocation} and the fact that if $f(\sigma) \geq g(\sigma)$ everywhere, then $\max_\sigma f(\sigma) \geq \max_\sigma g(\sigma)$.
\end{proof}

\begin{lemma}\label{lem:deviation-bound-of-regret-estimator}
    Given a sequence of allocations $\allocseq{x}$. Let $k = |\priceset|$ be the number of price levels and $\pmax$ be the highest price. Given cost $c$, conditioned on observing the sequence of price distributions $\distseq$, for any fixed sequence of allocations $\distseq$, 
    \[
    \Pr[|\EstEr_{p, q}(c,\allocseq{x})-\WstEr_{p, q}(c,\allocseq{x})| \geq \eps] \leq 2\exp\left(-\frac{\eps^2}{2\sum_{t=1}^T d_t^2}\right)
    \]
    where 
    \[
     d^t = \frac{1}{T}\left(\frac{1}{\min_{p'\in C^t}\pi^t(p')}+1\right)\pmax.
    \]
\end{lemma}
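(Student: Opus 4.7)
The plan is to apply Hoeffding's inequality to the independent summands comprising $T \cdot \EstEr_{p,q}(c,\allocseq{x})$ once we condition on $\distseq$. Setting $Y^t := \pi^t(p)\bigl[(q-c)\hat x^t(q) - (p-c)\hat x^t(p)\bigr]$, we have $\EstEr_{p,q}(c) = \tfrac 1T \sum_{t=1}^T Y^t$. Since conditioning on $\distseq$ fixes each $\pi^t$, and $Y^t$ is a deterministic function of the single draw $p^t \sim \pi^t$, the mutual independence of the $p^t$'s given $\distseq$ transfers to the $Y^t$'s.

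Next I will identify the conditional expectation as $\WstEr_{p,q}(c, \allocseq{x})$. A direct propensity-score calculation gives $\mathbb E[\hat x^t(r) \mid \pi^t] = z_*^t(r)$ in all three cases ($r \in C^t$ yields $x^t(r)$; $r \notin C^t$ with a fallback $p' \in C^t$ yields $x^t(p')$; $r \notin C^t$ with no fallback yields the cap $1$), where $z_*^t$ is the pessimistic allocation from the preceding lemma. Summing over $t$ and dividing by $T$ gives $\mathbb E[\EstEr_{p,q}(c) \mid \distseq] = \WstEr_{p,q}(c, \allocseq{x})$ by definition of the pessimistic regret decomposition.

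The main technical step, and the one I expect to be the main obstacle, is the uniform bound $|Y^t| \leq \pmax\bigl(1 + 1/\min_{p' \in C^t}\pi^t(p')\bigr)$. If $p \notin C^t$ then $\pi^t(p) = 0$ and so $Y^t = 0$; thus assume $p \in C^t$. By construction, $\hat x^t(r)$ takes the value $x^t(p^t)/\pi^t(p^t)$ only when the pessimistic fallback chain from $r$ lands at $p^t$, and otherwise equals $0$ or the cap $1$. The delicate case is when the chains from both $p$ and $q$ land at $p^t$: there the factor $\hat x^t(p^t)$ pulls out of the difference, and since $p \in C^t$ forces $p^t = p$, we get $|Y^t| \leq \pi^t(p)\,(q-p)\,x^t(p)/\pi^t(p) \leq \pmax$. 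In the remaining cases, at most one of the two summands invokes the $1/\pi^t(p^t)$ propensity factor and the other is bounded by $\pmax$ from the cap or is $0$, yielding the stated bound. Avoiding an over-crude $2\pmax/\min_{p' \in C^t}\pi^t(p')$ requires exploiting precisely this aligned-chains cancellation.

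Finally, applying Hoeffding's inequality to the independent random variables $Y^t/T \in [-d^t, d^t]$ yields
\[ \Pr\bigl[|\EstEr_{p,q}(c,\allocseq{x}) - \WstEr_{p,q}(c, \allocseq{x})| \geq \eps\bigr] \leq 2 \exp\!\left(-\frac{\eps^2}{2 \sum_{t=1}^T (d^t)^2}\right), \]
which is exactly the claim.
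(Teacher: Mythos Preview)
Your proof is correct and follows the paper's approach: verify that each per-round summand is unbiased for the pessimistic term, bound its range by $d^t$, and apply a concentration inequality. Two small differences are worth noting. First, the paper applies Azuma's inequality to the centered differences $\widetilde r^t_{p,q}-\overline r^t_{p,q}$ rather than Hoeffding's to $Y^t/T$; this yields the identical bound but does not rely on the $p^t$'s being mutually independent once $\distseq$ is fixed, which is the safer formulation when $\pi^{t+1}$ may depend adaptively on earlier draws (conditioning on the full realized $\distseq$ then does not literally make the $p^t$'s independent). Second, your case analysis for the increment bound---in particular the aligned-chains cancellation that prevents a crude $2\pmax/\min_{p'\in C^t}\pi^t(p')$---is considerably more explicit than the paper's one-line justification, which simply cites $\hat x^t(p')\le 1/\pi^t(p')$ and $p,q\le\pmax$.
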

We defer the proof of this lemma after presenting the proof of \Cref{thm:sample_complexity}.

Now we state the proof of \Cref{thm:sample_complexity}.
\begin{proof}
   \begin{enumerate}
       \item Starting from \Cref{lem:deviation-bound-of-regret-estimator}, we claim that for any fixed $c$
       \[
       \Pr[\EstEr(c, \allocseq{x}) - \WstEr(c, \allocseq{x}) \geq \eps] \leq 2k^2\exp\left(-\frac{\eps^2}{2k^2\sum_{t=1}^T d_t^2}\right).
       \]
       To bound the probability of
     \[
        \Pr\left[\EstEr(c, \allocseq{x}) - \WstEr(c, \allocseq{x}) \geq \eps\right] = \Pr\left[\sum_p \max_q \EstEr_{p, q}(c,\allocseq{x}) - \sum_p \max_q \WstEr_{p, q}(c,\allocseq{x})\geq \eps\right],
    \]
    note that
     \begin{align*}
     &\Pr\left[\sum_p \max_q \EstEr_{p, q}(c,\allocseq{x}) - \sum_p \max_q \WstEr_{p, q}(c,\allocseq{x}) \geq \eps\right] 
     \\\leq &\Pr\left[\exists p \in \priceset, \exists p'\in \priceset,\EstEr_{p,p'}(c,\allocseq{x}) - \WstEr_{p,p'}(c,\allocseq{x}) \geq \frac{\eps}{|\priceset|} \right]
     \\\leq & k^2\exp\left(-\frac{\eps^2}{2k^2\sum_{t=1}^T d_t^2}\right) &&      \mbox{(union bound)}.
    \end{align*}
       Plugging in the lower bound of $T$, $\eps=\delta^T$ and $c=c_0$, and $\delta^T$, we have 
       $\Pr[\EstEr(c_0, \allocseq{x}) - \WstEr(c_0,\allocseq{x}) \geq \delta^T] \leq \alpha$. By definition of $\estplcost$ and $c_0$, we have $\EstEr(\estplcost, \allocseq{x}) \leq \EstEr(c_0, \allocseq{x})$. 
        When the seller satisfies $\min_{p\in C^t,1\leq t \leq T}\pi^t(p) \geq \pimin$, we have $\delta^T \leq r/2$.
       Therefore, when $\WstEr(c_0,\allocseq{x}) \leq r$, 
       we have 

       \begin{align*}
       &\Pr[\EstEr(\estplcost, \allocseq{x}) +\delta^T \geq 2r] \\ \leq 
       & \Pr[\EstEr(\estplcost, \allocseq{x}) +\delta^T \geq r + 2\delta^T] \\
       \leq &\Pr[\EstEr(\estplcost, \allocseq{x}) +\delta^T \geq \WstEr(c_0,\allocseq{x}) + 2\delta^T]
       \\
           \leq
           & \Pr[\EstEr(\estplcost, \allocseq{x}) - \WstEr(c_0,\allocseq{x}) \geq \delta^T] \\
           \leq
           & \Pr[\EstEr(c_0, \allocseq{x}) - \WstEr(c_0,\allocseq{x}) \geq \delta^T] && \text{(since $\EstEr(\estplcost, \allocseq{x}) \leq \EstEr(c_0, \allocseq{x})$)} \\
           \leq &\alpha
       \end{align*}
       
       and the seller passes with 
       probability at least $1-\alpha$.
       \item Note that since $\estplcost$ is a random variable, we cannot use the same argument for fixed $c_0$ to bound the probability
       $\Pr[\EstEr(\estplcost, \allocseq{x}) - \WstEr(\plcost,\allocseq{x}) \leq -r]$ by plugging in $c=\estplcost$. Instead, observe that 
       \begin{align*}
       &\Pr\left[\EstEr(\estplcost, \allocseq{x}) - \WstEr(\estplcost,\allocseq{x}) \leq -r\right]
       \\\leq 
       &\Pr\left[\exists c,\EstEr(c, \allocseq{x}) - \WstEr(c,\allocseq{x}) \leq -r\right]
       \\=
       &\Pr\left[\exists c,\sum_p \max_q \EstEr_{p, q}(c,\allocseq{x}) - \sum_p \max_q \WstEr_{p, q}(c,\allocseq{x}) \leq -r\right]
       \\= 
       &\Pr\left[\exists c,\sum_p \max_q \WstEr_{p, q}(c,\allocseq{x})- \sum_p \max_q \EstEr_{p, q}(c,\allocseq{x}) \geq r\right]
       \\\leq
       &\Pr\left[\exists c,\exists p \in \priceset, \exists p'\in \priceset,\WstEr_{p,p'}(c,\allocseq{x}) - \EstEr_{p,p'}(c,\allocseq{x}) \geq \frac{r}{|\priceset|} \right]
       \\=
       &\Pr\left[\exists c,\exists p \in \priceset, \exists p'\in \priceset,\EstEr_{p,p'}(c,\allocseq{x}) - \WstEr_{p,p'}(c,\allocseq{x}) \leq -\frac{r}{|\priceset|} \right].
       \end{align*}
       Taking union bound over $p \in \priceset$ and $q' \in \priceset$, we have
        \begin{align*}   
        &\Pr\left[\exists c,\exists p \in \priceset, \exists q' \in \priceset, \EstEr_{p, p'}(c,\allocseq{x}) -  \WstEr_{p, p'}(c,\allocseq{x}) \leq -\frac{r}{|\priceset|}\right] 
        \\\leq
        & \sum_{p\in\priceset}\sum_{p'\in\priceset}\Pr\left[\exists c, \EstEr_{p, p'}(c,\allocseq{x}) -  \WstEr_{p, q'}(c,\allocseq{x}) \leq -\frac{r}{|\priceset|}\right].
       \end{align*}
       Observe that $\EstEr_{p, p'}(c,\allocseq{x}) -  \WstEr_{p, p'}(c,\allocseq{x})$ is linear in $c$ hence, when $c\in \costrange$,
       \begin{align*}
       &\Pr\left[\exists c, \EstEr_{p,p'}(c,\allocseq{x}) -  \WstEr_{p, p'}(c,\allocseq{x}) \leq -\frac{r}{|\priceset|}\right]\\\leq 
       &\Pr\left[\EstEr_{p, q'}(\ubar{c},\allocseq{x}) -  \WstEr_{p, q'}(\ubar{c},\allocseq{x}) \leq -\frac{r}{|\priceset|} \cup \EstEr_{p, q'}(\bar{c},\allocseq{x}) -  \WstEr_{p, q'}(\bar{c},\allocseq{x}) \leq -\frac{r}{|\priceset|}\right]
       \\\leq
       & \Pr\left[\EstEr_{p, q'}(\ubar{c},\allocseq{x}) -  \WstEr_{p, q'}(\ubar{c},\allocseq{x}) \leq -\frac{r}{|\priceset|}\right]+ \Pr\left[\EstEr_{p, q'}(\bar{c},\allocseq{x}) -  \WstEr_{p, q'}(\bar{c},\allocseq{x}) \leq -\frac{r}{|\priceset|}\right].
       \end{align*}
       \end{enumerate}
       Combining \Cref{lem:deviation-bound-of-regret-estimator}, we get
       \[
       \Pr\left[\EstEr(\estplcost, \allocseq{x}) - \WstEr(\estplcost,\allocseq{x}) \leq -r\right]
       \leq 2k^2\exp\left(-\frac{\eps^2}{2k^2\sum_{t=1}^T d_t^2}\right).
       \]
       Plugging in the bound of $T$ and $\eps = \delta^T$, we have 
       $\Pr\left[\EstEr(\estplcost, \allocseq{x}) - \WstEr(\estplcost,\allocseq{x}) \leq -\delta^T\right] \leq \alpha$. Therefore when $\WstEr(\plcost,\allocseq{x}) \geq 2r$, by definition of $\plcost$ and $\estplcost$, we have
       \begin{align*}&\Pr\left[\EstEr(\estplcost, \allocseq{x}) +\delta^T \leq 2r\right] \\ \leq  &\Pr\left[\EstEr(\estplcost, \allocseq{x}) - 2r \leq -\delta^T\right] \\
       \leq
    &\Pr\left[\EstEr(\estplcost, \allocseq{x}) - \WstEr(\plcost,\allocseq{x}) \leq -\delta^T\right] \\ 
    \leq
    &\Pr\left[\EstEr(\estplcost, \allocseq{x}) - \WstEr(\estplcost,\allocseq{x}) \leq -\delta^T\right] && \text{(since $\WstEr(\plcost, \allocseq{x}) \leq \WstEr(\estplcost, \allocseq{x})$)} \\ 
       \leq &\alpha.
    \end{align*}
       Hence the seller fails with probability at least $1-\alpha$.
\end{proof}

\subsubsection{Proof of \texorpdfstring{\Cref{lem:deviation-bound-of-regret-estimator}}{Lemma B.4}}
\begin{proof}
    Let 
    \[
    \widetilde{r}_{p,q}^t = \frac{1}{T}\left( \pi^t(p)\left[(q-c)\hat{x}^t(q)-(p-c)\hat{x}^t(p)\right]\right), \overline{r}_{p,q}^t = \frac{1}{T}\left( \pi^t(p)\left[(q-c)x_\ast^t(q)-(p-c)x_\ast^t(p)\right]\right),
    \]
    we claim that $\EExp{p^t \sim \pi^t}{\widetilde{r}_{p,q}^t} = \overline{r}_{p,q}^t$. In fact, first note that when $p \not \in C^t$, we have $\pi^t(p)=0$ and hence $\widetilde{r}_{p,q}^t=\overline{r}_{p,q}^t=0$ and the equality holds. Therefore, we assume from now that $\pi^t(p) \neq 0$. By definition of $\hat{x}^t$,  when $p' = \max\{r \leq p:r \in C^t\}$ exits, we have
    \begin{equation}
        \hat{x}^t(p) = 
        \begin{cases}
            x^t(p')/\pi^t(p') & p'=p^t, \\
                0 & \text{otherwise}.
        \end{cases},
    \end{equation} which implies that 
    $\EExp{p^t \sim \pi^t}{\hat{x}^t(p')} = x^t(p')$. By definition of $x_\ast^t$, we have $x^t(p') = x_\ast^t(p)$. 
     When $p'$ does not exist we have $\hat{x}^t(p)=x_\ast^t(p)=1$.
     Applying the same reasoning we also have $\EExp{p^t \sim \pi^t}{\hat{x}^t(q)} = x_\ast^t(q)$. Hence, by linearity of expectation we get $\EExp{p^t \sim \pi^t}{\widetilde{r}_{p,q}^t} = \overline{r}_{p,q}^t$. Since $p \leq \pmax, q\leq \pmax$, and $\hat{x}^t(p') \leq 1/\pi^t(p')$ for all $p' \in C^t$, we also have that
    \[
    |\widetilde{r}_{p,q}^t - r_{p,q}^t| \leq \frac{1}{T}\left(\frac{1}{\min_{p'\in C^t}\pi^t(p')}+1\right)\pmax.
    \]
    Note that 
    \[
    \EstEr_{p, q}(c,\allocseq{x}) - \WstEr_{p, q}(c) = \sum_{t=1}^T(\widetilde{r}_{p,q}^t - \overline{r}_{p,q}^t).
    \]
    Applying Azuma’s inequality, we get
    \[
    \Pr[\EstEr_{p, q}(c,\allocseq{x}) - \WstEr_{p, q}(c,\allocseq{x}) \geq \eps] \leq  k^2\exp\left(-\frac{\eps^2}{\sum_{t=1}^T d_t^2}\right).
    \]
    
    By a similar argument on the other side, we also have
    \[
     \Pr[\EstEr_{p, q}(c,\allocseq{x}) - \WstEr_{p, q}(c,\allocseq{x}) \leq -\eps] \leq k^2\exp\left(-\frac{\eps^2}{\sum_{t=1}^T d_t^2}\right).
    \]
    We get the desired result by combining two sides of the inequality.
\end{proof}

\subsection{Proof of \texorpdfstring{\Cref{lem:aggregate_possible}}{Proposition 4.3}}
\begin{proof}
    Let the price set be $\priceset$ where $|\priceset|=k$. We use the aggregation method as stated in the introduction. Let the window length be $L$. For each $\pi_i$, we use the empirical distribution of prices in the window centered at round $i$ to approximate $\pi_i$. Let the price distributions in the window be $F_1, \dotsc, F_L$ and $p_1 \sim F_1, \dotsc, p_L \sim F_L$. Then for any fixed $p \in \mathcal P$ and $j$ we have $\mathbb E[\mathbf 1_{\{p_j \leq p\}}]=F_j(p)$. Azuma--Hoeffding's inequality now gives
    \[ \Pr \left[\left|\frac 1L \sum_{j=1}^L(\mathbf 1_{\{p_j \leq p\}}-F_j(p)) \right| \geq t \right] \leq 2 \exp(-2Lt^2). \]
    With a union bound over all price levels, with probability at least $1-2k \exp(-2Lt^2)$ the aggregated empirical distribution $\tilde F$ satisfies $\|\tilde F - \frac 1L \sum_{j=1}^LF_j\|_\infty \leq t$. By the assumption of the lemma and the triangle inequality we have $\|F_i - \frac 1L \sum_{j=1}^LF_j\|_\infty \leq L \eps$ for all $i$. This implies $\|\tilde F - F_i\| \leq t+L \eps$ w.p. at least $1-2k \exp(-2Lt^2)$. With a union bound on all $T$ rounds, w.p. at least $1-2kT \exp(-2Lt^2)$, the aggregated estimator is ($t+L \eps$)-close to the true distribution in the $\ell_\infty$ distance in every round. Setting $\delta = 2kT \exp(-2Lt^2)$ gives the error bound
    \[ \frac{\log \frac{2Tk}{\delta}}{2t^2} \epsilon+t. \]
    The lemma follows by setting $\frac{\log \frac{2Tk}{\delta}}{2t^2} \epsilon=t$ and solving for the optimal $t$.
\end{proof}
\begin{remark}
    We provide an example application of the lemma. We show that the multiplicative weights update (MWU) algorithm with reward in $[0, 1]$ satisfies the condition. Let $V_1, \dotsc, V_k$ be the current cumulative reward for actions $1, \dotsc, k$. Now, the maximum change in the probability of playing action 1 occurs when the reward of action 1 is 1 and those for the other actions are 0. We bound the change as follows:
    \begin{align*}
        &\frac{(1+\eps)^{V_1+1}}{(1+\eps)^{V_1+1}+(1+\eps)^{V_2}+\dotsb+(1+\eps)^{V_k}} - \frac{(1+\eps)^{V_1}}{(1+\eps)^{V_1}+(1+\eps)^{V_2}+\dotsb+(1+\eps)^{V_k}}
        \\\leq&\frac{(1+\eps)^{V_1+1}}{(1+\eps)^{V_1+1}+(1+\eps)^{V_2}+\dotsb+(1+\eps)^{V_k}} - \frac{(1+\eps)^{V_1}}{(1+\eps)^{V_1+1}+(1+\eps)^{V_2}+\dotsb+(1+\eps)^{V_k}}
        \\=&\frac{(1+\eps)^{V_1+1}-(1+\eps)^{V_1}}{(1+\eps)^{V_1+1}+(1+\eps)^{V_2}+\dotsb+(1+\eps)^{V_k}}
        \leq\frac{(1+\eps)^{V_1+1}-(1+\eps)^{V_1}}{(1+\eps)^{V_1+1}+k-1}
        \\=&\frac{\eps}{1+\eps+\frac{k-1}{(1+\eps)^{V_1}}} \leq\frac{\eps}{\eps+1}\leq \eps.
    \end{align*}
    Therefore, the lemma can be used with MWU sellers (this includes a lot of MWU-style algorithms, such as EXP3).
\end{remark}

\subsection{Proof of \texorpdfstring{\Cref{cor:aggregate_sample_complexity}}{Corollary 4.4}}
The new algorithm is the same as the algorithm mentioned in \Cref{sec:general_algo} with the estimated distribution using $\Cref{lem:aggregate_possible}$, except that at the end the threshold $\delta^T$ to add is
\[ k\frac{\pmax \rho'}{\pimin}\left(\frac
{1}{\pimin-\rho'}+1\right) + \sqrt{\log\frac{8k^2}{\delta}\cdot 2 \left(\frac
{1}{\pimin}+1 \right)^2\pmax^2\cdot\frac
{1}{T}}, \]
where $\rho' = \sqrt[3]{T^{-\gamma} \log \frac{8Tk^3}{\delta}}$.

\paragraph{Step 1}
To prove \Cref{cor:aggregate_sample_complexity}, we first prove a variant of \Cref{lem:deviation-bound-of-regret-estimator}.

Let $\breve{x}^t(p)$ defined to be $\hat{x}^t(p)$ with $\pi^t$ is substituted with $\hat{\pi}^t$ estimated according \Cref{lem:aggregate_possible}.
Let
\[
    \breve{r}_{p,q}^t = \frac{1}{T}\left( \hat{\pi}^t(p)\left[(q-c)\breve{x}^t(q)-(p-c)\breve{x}^t(p)\right]\right), \widetilde{r}_{p,q}^t = \frac{1}{T}\left( \pi^t(p)\left[(q-c)\hat{x}^t(q)-(p-c)\hat{x}^t(p)\right]\right),
    \]
    and $\rho = \sqrt[3]{4 T^{-\gamma} \log \frac{2Tk}{\delta}}$ be the bound on the $\ell_{\infty}$ distance between the estimated $\hat{\pi}^t$ and $\pi^t$. 
    With probability $1-\delta$, we have
    \begin{align*}
        |\breve{r}_{p,q}^t - \widetilde{r}_{p,q}^t| &= \frac{1}{T}\left|(q-c)[\hat{\pi}^t(p)\breve{x}^t(q)-\pi^t(p)\hat{x}^t(q)]-(p-c)[\hat{\pi}^t(p)\breve{x}^t(p)-\pi^t(p)\hat{x}^t(p)]\right| \\
        &\leq \frac{1}{T}\left( \left|(q-c)[\hat{\pi}^t(p)\breve{x}^t(q)-\pi^t(p)\hat{x}^t(q)]\right| + \left|(p-c)[\hat{\pi}^t(p)\breve{x}^t(p)-\pi^t(p)\hat{x}^t(p)]\right|\right) \\
        & \leq \frac{1}{T}\pmax \left(\left|\hat{\pi}^t(p)\breve{x}^t(q)-\pi^t(p)\hat{x}^t(q)\right| + \left|\hat{\pi}^t(p)\breve{x}^t(p)-\pi^t(p)\hat{x}^t(p)\right| \right).
    \end{align*}

For the first term in the parenthesis we have
\begin{align*}
|\hat{\pi}^t(p)\breve{x}^t(q)-\pi^t(p)\hat{x}^t(q)| \leq& |\hat{\pi}^t(p)\breve{x}^t(q)-\hat{\pi}^t(p)\hat{x}^t(q)+\hat{\pi}^t(p)\hat{x}^t(q)-\pi^t(p)\hat{x}^t(q)| \\
\leq & |\hat{\pi}^t(p)\breve{x}^t(q)-\hat{\pi}^t(p)\hat{x}^t(q)| + |\hat{\pi}^t(p)\hat{x}^t(q)-\pi^t(p)\hat{x}^t(q)| \\
\leq & |\hat{\pi}^t(p)-\pi^t(p)|\hat{x}^t(q) + \hat{\pi}^t(p)|\breve{x}^t(q)-\hat{x}^t(q)|.
\end{align*}
Since we assume that the estimation in \Cref{lem:aggregate_possible} succeeded, we have $|\hat{\pi}^t(p)-\pi^t(p)|\leq \rho$. By assumption that $\pi^t(q)\geq \pimin$, we have $\hat{x}^t(q) \leq x^t(q)/\pi^t(q) \leq 1/\pimin$.  Therefore
\[
|\hat{\pi}^t(p)-\pi^t(p)|\hat{x}^t(q) \leq \frac{\rho}{\pimin}.
\]

When $T$ is large enough so that $\pimin > \rho$ holds, $\pi^t$ and $\hat{\pi}^t$ will have the same support with high probability. From now let us assume it is the case (and add this requirement on $T$ in the end as well). By definition of $\breve{x}^t$ and $\hat{x}^t$, for $q' = \max\{r\leq q : \hat{\pi}^t(r) > 0\},q'' = \max\{r\leq q:\pi^t(r) > 0\}$, we have 
\[
  \breve{x}^t(q) - \hat{x}^t(q) = \frac{x^t(q')}{\hat{\pi}^t(q')} - \frac{x^t(q'')}{\pi^t(q'')}.
\]
As $\hat{\pi}^t$ and $\pi^t$ have the same support, we know that
$q' = q''$. Therefore
\[
\breve{x}^t(q) - \hat{x}^t(q) \leq x^t(q') \left|\frac{\pi^t(q')-\hat{\pi}^t(q')}{\hat{\pi}^t(q')\pi^t(q')}\right| \leq 1 \cdot \frac{\rho}{(\pimin-\rho)\pimin }
\]
as $x^t(q')\leq 1$, $|\pi^t(q')-\hat{\pi}^t(q')|\leq \rho$, $\pi^t(q')\geq \pimin$, $\hat{\pi}^t(q') \geq \pi^t(q')-\rho$, and by the construction of $T$ that we have $\pimin > \rho$.

For the second term in the parenthesis, by definition of $\breve{x}^t,\hat{x}^t$ as well as the assumption that $\hat{\pi}^t$ and $\pi^t$ have the same support we have
\[
\left|\hat{\pi}^t(p)\breve{x}^t(p)-\pi^t(p)\hat{x}^t(p)\right| = 0. 
\]

Therefore, combining the two terms, with $1-\delta$ probability
\[
|\breve{r}_{p,q}^t - \widetilde{r}_{p,q}^t| \leq \frac{1}{T}\frac{\pmax \rho}{\pimin}\left(\frac
{1}{\pimin-\rho}+1\right).
\]

It follows that with probability at $1 - \delta$
\[
\AggEr(c,\allocseq{x}) - \EstEr(c,\allocseq{x}) \leq \frac{\pmax \rho}{\pimin}\left(\frac
{1}{\pimin-\rho}+1\right).
\]
Then by the triangle inequality and \Cref{lem:deviation-bound-of-regret-estimator}, we have, with probability at least $1-2 \delta$,

\[\AggEr(c,\allocseq{x}) - \WstEr(c,\allocseq{x}) \leq \frac{\pmax \rho}{\pimin}\left(\frac
{1}{\pimin-\rho}+1\right) + \sqrt{\log\frac{2}{\delta}\cdot 2 \left(\frac
{1}{\pimin}+1 \right)^2\pmax^2\cdot\frac
{1}{T}}.
\]
For the sake of brevity from now on we let
\[ f(T, \delta) = \frac{\pmax \rho}{\pimin}\left(\frac
{1}{\pimin-\rho}+1\right) + \sqrt{\log\frac{2}{\delta}\cdot 2 \left(\frac
{1}{\pimin}+1 \right)^2\pmax^2\cdot\frac
{1}{T}}. \]
Observe that $f(T, \delta)$ is weakly decreasing with $\delta$.

\paragraph{Step 2}
Next, we prove the first point of \Cref{cor:aggregate_sample_complexity}.

We claim that for any $c \in \costrange$
\[
\Pr \left[\AggEr(c,\allocseq{x})-\WstEr(c,\allocseq{x}) \geq kf(T,\delta) \right] \leq 2k^2\delta,
\]
which implies
\[
\Pr \left[\AggEr(c,\allocseq{x})-\WstEr(c,\allocseq{x}) \geq kf\left(T,\frac{\delta}{2k^2} \right) \right] \leq \delta.
\]
Monotonicity of $f(\delta, T)$ in $\delta$ further implies
\[
\Pr \left[\AggEr(c,\allocseq{x})-\WstEr(c,\allocseq{x}) \geq kf\left(T,\frac{\delta}{4k^2} \right) \right] \leq \delta.
\]

In fact,
\[
        \Pr\left[\AggEr(c, \allocseq{x}) - \WstEr(c, \allocseq{x}) \geq kf(T,\delta)\right] = \Pr\left[\sum_p \max_q \AggEr_{p, q}(c,\allocseq{x}) - \sum_p \max_q \WstEr_{p, q}(c,\allocseq{x})\geq kf(T,\delta)\right],
    \]
    and
     \begin{align*}
     &\Pr\left[\sum_p \max_q \AggEr_{p, q}(c,\allocseq{x}) - \sum_p \max_q \WstEr_{p, q}(c,\allocseq{x}) \geq kf(T,\delta)\right] 
     \\\leq &\Pr\left[\exists p \in \priceset, \exists p'\in \priceset,\AggEr_{p,p'}(c,\allocseq{x}) - \WstEr_{p,p'}(c,\allocseq{x}) \geq \frac{kf(T,\delta)}{|\priceset|} \right]
     \\\leq & 2k^2\delta &&      \mbox{(union bound)}.
    \end{align*}

Recall the choice of parameters: $\delta^T=kf\left(T, \frac{\delta}{4k^2} \right)$ and $T$ satisfies that $kf\left(T, \frac{\delta}{4k^2} \right) \leq \frac{r}{2}$. Suppose the seller satisfies $\TrueEr(c_0,\allocseq{x}) \leq r$, the probability that she fails the test
\begin{align*}
       &\Pr[\AggEr(\estplcost, \allocseq{x}) +\delta^T \geq 2r] \\ \leq 
       & \Pr[\AggEr(\estplcost, \allocseq{x}) +\delta^T \geq r + 2\delta^T] \\
       \leq &\Pr[\AggEr(\estplcost, \allocseq{x}) +\delta^T \geq \WstEr(c_0,\allocseq{x}) + 2\delta^T]
       \\
           \leq
           & \Pr[\AggEr(\estplcost, \allocseq{x}) - \WstEr(c_0,\allocseq{x}) \geq \delta^T] \\
           \leq
           & \Pr[\AggEr(c_0, \allocseq{x}) - \WstEr(c_0,\allocseq{x}) \geq \delta^T] && \text{(since $\AggEr(\estplcost, \allocseq{x}) \leq \EstEr(c_0, \allocseq{x})$)} \\
           \leq & \delta.
       \end{align*}
and the seller passes with probability at least $1-\delta$.

\paragraph{Step 3}
Finally, we prove the first point of \Cref{cor:aggregate_sample_complexity}.
We claim for $\estplcost = \argmin_{c \in \costrange}\AggEr(c, \allocseq{x})$ that
\[
\Pr \left[\AggEr(\plcost,\allocseq{x})-\WstEr(\plcost,\allocseq{x}) \leq -kf\left(T,\frac{\delta}{4k^2} \right) \right] \leq \delta.
\]

In fact, note that

\begin{align*}
       &\Pr\left[\AggEr(\estplcost, \allocseq{x}) - \WstEr(\estplcost,\allocseq{x}) \leq -kf(T,\delta)\right]
       \\\leq 
       &\Pr\left[\exists c,\AggEr(c, \allocseq{x}) - \WstEr(c,\allocseq{x}) \leq -kf(T,\delta)\right]
       \\=
       &\Pr\left[\exists c,\sum_p \max_q \AggEr_{p, q}(c,\allocseq{x}) - \sum_p \max_q \WstEr_{p, q}(c,\allocseq{x}) \leq -kf(T,\delta)\right]
       \\\leq
       &\Pr\left[\exists c,\exists p \in \priceset, \exists p'\in \priceset,\AggEr_{p,p'}(c,\allocseq{x}) - \WstEr_{p,p'}(c,\allocseq{x}) \leq -\frac{kf(T,\delta)}{|\priceset|} \right].
       \end{align*}
       Taking union bound over $p \in \priceset$ and $q' \in \priceset$, we have
        \begin{align*}   
        &\Pr\left[\exists c,\exists p \in \priceset, \exists q' \in \priceset, \AggEr_{p, p'}(c,\allocseq{x}) -  \WstEr_{p, p'}(c,\allocseq{x}) \leq -\frac{kf(T,\delta)}{|\priceset|}\right] 
        \\\leq
        & k^2\Pr\left[\exists c, \AggEr_{p, p'}(c,\allocseq{x}) -  \WstEr_{p, q'}(c,\allocseq{x}) \leq -\frac{kf(T,\delta)}{|\priceset|}\right].
       \end{align*}
       Observe that $\AggEr_{p, p'}(c,\allocseq{x}) -  \WstEr_{p, p'}(c,\allocseq{x})$ is linear in $c$ hence, when $c\in \costrange$,
       \begin{align*}
       &\Pr\left[\exists c, \AggEr_{p,p'}(c,\allocseq{x}) -  \WstEr_{p, p'}(c,\allocseq{x}) \leq -\frac{kf(T,\delta)}{|\priceset|}\right]\\\leq 
       &\Pr\left[\AggEr_{p, q'}(\ubar{c},\allocseq{x}) -  \WstEr_{p, q'}(\ubar{c},\allocseq{x}) \leq -\frac{kf(T,\delta)}{|\priceset|} \cup \AggEr_{p, q'}(\bar{c},\allocseq{x}) -  \WstEr_{p, q'}(\bar{c},\allocseq{x}) \leq -\frac{kf(T,\delta)}{|\priceset|}\right]
       \\\leq
       & \Pr\left[\AggEr_{p, q'}(\ubar{c},\allocseq{x}) -  \WstEr_{p, q'}(\ubar{c},\allocseq{x}) \leq - f(T,\delta)\right]+ \Pr\left[\AggEr_{p, q'}(\bar{c},\allocseq{x}) -  \WstEr_{p, q'}(\bar{c},\allocseq{x}) \leq -f(T,\delta)\right]\\
       \leq& 4\delta k^2.
       \end{align*}

       Recall the choice of parameters: $\delta^T=kf\left(T, \frac{\delta}{4k^2} \right)$ and $T$ satisfies that $kf\left(T, \frac{\delta}{4k^2} \right) \leq \frac{r}{2}$.
       Plugging in the bound of $T$ and $\delta^T$, we have 
       $\Pr\left[\AggEr(\estplcost, \allocseq{x}) - \WstEr(\estplcost,\allocseq{x}) \leq -\delta^T\right] \leq \delta$. 
       
       Therefore, when the seller's regret $\WstEr(\plcost,\allocseq{x}) \geq 2r$, by definition of $\plcost$ and $\estplcost$, we have
       \begin{align*}&\Pr\left[\AggEr(\estplcost, \allocseq{x}) +\delta^T \leq 2r\right] \\ \leq  &\Pr\left[\AggEr(\estplcost, \allocseq{x}) - 2r \leq -\delta^T\right] \\
       \leq
    &\Pr\left[\AggEr(\estplcost, \allocseq{x}) - \WstEr(\plcost,\allocseq{x}) \leq -\delta^T\right] \\ 
    \leq
    &\Pr\left[\AggEr(\estplcost, \allocseq{x}) - \WstEr(\estplcost,\allocseq{x}) \leq -\delta^T\right] && \text{(since $\WstEr(\plcost, \allocseq{x}) \leq \WstEr(\estplcost, \allocseq{x})$)} \\ 
       \leq &\delta.
    \end{align*}
       Hence the seller fails with probability at least $1-\delta$.

\paragraph{Step 4}
Finally, we do the algebra to verify our bound on $T$ indeed satisfies that $kf \left(T, \frac{\delta}{4k^2}\right) \leq \frac{r}{2}$ and $\pimin>\rho$. Recall that
\[ T \geq \max \left\{t_0, \left(\frac{4(8\pmax k +r\pimin)^3}{r^3\pimin^6}\right)^{2/\gamma}, \frac{16k^2}{r^2}\log \frac{8k^2}{\delta}\cdot\left(\frac{1}{\pimin}+1\right)^2\cdot \pmax^2, \left(\frac{4}{\pimin^3} \right)^{2/\gamma} \right\}+1, \]
where $t_0 = \sup \left\{t \in \mathbb R:t^\frac{\gamma}{2} \leq \log \frac{8tk^3}{\delta} \right\}$, and
\[ f \left(T, \frac{\delta}{4k^2} \right) = \frac{\pmax \rho}{\pimin}\left(\frac
{1}{\pimin-\rho}+1\right) + \sqrt{\log\frac{8k^2}{\delta}\cdot 2 \left(\frac
{1}{\pimin}+1 \right)^2\pmax^2\cdot\frac
{1}{T}}. \]

We show that in this case
\[ k\frac{\pmax \rho}{\pimin}\left(\frac
{1}{\pimin-\rho}+1\right) \leq \frac{r}{4}, \quad \text{and} \quad k\sqrt{\log\frac{8k^2}{\delta}\cdot 2 \left(\frac
{1}{\pimin}+1 \right)^2\pmax^2\cdot\frac
{1}{T}} \leq \frac{r}{4}. \]

In fact, since $T \geq t_0+1$, we have $T^\frac{\gamma}{2}>\log \frac{8Tk^3}{\delta}$, hence
\[ \rho \leq \sqrt[3]{\frac{4}{T^{\gamma/2}}} \leq \frac{r \pimin^2}{8 \pmax k+r \pimin}. \]
It follows that
\[ k\frac{\pmax \rho}{\pimin}\left(\frac
{1}{\pimin-\rho}+1\right) \leq k \frac{\pmax \rho}{\pimin} \cdot \frac{2}{\pimin-\rho} \leq k\frac{\pmax \frac{r \pimin^2}{8 \pmax k+r \pimin}}{\pimin} \cdot \frac{2}{\pimin-\frac{r \pimin^2}{8 \pmax k+r \pimin}} = k\frac{\pmax \frac{r \pimin^2}{8 \pmax k+r \pimin}}{\pimin} \cdot \frac{2}{\frac{8 \pmax k \pimin}{8 \pmax k +r \pimin}} = \frac{r}{4}. \]

The verification of the second $\leq \frac{r}{4}$ follows immediately using $T \geq \frac{16k^2}{r^2}\log \frac{8k^2}{\delta}\cdot\left(\frac{1}{\pimin}+1\right)^2\cdot \pmax^2$. The verification that $\pimin>\rho$ follows similarly from $T \geq \left(\frac{4}{\pimin^3} \right)^{2/\gamma}+1$ and $T \geq t_0+1$.

This concludes the proof.

\subsection{Proof of \texorpdfstring{\Cref{thm:manipulate}}{Theorem 5.2}}
\begin{proof}
Fix $\gamma=o(1)$. Let $\eps = \sqrt{\gamma}$. To prove the theorem we first provide the construction.
\begin{example}
    There are $(1+1.1)T$ rounds of interaction. The buyer's valuation $(v_1^t, v_2^t)$ is supported on $\{0, 1, 2, 3\} \times \{0, 1, 2, 3\}$ and the two sellers can post any price $p_i^t \in \mathcal P = \{0, 1, 2, 3\}$. Both sellers have cost $c_1=c_2=0$. The joint distribution of $(v_1^t, v_2^t)$ is shown in \Cref{table:no-external-regret-collusive-example} and
    \begin{table}[htbp]
        \renewcommand\arraystretch{1.25}
        \centering
        \begin{tabular}{|c|c|c|c|c|}
            \hline
            $v_1/v_2$ & 0 & 1 & 2 & 3 \\ \hline
            0 & 0 & 0 & 0 & $\frac{67}{600} + \frac{1}{3} \eps$ \\ \hline
            1 & 0 & 0 & 0 & $\frac{1}{30} - \frac{4}{3} \eps$ \\ \hline
            2 & 0 & 0 & 0 & $\frac{1}{100} + \eps$ \\ \hline
            3 & $\frac{1}{40}$ & $\frac{9}{25}$ & 0 & $\frac{23}{50}$ \\ \hline
        \end{tabular}
        \caption{No-best-in-hindsight-regret playing is not enough: value distribution}
        \label{table:no-external-regret-collusive-example}
    \end{table}
    is i.i.d. across rounds. We also assume that the buyer breaks ties randomly and she chooses to buy if buying gets utility 0.
\end{example}

We first note that with such a valuation, the buyer never chooses to buy nothing because either $v_1^t=3$ or $v_2^t=3$ with probability 1. It follows that
\begin{claim}
    Given prices $(p_1, p_2)$, the buyer buys good 1 if and only if $v_1-v_2>p_1-p_2$, buys good 2 if and only if $v_1-v_2<p_1-p_2$, and chooses randomly between seller 1 and 2 if $v_1-v_2=p_1-p_2$.
\end{claim}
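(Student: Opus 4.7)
The claim is essentially a restatement of utility-maximizing buyer behavior specialized to this two-good setting, so my plan is to reduce it directly to comparing the buyer's surpluses. I will assume a standard unit-demand buyer whose utility from purchasing good $i$ at posted price $p_i$ is $v_i - p_i$ and whose utility from not buying is $0$. The stated tie-breaking rule (buy when indifferent with outside option, and randomize between sellers on a tie between them) together with the observation immediately preceding the claim (that either $v_1^t = 3$ or $v_2^t = 3$ almost surely on the support given in the table) lets me dispose of the outside option, after which the claim becomes a one-line inequality manipulation.

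First I would argue that the no-purchase option is never strictly preferred. Since every atom of the distribution in the table has $v_1 = 3$ or $v_2 = 3$, and since all posted prices lie in $\{0,1,2,3\} \subseteq [0, 3]$, at least one of $v_1 - p_1$ and $v_2 - p_2$ is $\geq 0$. Combined with the tie-breaking convention that the buyer purchases whenever the purchase utility equals the outside option of $0$, this means the buyer always buys from seller $1$ or seller $2$.

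Next I would handle the choice between the two sellers. Among the two purchase options, the rational buyer picks the one with higher surplus $v_i - p_i$, so she buys good $1$ exactly when $v_1 - p_1 > v_2 - p_2$, which rearranges to $v_1 - v_2 > p_1 - p_2$; symmetrically she buys good $2$ when $v_1 - v_2 < p_1 - p_2$. When $v_1 - p_1 = v_2 - p_2$, i.e.\ $v_1 - v_2 = p_1 - p_2$, the random tie-breaking rule gives each seller equal probability. This exhausts the three cases in the claim.

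The argument is essentially definitional; the only subtlety is verifying the outside option can be ignored, which is handled entirely by the structural property of the support in \Cref{table:no-external-regret-collusive-example}. I therefore expect no real obstacle, and the proof should read as a short sequence of equivalences after invoking the tie-breaking convention and the support observation.
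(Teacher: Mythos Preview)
Your proposal is correct and follows essentially the same approach as the paper: use the support observation that $v_1=3$ or $v_2=3$ (together with $p_i\le 3$ and the tie-breaking convention) to rule out the outside option, then reduce the choice between sellers to comparing $v_1-p_1$ with $v_2-p_2$. The paper presents this more tersely via a short contradiction argument for the good-1 case, but the underlying reasoning is identical to yours.
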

\begin{proof}
    If the buyer buys good 1 then $v_1-v_2>p_1-p_2$ is necessary. If $v_1-v_2>p_1-p_2$ but she does not buy good 1, then $0>v_1-p_1>v_2-p_2$ so she buys nothing. But by construction, the buyer never does this.
\end{proof}
The claim enables us to write the demand function only using the distribution of $v_1-v_2$:
\begin{equation}\label{eqn:demand-manipulation}
    x_1(p_1, p_2) = \Pr[v_1-v_2>p_1-p_2] + \frac{1}{2} \Pr[v_1=v_2], \quad x_2(p_1, p_2)=1-x_1(p_1, p_2).
\end{equation}
Using \Cref{eqn:demand-manipulation} we construct the ex-ante payoff matrix in \Cref{table:no-external-regret-collusive-example-payoff} (note that playing price 0 is a dominated action so we omit it here).
\begin{table}[htbp]
    \renewcommand\arraystretch{1.25}
    \centering
    \begin{tabular}{|c|c|c|c|}
        \hline
        $p_1/p_2$ & 1 & 2 & 3 \\ \hline
        1 & $0.615, 0.385$ & $0.85 + \frac{\eps}{2}, 0.3 - \eps$ & $\frac{523}{600} + \frac{\eps}{3}, 0.385 - \eps$ \\ \hline
        2 & $0.77, 0.615$ & $1.23, 0.77$ & $1.7 + \eps, 0.45 - \frac{3 \eps}{2}$ \\ \hline
        3 & $0.615, 0.795$ & $1.155, 1.23$ & $1.845, 1.155$ \\ \hline
    \end{tabular}
    \caption{No-best-in-hindsight-regret playing is not enough: payoff matrix}
    \label{table:no-external-regret-collusive-example-payoff}
\end{table}
The highest payoff correlated equilibrium of this game is a pure NE where they play $(p_1, p_2)=(2, 2)$. The equilibrium payoff is $(1 + 1.1)T \cdot (1.23, 0.77)=(2.583T, 1.617T)$.

We claim there exists a manipulation such that the sellers play $(p_1, p_2)=(1, 1)$ in each round with high probability for $T-o(T)$ rounds, and then switch to collude by playing $(p_1, p_2)=(3, 3)$ in each round with high probability for $1.1T-o(T)$ rounds.

We first assume the claim is true. Then:
\begin{enumerate}
    \item Since $1.845>1.23, 1.155>0.77$, for $\Omega(T)$ rounds, both play $p_1>p_1^e, p_2>p_2^e$ in each round with high probability. This shows the first point of the theorem.
    \item By linearity of expectation, the total expected payoff is now $(T-o(T)) \cdot (0.615+1.1 \times 1.845, 0.385+1.1 \times 1.155)=(2.6445T-o(T), 1.6555T-o(T))>(2.583T, 1.617T)$, which is higher than the equilibrium payoff by $\Omega(T)$. This shows the second point of the theorem.
    \item Consider seller 1's best fixed action in hindsight:
    \begin{enumerate}
        \item If she plays price 1, the expected payoff is $(T-o(T)) \cdot 0.615+(1.1T-o(T)) \cdot (\frac{523}{600} + \frac{\eps}{3})<1.574T-o(T)$.
        \item If she plays price 2, the expected payoff is $(T-o(T)) \cdot 0.77+(1.1T-o(T)) \cdot (1.7 + \eps)=2.64T-o(T)$.
        \item If she plays price 3, the expected payoff is $(T-o(T)) \cdot 0.615+(1.1T-o(T)) \cdot 1.845=2.6445T-o(T)$.
    \end{enumerate}
    The best-in-hindsight price is 3 with an expected payoff of $2.6445T-o(T)$. But we just showed seller 1's total expected payoff in the manipulation is also $2.6445T-o(T)$, thus she has vanishing regret $o(T)$.
    
    Consider seller 2's best fixed action in hindsight:
    \begin{enumerate}
        \item If she plays price 1, the expected payoff is $(T-o(T)) \cdot 0.385+(1.1T-o(T)) \cdot 0.795<1.26T-o(T)$.
        \item If she plays price 2, the expected payoff is $(T-o(T)) \cdot (0.3 - \eps)+(1.1T-o(T)) \cdot 1.23=1.653T-o(T)$.
        \item If she plays price 3, the expected payoff is $(T-o(T)) \cdot (0.385 - \eps)+(1.1T-o(T)) \cdot 1.155=1.6555T-o(T)$.
    \end{enumerate}
    The best-in-hindsight price is 3 with an expected payoff of $1.6555T-o(T)$. But we just showed seller 2's total expected payoff in the manipulation is also $1.6555T-o(T)$, thus she has vanishing regret $o(T)$.
\end{enumerate}
Note that seller 1 has non-vanishing calibrated regret because the best response to price 1 is price 2, and seller 2 has non-vanishing calibrated regret because the best response to price 3 is price 2.

Next, we show the claim: How to manipulate a $\gamma$-mean-based seller 2 into a collusion under this setting.
    
Seller 1 manipulates as follows: She first plays $p_1=1$ for $T$ rounds, and then switches to playing $p_1=3$ for the remaining $1.1T$ rounds.

The following claim follows the definition of $\gamma$-mean-based strategy.
\begin{claim}
    With a $\gamma$-mean-based algorithm:
    \begin{enumerate}
        \item For each $T \geq t \geq O(\eps T)$, seller 2 posts $p_2=1$ in round $t$ w.p. at least $1 - \gamma$.
        \item For each $T+O(\eps T) \leq t \leq T+1.1T$, seller 2 posts $p_2=3$ in round $t$ w.p. at least $1 - \gamma$.
    \end{enumerate}
\end{claim}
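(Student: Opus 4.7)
The plan is to track seller 2's counterfactual cumulative utilities $\sigma_{p, t}$ for each $p \in \{0, 1, 2, 3\}$, show that after a transient of length $O(\eps T)$ exactly one price dominates the others by more than $\gamma T$, and then invoke the $\gamma$-mean-based property to force seller 2 to play that dominating price with probability at least $1 - O(\gamma)$. Because seller 1's prices are deterministic within each phase and the buyer's valuations are i.i.d.\ across rounds, the expectation $\bar{\sigma}_{p, t} := \mathbb{E}[\sigma_{p, t}]$ is simply $t$ times the corresponding entry of seller 2's column of \Cref{table:no-external-regret-collusive-example-payoff} in the active row.

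In phase 1 ($1 \leq t \leq T$, seller 1 plays $p_1 = 1$), the $p_1 = 1$ row yields $\bar{\sigma}_{1, t} = 0.385 t$, $\bar{\sigma}_{2, t} = (0.3 - \eps) t$, $\bar{\sigma}_{3, t} = (0.385 - \eps) t$, and $\bar{\sigma}_{0, t} = 0$. The slowest-growing relevant gap is $\bar{\sigma}_{1, t} - \bar{\sigma}_{3, t} = \eps t$, which already exceeds $\gamma T$ once $t \geq \gamma T / \eps = \eps T$ (using $\gamma = \eps^2$); the gaps to $p = 0$ and $p = 2$ grow strictly faster. Hence for $t = \Omega(\eps T)$, $p_2 = 1$ beats every competitor by more than $\gamma T$ in expectation, and the $\gamma$-mean-based property gives $\Pr[p_2^t = p] \leq \gamma$ for each $p \neq 1$.

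In phase 2 ($T < t \leq 2.1 T$, seller 1 plays $p_1 = 3$), write $t = T + t'$ with $0 \leq t' \leq 1.1 T$. The $p_1 = 3$ row gives $\bar{\sigma}_{1, T + t'} = 0.385 T + 0.795 t'$, $\bar{\sigma}_{2, T + t'} = (0.3 - \eps) T + 1.23 t'$, and $\bar{\sigma}_{3, T + t'} = (0.385 - \eps) T + 1.155 t'$. Two gaps control the argument: $\bar{\sigma}_{3} - \bar{\sigma}_{1} = -\eps T + 0.36 t'$ exceeds $\gamma T$ as soon as $t' \geq (\eps + \gamma) T / 0.36 = O(\eps T)$, while $\bar{\sigma}_{3} - \bar{\sigma}_{2} = 0.085 T - 0.075 t' \geq 0.0025 T$ for every $t' \leq 1.1 T$, which exceeds $\gamma T$ once $\gamma$ is below the absolute constant $0.0025$. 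Consequently $p_2 = 3$ dominates every other price by more than $\gamma T$ in expectation throughout the second phase (after the initial $O(\eps T)$ rounds), and the $\gamma$-mean-based property again yields $\Pr[p_2^t = p] \leq \gamma$ for each $p \neq 3$.

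The only technical subtlety is the passage from the expected $\bar{\sigma}_{p, t}$ to the realized $\sigma_{p, t}$, which differ by a sum of per-round martingale increments bounded by $\pmax = 3$. Azuma--Hoeffding combined with a union bound over the $O(T)$ (price, round) pairs gives $|\sigma_{p, t} - \bar{\sigma}_{p, t}| = O(\sqrt{T \log T})$ uniformly with high probability, so as long as $T$ is large enough for this deviation to be $o(\gamma T)$ --- automatic in the asymptotic regime of the theorem provided $\gamma = \omega(\sqrt{\log T / T})$ --- the realized gaps inherit the same $\Omega(\gamma T)$ lower bound and each ``beats by more than $\gamma T$'' statement survives. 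I expect this book-keeping of the concentration error (so that the hidden constant in ``$O(\eps T)$'' absorbs both the deterministic threshold $\gamma T / \eps$ and the fluctuation $O(\sqrt{T \log T})$) to be the main, if routine, obstacle; the apparent gap between the stated $1 - \gamma$ and the $1 - 3\gamma$ produced above is purely a matter of rescaling $\gamma$ by the constant $3$ in the mean-based parameter.
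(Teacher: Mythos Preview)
Your proposal is correct and follows essentially the same approach as the paper: both track seller~2's cumulative rewards against the fixed sequence of seller~1's prices, identify the slowest-closing gap ($\eps t$ in phase~1, and the $0.085T - 0.075t'$ gap to price~2 in phase~2), and invoke the mean-based property once each gap exceeds $\gamma T$. The paper works directly with the expected per-round payoffs (treating them as the rewards), relegating the passage from expected to realized cumulative utilities to a remark after the proof; you instead make that Azuma--Hoeffding step explicit, which is a harmless addition. Your observation about $1-\gamma$ versus $1-3\gamma$ is also valid---the paper is simply absorbing the constant.
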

\begin{proof}
    We write the cumulative reward of playing prices 1, 2, and 3 as follows:
    \begin{align*}
        r_1(t) &= \begin{cases} 0.385t & t \leq T \\ 0.385T+0.795(t-T) & t \geq T \end{cases}, \\
        r_2(t) &= \begin{cases} (0.3 - \eps)t & t \leq T \\ (0.3 - \eps)T+1.23(t-T) & t \geq T \end{cases}, \\
        r_3(t) &= \begin{cases} (0.385 - \eps)t & t \leq T \\ (0.385 - \eps)T+1.155(t-T) & t \geq T \end{cases}.
    \end{align*}
    It follows that between $t = \eps(1+1.1)T$ and $t=T$ we have $r_1(t) \geq \max(r_2(t), r_3(t)) + \gamma T$. Just after $t=T$ price 1 has an advantage of $\eps T$ but price 3 quickly comes as the best choice after $T + \eps T/(1.155-0.795) \leq T+3 \eps T$. Price 3 still dominates until $t=0.085T/(1.23-1.155)<1.1T$ (this is when price 2 becomes the best). Therefore, for each $T \geq t \geq O(\sqrt \gamma T)$ seller 2 posts $p_2=1$ w.p. at least $1 - \gamma$, and for each $T+O(\eps T) \leq t \leq 1.1T+T$ seller 2 posts $p_2=3$ w.p. at least $1 - \gamma$.
\end{proof}
\begin{remark}
    Although the above argument suffices for our purposes, we remark that the claim still holds even if the sellers only get a realization of the buyer's decision each round (instead of getting the expected reward of her strategy) by using a concentration argument and the length of the time window where the sellers are supra-competitive will only suffer an $o(T)$ loss.
\end{remark}
This ends the proof.
\end{proof}

\begin{algorithm}[htb]
    \KwData{$\{p^t\}_{t=1}^T, \{x^t(p^t)\}_{t=1}^T, \{(\pi^t)\}_{t=1}^T$}
    \For{$t \in [T]$}{
        $C^t \ASGN \{p \in \priceset:\pi^t(p_j)>0\}$;
        
        \For{$p \in C^t$}{
            $\hat x^t(p) \ASGN
            \begin{cases}
                x^t(p)/\pi^t(p), & p=p^t, \\
                0, & \text{otherwise}
            \end{cases}
            $;
        }
        \For{$p \in \priceset$}{
            $\hat{h}^t(p) \ASGN \hat x^t(p')$ where $p' = \max\{r \leq p:r \in C^t\}$;
        }
    }
    Solve the following programming and let the solution be $\estplcost$, defined as \emph{estimated plausible cost}:
    \begin{equation*}
        \min_{c \in \costrange} \quad \EstEr(c)
    \end{equation*}
    where
    \begin{gather*}
        \EstEr(c) = \sum_p \max_q \EstEr_{p, q}(c), \\
        \EstEr_{p, q}(c) = \frac 1T \sum_{t=1}^T \pi^t(p)\left[(q-c)\hat h^t(q)-(p-c)\hat h^t(p)\right].
    \end{gather*}
    
    Let 
    \begin{equation*}
        \delta^T = \frac{k\pmax}{T}\sqrt{2\log\left(\frac{2k^2}{\alpha}\right)\cdot \sum_{s=1}^T\left(\frac{1}{\min_{p \in C^s}\pi^s(p)}+1\right)^2}.
    \end{equation*}
 
 \eIf{$\EstEr(\estplcost) + \delta^T \leq 2r$}
    {
       Output PASS;
    }
    {
       Output FAIL;
    }
    
    \caption{Auditing (non)-collusion via testing the pessimistic regret}
    \label{algo:detection}
\end{algorithm}

\end{document}